\newcommand{\calT}{\mathcal{T}}
\newcommand{\calP}{\mathcal{P}}
\newcommand{\bY}{\boldsymbol{Y}}
\newcommand{\llhd}{\ell}
\newcommand{\R}{\mathbb{R}}
\DeclareMathOperator{\EX}{\mathbb{E}}
\newtheorem{theorem}{Theorem}
\newtheorem{lemma}{Lemma}
\newtheorem{claim}{Claim}
\newtheorem{corollary}{Corollary}
\newtheorem{definition}{Definition}
\newtheorem{conjecture}{Conjecture}
\newtheorem{remark}{Remark}
\newtheorem{fact}{Fact}
\newtheorem{prop}{Proposition}
\newtheorem*{rep@theorem}{\rep@title}
\newcommand{\newreptheorem}[2]{%
\newenvironment{rep#1}[1]{%
 \def\rep@title{#2 \ref{##1}}%
 \begin{rep@theorem}}%
 {\end{rep@theorem}}}
\newcommand{\bX}{\boldsymbol{X}}
\newcommand{\bZ}{\boldsymbol{Z}}
\newcommand{\ignore}[1]{}
\newcommand{\eps}{\epsilon}
\newcommand{\veps}{\varepsilon}
\newcommand{\proj}{\mathrm{proj}}
\newcommand*{\vertbar}{\rule[-1ex]{0.5pt}{2.5ex}}
\newcommand\ip[2]{\langle #1,#2 \rangle}
\def\final{1}  
    \newcommand{\nnote}[1]{{\color{red}[{Ning: \bf #1}]\marginpar{\color{red}*}}}
    \newcommand{\enote}[1]{{\color{green}[{\small Elena: \bf #1}]\marginpar{\color{red}*}}}
    \newcommand{\knote}[1]{{\color{purple}[{Karl: \bf #1}]\marginpar{\color{red}*}}}
    \newcommand{\nnote}[1]{}
    \newcommand{\enote}[1]{}
    \newcommand{\knote}[1]{}
\title{Hardness of Maximum Likelihood Learning of DPPs\thanks{An extended abstract of this work, with
all technical proofs omitted due to space constraint, 
will appear in the Proceedings of the 35th Annual Conference on Learning Theory (COLT 2022).}}
\author{
Elena Grigorescu\thanks{Purdue University, {\tt elena-g@purdue.edu}}
\and Brendan Juba\thanks{Washington University in St.\ Louis, {\tt bjuba@wustl.edu}}
\and Karl Wimmer\thanks{Duquesne University, {\tt wimmerk@duq.edu}}
\and Ning Xie\thanks{Florida International University, {\tt nxie@cis.fiu.edu}}
}
\date{}
\begin{document}

\pagenumbering{roman}

\maketitle
\begin{abstract}
Determinantal Point Processes (DPPs) are a widely used probabilistic model for negatively correlated sets. 
DPPs have been successfully employed in Machine Learning applications to select a diverse, yet representative subset of data. In these applications, the parameters of the DPP need to be fitted to match the data; typically, 
we seek a set of parameters that maximize the likelihood of the data. The algorithms used for this task to date either optimize over a limited family of DPPs, or use local improvement heuristics that do not provide theoretical guarantees of optimality.
   
It is natural to ask if there exist efficient algorithms for finding a maximum likelihood DPP model for a given data set.
In seminal work on DPPs in Machine Learning, Kulesza conjectured in his PhD Thesis (2011) that the problem is NP-complete. 
The lack of a formal proof  prompted Brunel, Moitra, Rigollet and Urschel (COLT 2017) to suggest that, 
in opposition to Kulesza's conjecture, there might exist a polynomial-time algorithm for computing a maximum-likelihood DPP. 
They also presented some preliminary evidence supporting a conjecture that they suggested might lead to such an algorithm.
   
In this work we prove Kulesza's conjecture. In fact, we prove the following stronger hardness of approximation result:
even computing a $\left(1-O(\frac{1}{\log^9{N}})\right)$-approximation to the maximum log-likelihood of a DPP 
on a ground set of $N$ elements is NP-complete.
At the same time, we also obtain the first polynomial-time algorithm that achieves a nontrivial worst-case approximation 
to the optimal log-likelihood: the approximation factor is $\frac{1}{(1+o(1))\log{m}}$ unconditionally 
(for data sets that consist of $m$ subsets), 
and can be improved to $1-\frac{1+o(1)}{\log N}$ if all $N$ elements appear in a $O(1/N)$-fraction of the subsets.
   
From a technical perspective, we reduce the problem of approximating the maximum log-likelihood of a DPP to solving a gap instance of a \textsc{$3$-Coloring} problem on a hypergraph. This hypergraph is based on the bounded-degree construction of Bogdanov, Obata, and Trevisan (2002), which we enhance using the strong expanders of Alon and Capalbo (2007). We demonstrate that if a rank-$3$ DPP achieves near-optimal log-likelihood, its marginal kernel must encode an almost perfect ``vector-coloring" of the hypergraph. Finally, we show that these continuous vectors can be decoded into a proper $3$-coloring after removing a small fraction of ``noisy" edges.
\end{abstract}

\newpage

\tableofcontents

\newpage
\pagenumbering{arabic}
\setcounter{page}{1}
\section{Introduction}

Determinantal Point Processes (DPPs) are a family of probability distributions on sets 
that feature repulsion among elements in the ground set. 
Roughly speaking, a DPP is a distribution over all $2^N$ subsets of $\{1,\ldots, N\}$ defined by 
a positive semidefinite (PSD) $N\times N$ matrix $K$ (called a \emph{marginal kernel}
or \emph{correlation kernel}) whose eigenvalues all lie in $[0, 1]$, 
such that, 
for any $S\subseteq \{1,\ldots, N\}$, random subsets $\bX$ drawn according to the distribution satisfy 
$\Pr[S \subseteq \bX] = \det(K_S)$, where 
$K_S$ is the principal submatrix of $K$ indexed by $S$.

DPPs were first proposed in quantum statistical physics to model fermion systems in
thermal equilibrium~\cite{Mac75},
but they also arise naturally in diverse fields such as random matrix theory, probability theory, number theory, 
random spanning trees and non-intersecting paths~\cite{dyson1962statistical,BP93,RS96,Sos00}.
After the seminal work of~\cite{KT12}, DPPs have attracted a flurry of attention from 
the machine learning community due to their computational tractability and 
excellent capability at producing diverse but representative subsets, 
and subsequently fast algorithms have been developed for 
sampling from DPPs~\cite{hough2006determinantal,kulesza2010structured,rebeschini2015fast,li2016nystrom,li2016fast,
anari2016monte,derezinski2019exact,launay2020exact}.
Furthermore, DPPs have since found a vast variety of applications throughout machine learning and data analysis, 
including text and image search, segmentation and summarization~\cite{lin2012learning,KT12,zou2012priors,
gillenwater2012near,gillenwater2012discovering,yao2016tweet, kulesza2011k,AFAT14, lee2016individualness, 
affandi2013nystrom,chao2015large, affandi2013approximate}, 
signal processing~\cite{xu2016scalable, krause2008near,guestrin2005near}, 
clustering~\cite{zou2012priors,kang2013fast,shah2013determinantal}, 
recommendation systems~\cite{zhou2010solving}, revenue maximization~\cite{dughmi2009revenue}, 
multi-agent reinforcement learning~\cite{osogami2019determinantal,yang2020multi}, 
modeling neural spikes~\cite{snoek2013determinantal}, 
sketching for linear regression~\cite{derezinski2018reverse,derezinski2020exact}, 
low-rank approximation~\cite{guruswami2012optimal}, and likely many more.

\paragraph{Maximum likelihood estimation.} 
Many of these applications require inferring a set of parameters for a DPP capturing a given data set. 
As a DPP specifies a probability distribution, in contrast to supervised learning problems, 
the quality of a DPP cannot be estimated by the ``error rate'' of the model's predictions. 
The standard approach to estimate a DPP from data is to find parameters that maximize the \emph{likelihood} of the 
given data set being produced by a sample from the DPP~\cite{KT12}, i.e., the probability density of the observed data 
in the joint distribution. When the samples are identically and independently chosen from the DPP, the likelihood is 
the product of the probability densities the DPP assigns to the sampled subsets.  The goal of the maximum likelihood 
estimator (MLE) method is to find a kernel matrix that maximizes the likelihood of the data set. 
\cite{BMRU17b} showed that the maximum likelihood estimate indeed converges to the true kernel at a polynomial rate. 
In general, maximizing the likelihood of a DPP gives rise to a non-convex optimization problem, 
and has been approached with heuristics such as expectation maximization~\cite{GKFT14}, 
fixed point algorithms~\cite{MS15}, and MCMC~\cite{AFAT14}. 
In the continuous case, the problem has been studied under strong parametric assumptions~\cite{LMR15}, 
or smoothness assumptions~\cite{Bar13}.

\subsection{Our results}
In spite of the wide applications of DPPs and the central role of the learning step, 
no efficient algorithms are known to find a maximum likelihood DPP. Instead,  
as mentioned above, two families of algorithms are known: 
one seeks to learn an optimal DPP within certain parameterized families of DPP structures~\cite{KT12,AFAT14,GPK16,MS16,
GPK17,UBMR17,DB18}, 
while the other invokes heuristics to maximize the likelihood in an unconstrained search over 
the parameter space~\cite{KT11,GKFT14,AFAT14,MS15}. 
Neither of these approaches provides any guarantees for how close the likelihood of the obtained parameters 
are to the maximum over all DPPs. 

Indeed, Kulesza~\cite{KT11, Kul12} conjectured over a decade ago that 
the problem of finding a  set of parameters is NP-hard, but indicated that he was unable to 
formally establish a reduction: his thesis includes a sketch of a reduction from {\sc Exact-3-cover} to 
a related problem\footnote{Technically, the reduction proposed by Kulesza 
targets a variant of the maximum-likelihood DPP learning problem
in which the instance specifies a set of positive-semidefinite matrices along with the data, 
and the objective is to find a DPP kernel in the cone of the given matrices that maximizes the likelihood.} 
with numerical evidence suggesting its correctness but without a formal proof. 
The subsequent literature adopted this belief, despite the lack of a solid theoretical foundation. 

In this work, we resolve this question by proving Kulesza's conjecture: 
computing maximum likelihood DPP kernels is indeed NP-hard. 
In fact, we establish a stronger, gapped hardness result: even approximating the maximum likelihood is NP-hard.

\begin{theorem}[Informal Statement of the Main Theorem]\label{thm:informal}
There is a ground set of size $N$ such that
it is NP-hard to $\left(1-O(\frac{1}{\log^9{N}})\right)$-approximate
the maximum DPP log-likelihood value of a sample set.
\end{theorem}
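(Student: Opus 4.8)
The plan is to recast maximum-likelihood DPP learning as a geometric vector-assignment problem and reduce a gap version of graph coloring to it. Writing a DPP kernel as a Gram matrix $L = V^\top V$ with columns $v_1,\dots,v_N \in \R^d$, the log-likelihood of a sample multiset $\{S_1,\dots,S_m\}$ is $\sum_{i=1}^m \log\det(L_{S_i}) - m\log\det(L+I) = 2\sum_{i=1}^m \log\mathrm{Vol}\bigl(\{v_j\}_{j\in S_i}\bigr) - m\log\det(L+I)$. Maximizing it amounts to assigning a vector to each ground-set element so that (a) the vectors indexed by every sampled set are as close to orthonormal as possible --- the local log-volume terms vanish exactly when they are orthonormal --- while (b) the global Gram matrix keeps $\log\det(L+I)$ small. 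These two goals pull against each other, and the rate at which one must trade fit for complexity is controlled by how well one can ``color'' the hypergraph whose hyperedges are the sampled sets; this is the vector-coloring instance referred to in the introduction. Since the objective combines the concave terms $\log\det(L_{S_i})$ with the convex term $-m\log\det(L+I)$, it is neither convex nor concave in $L$ --- in particular it is not an SDP --- which is exactly what makes the hardness plausible.

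For completeness I would start from the bounded-degree graph construction of Bogdanov--Obata--Trevisan, which supplies instances that are either properly colorable with few colors or far from it and carries the extra sparsity and regularity the reduction needs. From such a graph I build a DPP instance whose ground set has one element per vertex plus auxiliary gadget elements that pin down vector norms and the trace of $L$, and whose (bounded-size) samples encode both the graph's edges and the expander wiring described below. When the graph has a proper few-coloring, map each color class to a distinct (rescaled, gadget-padded) basis vector; then every sampled set is an orthonormal family, all local log-volume terms vanish, and the log-likelihood equals a prescribed value $V_{\mathrm{yes}}$ with $|V_{\mathrm{yes}}| = m\cdot\mathrm{poly}(\log N)$.

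The heart of the argument --- and the step I expect to be the main obstacle --- is soundness: when the graph is far from colorable, \emph{every} vector assignment (of any dimension, with arbitrary norms, not just the ones used above) must yield a log-likelihood whose magnitude exceeds $|V_{\mathrm{yes}}|$ by a multiplicative $1 + \Omega(1/\mathrm{polylog}\,N)$. The difficulty is precisely that the optimization ranges over the entire continuous PSD cone, so an adversarial kernel might try to ``fractionally'' satisfy many sampled sets at once or to glue together many locally good patches; in particular the base instance must be engineered so that not only the chromatic number but the relevant vector-coloring (SDP-type) quantity is large in the far case. I would proceed in two stages. First, using concavity of $\log\det$ on the relevant faces together with the gadget-enforced norm and trace constraints, I argue that any near-optimal kernel must, on each sampled set \emph{separately}, place its vectors near-orthonormally, since otherwise the loss in that set's log-volume term alone already overspends the budget. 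Second --- and this is where the strong expanders of Alon--Capalbo come in --- I use the expansion of the sample hypergraph to stitch these per-set near-orthogonality conditions into a single globally consistent near-orthonormal frame, which can then be \emph{rounded} to a genuine few-color coloring of the original graph; far-from-colorability forbids this and forces the claimed deficit. The expanders are what rule out incoherent patchwork solutions, and they also keep the instance bounded-degree and of polynomial size.

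It remains to do the gap bookkeeping: the completeness penalty $|V_{\mathrm{yes}}|$ is $m\cdot\mathrm{poly}(\log N)$, while the far case forces an \emph{extra} penalty that is only a $1/\mathrm{polylog}\,N$ fraction of it, so the optimal log-likelihood in the two cases is separated by a multiplicative $1 - \Omega(1/\mathrm{polylog}\,N)$, which is the desired inapproximability. Tracking the parameters through the three layered components --- the coloring gap of Bogdanov--Obata--Trevisan, the degree and size overhead of the Alon--Capalbo wiring, and the loss in rounding vectors back to colors --- together with the size blow-up from the $n$-vertex source graph to the $N$-element ground set, the polylogarithmic exponent works out to $9$, giving NP-hardness of $\bigl(1-O(1/\log^9 N)\bigr)$-approximation; the value $9$ reflects these parameters and is surely not optimized.
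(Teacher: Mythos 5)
Your high-level reduction chain matches the paper's (Max-3SAT $\to$ BOT graph with Alon--Capalbo very strong expanders $\to$ vector coloring of a $3$-uniform hypergraph $\to$ DPP learning), but the plan omits the two structural theorems that make the soundness argument actually close, and the first is a genuine obstacle. You propose that near-optimality forces each sampled triple to be near-orthonormal ``separately,'' and that expansion then stitches these local conditions into a globally consistent frame that can be rounded to a coloring. But as long as the Gram factor $Q$ may have arbitrary rank, making every triple exactly orthogonal is trivial (take all columns of $Q$ pairwise orthogonal), so per-triple near-orthogonality alone imposes no global structure, and the stitching step has nothing to latch onto. The paper's Theorem~\ref{thm:dim3} is precisely what closes this loophole: it shows that any kernel within $\delta$ of optimal can be projected to rank $3$ at a cost of only $C_k\delta^{1/4}$ in log-likelihood, via a nontrivial argument using negative association of conditional DPPs and a greedy reassignment of ``bad'' columns. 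Only after that projection does rounding vectors to colors even make sense, and the $\delta^{1/4}$ loss there, combined with the $1/\log^2 n$ angle budget needed by the robust-gadget decoding, is where most of the final $\log^9 N$ comes from. Your proposal does not identify this step, and without it your soundness stage~2 fails.

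The second gap is how the norms $\|q_i\|$ and the completeness value are pinned down. The paper proves (Theorem~\ref{thm:diagonal}) that some optimal \emph{marginal} kernel has $K_{ii}$ equal to the empirical frequency of element $i$ --- a structural fact about ML DPP kernels, obtained by a stationarity argument \`a la Brunel--Moitra--Rigollet--Urschel \emph{plus} a limit/perturbation argument for kernels with an eigenvalue equal to $1$. You instead propose to enforce norms and trace with auxiliary gadget elements; there are no such gadgets in the paper's reduction, and it is unclear how adding gadget elements would pin the diagonal of an optimal kernel to a prescribed value without perturbing the very objective you are comparing against. Relatedly, writing the likelihood through the $L$-ensemble kernel $L=V^\top V$ with normalizer $\log\det(L+I)$ silently assumes $L=K(I-K)^{-1}$ exists; here $D(\emptyset)=0$ because all samples have size $3$, the optimal marginal kernel has top eigenvalue exactly $1$, and no $L$-ensemble exists. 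The paper works directly with the marginal kernel via $\Pr[\bY=X]=|\det(K-I_{\bar X})|$ and handles the boundary of the spectrum through the perturbation argument above.
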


\begin{remark}\label{rem:log-likelihood}
Some comments on our (somewhat confusing) convention of approximation factors are in place. 
Since log likelihood functions are always negative real numbers and it is a bit
awkward to work with optimizing negative quantities, 
we therefore add a minus sign at the front of our definition of log likelihood functions. 
Consequently, we \emph{minimize} (negative) log likelihood functions instead of maximizing them.
On the other hand, as our hard learning instances are reduced from \textsc{Max-$3$SAT} and \textsc{$3$-Coloring},
to be consistent with hardness results in the literature on these problems, 
we use $\alpha$-approximation (where $0<\alpha <1$, for maximization problems) 
in the statements of our hardness and algorithmic results.
Note that here ``$\alpha$-approximation'' actually means that the (negative) log likelihood function (in our definition
and ought to be minimized) output by an algorithm
is at most $\frac{1}{\alpha}$ time the optimal log likelihood function.  
\end{remark}

Therefore, the difficulty of learning a DPP is not tied to any particular representation of the marginal kernel, 
as in fact estimating the maximum likelihood \emph{value} itself is NP-hard. 
Note, however, that many problems in learning are hard merely due to the difficulty of 
finding a specific representation~\cite{PV88}, which is not the case for our problem. 

The NP-hardness of maximum likelihood learning naturally raises the question of whether any nontrivial approximation 
is possible. We show that such an approximation is possible: we present a simple, polynomial-time algorithm 
obtaining a $\frac{1}{(1+o(1))\log{m}}$-approximation for a data set with $m$ subsets.

\begin{theorem}[Informal Statement of the Approximation Algorithm]
There is a polynomial-time approximation algorithm achieving the following:
on an input data set consisting of $m$ subsets over a ground set of size $N$,
it returns a kernel that $\frac{1}{(1+o(1))\log{m}}$-approximates the maximum log likelihood. 
Moreover, if every element in the ground set appears in at most $O(1/N)$-fraction of the subsets, 
the kernel achieves a $(1-\frac{1+o(1)}{\log{N}})$-approximation to the maximum log likelihood. 
\end{theorem}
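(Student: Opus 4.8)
The plan is to show that the best \emph{product distribution} --- which is trivially computable --- already meets both guarantees. For each element $i$ of the ground set put $a_i:=\#\{j: i\in X_j\}$, and let the algorithm output the diagonal marginal kernel $K:=\mathrm{diag}(a_1/m,\dots,a_N/m)$; its eigenvalues are the $a_i/m\in[0,1]$, so it is a valid DPP, namely the product distribution that includes each $i$ independently with probability $a_i/m$. Writing $\ell(L):=\sum_j\log\Pr_L[\bX=X_j]$ for the (negative) log-likelihood and $\ell^\ast:=\max_L\ell(L)$, a one-line calculation gives $-\ell(K)=m\sum_i H(a_i/m)$ with $H$ the binary entropy in nats; moreover $-\ell$ restricted to diagonal kernels splits as a sum of one-variable convex functions minimized at $q_i=a_i/m$, so $K$ maximizes $\ell$ among all diagonal kernels. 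Denote its value $\mathrm{OPT}_{\mathrm{prod}}:=m\sum_i H(a_i/m)$, the number the algorithm reports.

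The core of the analysis is a matching lower bound on the optimum $-\ell^\ast$. For any marginal kernel $L$ and any set $S$, the inclusions $\{\bX=S\}\subseteq\{S\subseteq\bX\}$ and $\{\bX=S\}\subseteq\{\bX\cap\bar S=\emptyset\}$, together with $\Pr_L[S\subseteq\bX]=\det L_S$, the inclusion--exclusion identity $\Pr_L[\bX\cap T=\emptyset]=\det\bigl((I-L)_T\bigr)$, and Hadamard's inequality applied to the PSD matrices $L$ and $I-L$, yield the two bounds
\[
\Pr_L[\bX=S]\le\det L_S\le\textstyle\prod_{i\in S}L_{ii}
\qquad\text{and}\qquad
\Pr_L[\bX=S]\le\det\bigl((I-L)_{\bar S}\bigr)\le\textstyle\prod_{i\notin S}(1-L_{ii}).
\]
Taking a $\theta$-weighted geometric mean of these ($\theta\in(0,1)$), taking logs, summing over the sample, and exchanging the order of summation, we get that \emph{for every} marginal kernel $L$,
\[
\ell(L)\ \le\ \theta\sum_i a_i\log L_{ii}\;+\;(1-\theta)\sum_i(m-a_i)\log(1-L_{ii}),
\]
whose right-hand side decouples across $i$ and is maximized over $L_{ii}\in[0,1]$ at $L_{ii}=\tfrac{\theta a_i}{\theta a_i+(1-\theta)(m-a_i)}$. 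This gives, for each $\theta$, an explicit upper bound on $\ell^\ast$ and hence a lower bound on $-\ell^\ast$ to compare against $\mathrm{OPT}_{\mathrm{prod}}$.

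Two choices of $\theta$ then finish the proof. With $\theta=\tfrac12$ the coordinate optima are $L_{ii}=a_i/m$ and the bound becomes $\ell^\ast\le-\tfrac12\,\mathrm{OPT}_{\mathrm{prod}}$, i.e.\ $-\ell^\ast\ge\tfrac12\,\mathrm{OPT}_{\mathrm{prod}}$; since $\tfrac12\ge\tfrac{1}{(1+o(1))\log m}$ asymptotically, the algorithm is a $\tfrac{1}{(1+o(1))\log m}$-approximation (indeed a constant-factor one). For the ``moreover'' statement, the hypothesis that every element occurs in an $O(1/N)$-fraction of the sets gives $a_i\le O(m/N)$, so $\log\tfrac{m}{a_i}\ge(1-o(1))\log N$ for every $i$ with $a_i>0$; choosing $\theta$ close to $1$, on the scale $1-\Theta(1/\log N)$, forces the coordinate optima $L_{ii}$ to be $o(1)$, and expanding the bound in this small-$L_{ii}$ regime --- where $H(a_i/m)=\tfrac{a_i}{m}\bigl(\log\tfrac{m}{a_i}+O(1)\bigr)$ --- shows $\ell^\ast\le\bigl(1-\tfrac{1+o(1)}{\log N}\bigr)\ell(K)$, equivalently that the algorithm is a $\bigl(1-\tfrac{1+o(1)}{\log N}\bigr)$-approximation.

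The step I expect to be the real obstacle is this last one: squeezing the low-frequency loss down to $1+o(1)$ (rather than, say, $\log\log N+O(1)$) calls for tuning $\theta$ delicately --- very likely choosing it per coordinate as a function of $a_i$ rather than a single global value --- and carrying lower-order terms through the entropy expansion, since a crude global choice of $\theta$ only yields a $\bigl(1-\widetilde O(1/\log N)\bigr)$-approximation. Everything else should be routine: the validity of the diagonal kernel, the two Hadamard bounds, the gap-probability identity $\Pr_L[\bX\cap T=\emptyset]=\det((I-L)_T)$, and the coordinate-wise maximization are all elementary.
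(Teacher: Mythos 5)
Your unconditional argument is correct and is in fact \emph{stronger} than what the paper proves. By also using the Hadamard bound through the complementary kernel, $\Pr_L[\bX=S]\le\det((I-L)_{\bar S})\le\prod_{i\notin S}(1-L_{ii})$, and taking the $\theta=\tfrac12$ geometric mean, you get the decoupled bound $\ell(L)\le\tfrac12\sum_i[a_i\log L_{ii}+(m-a_i)\log(1-L_{ii})]$, whose maximum over $L_{ii}$ is exactly $\tfrac12\ell(K)$; hence $\ell^\ast\le\tfrac12\ell(K)$, i.e.\ a constant-factor-$2$ guarantee, which is tight (take $N=2$, $m-1$ copies of $\{1\}$ and one of $\{2\}$). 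The paper's Section~\ref{sec:algorithm} only uses the first Hadamard bound together with Theorem~\ref{thm:diagonal}, and as a result its unconditional bound is the much weaker $1+(1+\tfrac1{m-1})\log m$. Your route is genuinely different and buys a better unconditional constant without invoking the structural Theorem~\ref{thm:diagonal} at all.

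The conditional bound, however, is where your proposal has a real gap, and I do not think the fix you suggest can work. Writing $p_i=a_i/m$, your $\theta$-weighted bound gives $\ell^\ast/m\le-\sum_i Z_iH(q_i)$ with $Z_i=\theta p_i+(1-\theta)(1-p_i)$ and $q_i=\theta p_i/Z_i$, and since $Z_iq_i=\theta p_i$, the $i$th term expands as $\theta p_i[\log(1/q_i)+1+o(1)]$ while $H(p_i)\approx p_i[\log(1/p_i)+1]$. With $p_i=\Theta(1/N)$ and $\theta=1-\eta$, the ratio of the $i$th summand to $H(p_i)$ is approximately $(1-\eta)\bigl(1+\tfrac{\log\eta}{\log N}\bigr)$: for constant $\eta$ the $(1-\eta)$ prefactor keeps you bounded away from $1$, while taking $\eta=c/\log N$ yields, after optimizing $c$ (best at $c=1$), a loss of $\frac{1+\log\log N}{\log N}$. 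That stubborn $\log\log N$ is intrinsic to the geometric-mean-of-Hadamard argument. Your suggestion to take $\theta$ per coordinate does not escape it: the geometric mean is taken \emph{per sample}, so the only admissible generalization is $\theta_j$ depending on the sample $X_j$, and even then the weight pattern one can realize on coordinate $i$ is constrained by the incidence structure; moreover the per-coordinate choice $\theta_i$ that makes $L_{ii}=p_i$ the coordinate optimum is exactly $\theta_i=1/2$, which returns you to the factor-$2$ bound. The paper closes this gap by a different mechanism entirely: Theorem~\ref{thm:diagonal} (Lemma~\ref{lem:BMRU_diagonals}, proved via the $L$-ensemble stationarity conditions and a perturbation/limit argument) pins the optimal kernel's diagonal to $a_i/m$, after which the single Hadamard bound $\Pr[\bX^\ast=X_j]\le\prod_{i\in X_j}(a_i/m)$ already gives $-\ell^\ast\ge-\sum_i a_i\log(a_i/m)$ and the ratio $1+f(a_{\max}/m)$ with $f(x)=\frac{(1-x)\log(1-x)}{x\log x}\le-1/\log x$. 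Without that diagonal constraint, the $\theta\to1$ endpoint of your family is vacuous ($L_{ii}\to1$ gives the trivial bound $0$), and the interior cannot reach $1-\frac{1+o(1)}{\log N}$. So: keep your $\theta=\tfrac12$ argument for the unconditional part, but the conditional bound needs Theorem~\ref{thm:diagonal}; tuning $\theta$ alone will not get you there.
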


We stress that in contrast to the prior work on learning DPP kernels with guarantees \cite{UBMR17}, 
our algorithm does not rely on the data being produced by a DPP to have a ``cycle basis'' of short cycles 
or large nonzero entries. We obtain an approximation to the optimal likelihood for {\em any} data set. 
Although a $\frac{1}{(1+o(1))\log{m}}$-approximation is weak, when every element appears in relatively few subsets 
(which is common in practice), our algorithm is much better: the actual approximation factor is 
$1-\frac{1}{\log(m/a_{\max})}$, where $a_{\max}/m$ is the fraction of the data subsets containing 
the most frequent element in the ground set. 
Hence, if all elements appear in at most a $\sim 1/N$-fraction of the subsets, 
we obtain a $(1-\frac{1+o(1)}{\log N})$-approximation to the log likelihood. 
Although we don't expect our algorithm to obtain substantially better likelihood than various heuristics 
employed in practice, it may nevertheless serve as a benchmark to estimate how close to optimal these solutions are. 
Moreover, the family of instances constructed in our reduction indeed satisfies this condition; 
therefore, to improve the hardness of approximation bound beyond $1-\frac{1+o(1)}{\log N}$, 
the hard instance of data set must have some elements appearing in $\omega(1/N)$-fraction of the subsets.

\subsection{Our approach and techniques}
We show that it is NP-hard to approximate the optimal DPP likelihood function by reducing from a coloring problem, rather than from  \textsc{Exact-3-Cover}, which was Kulesza's~\cite{Kul12} initial approach.

We begin with the intuition leading to the notion of \emph{vector coloring}, which plays a central role in our proof. Since any marginal kernel $K \in \R^{N \times N}$ is positive semidefinite, it can be factored as $K = Q^{\top}Q$, where $Q \in \R^{k \times N}$ and $k$ is the dimension of the kernel. If we denote the column vectors of $Q$ as $q_1, \ldots, q_N$ (where each $q_i \in \R^k$), these vectors provide an embedding of the elements $\{1, \ldots, N\}$ into $\mathbb{R}^k$ that captures similarities between elements. Intuitively, DPPs prioritize diversity due to the geometric properties of the kernel: as elements in a subset $S$ become more similar, the columns of $K_S$ approach collinearity, causing the determinant --- and thus the selection probability --- to vanish.\footnote{Recall that because $K_S$ is a Gram matrix, $\det(K_S)$ equals the squared volume of the parallelepiped spanned by the embedding vectors of elements in $S$.}

Consider a training set consisting of a collection of subsets of $\{1, \ldots, N\}$, each of a constant size $r$. What characterizes a maximum-likelihood DPP kernel for such data? Ideally, the embedding vectors should encode an $r$-vector-coloring of the elements, in the following sense. Each ``vector-color'' is represented by a unit vector in Euclidean space. To maximize the likelihood function, for every subset $S = \{i_1, \ldots, i_r\}$ in the training set, the corresponding embedding vectors $\{q_{i_1}, \ldots, q_{i_r}\}$ should form a ``rainbow coloring,'' meaning the $r$ vectors are pairwise orthogonal. Conversely, for $r$-subsets not present in the training set, we prefer that as many as possible contain ``repeated colors'' (non-orthogonal vectors).\footnote{Implicitly, we also seek $k=r$ so that no non-degenerate parallelepipeds of dimension higher than $r$ exist; see Conjecture~\ref{conj:3-dim}.} 

The kind of $r$-vector-coloring that corresponds to a maximum-likelihood DPP kernel in this way differs from various related concepts in key ways. One such notion, the \emph{orthonormal representation} used to define Lovász's $\vartheta$ function~\cite{Lov79, GLS81}, assigns unit vectors $v_i \in \R^N$ to vertices such that $v_i^{\top}v_j = 0$ strictly for all edges $(i,j) \in E$. Another, the \emph{vector $k$-coloring} introduced in~\cite{KMS98} for any \emph{real number} $k>1$, assigns unit vectors such that $v_i^{\top}v_j \leq -1/(k-1)$ for all $(i,j) \in E$, with the goal of minimizing $k$. Our problem is distinct: 
\begin{enumerate}
\item The integer $r$ is a fixed parameter analogous to the number of discrete colors. 
\item Rather than minimizing $k$, we maximize ``orthogonality'' (defined precisely in Section~\ref{sec:subsection-soundness-reduction}), measuring how closely the embedding approaches an orthonormal representation. 
\item Our objective function uses an averaged orthogonality over all edges. Our reductions must produce a ``gap'' in this average value between \textsc{YES} and \textsc{NO} instances.
\end{enumerate}
Ultimately, $r$-vector-coloring serves as a conceptual bridge: it allows us to interpret learning a DPP kernel as a continuous, ``fuzzy'' version of a discrete coloring problem. It is a reformulation of the problem, not a target for one of our reductions.

It is natural to attempt a reduction from the NP-complete $r$-Coloring problem to Maximum Likelihood Learning for DPPs by mapping edges to sets that should be rainbow-colored. However, if we treat each graph edge as a set of its two endpoints, the problem is not inherently hard because graph $2$-coloring can be solved by a polynomial-time algorithm. We overcome this by ``lifting'' each edge to a size-$3$ subset (transforming the graph into a $3$-uniform hypergraph) so that we may use \textsc{$3$-Coloring} as our starting point. Our goal is to show a completeness/soundness gap: if a graph $G$ is $3$-colorable, there exists a DPP kernel with high likelihood; if $G$ is not $3$-colorable, the likelihood of \emph{every} DPP kernel is small. 

We address several technical challenges to realize this reduction. First, we characterize the structure of maximum-likelihood kernels. Extending an argument from~\cite{BMRU17}, we prove that the squared norm of each embedding vector must equal the element's empirical frequency. Furthermore, we show that, for a near-optimal DPP kernel, by projecting all the ``good'' embedding vectors to an appropriate $3$-dimensional subspace and carefully handling the ``bad'' embedding vectors, we construct a rank-$3$ DPP kernel whose distance in log likelihood to the optimum increases only by a polynomial factor. Consequently, our soundness theorem can focus on rank-$3$ kernels, significantly reducing the complexity of the gadget analysis. Second, we require a gapped reduction rather than a simple decision result. In the \textsc{YES} case, the average volume of the $3$-dimensional parallelepipeds spanned by the hyperedge embedding vectors must be large; in the \textsc{NO} case, this average must be small for \emph{any} embedding. This requires the hypergraph to remain ``far from'' $3$-vector-colorable even if a small fraction of edges are removed. While strong hardness results for coloring $3$-colorable graphs exist~\cite{KLS00,GK04,DMR09,BG16}, they typically rely on dense graphs; the requirement on \textsc{NO} case mentioned above, when applied to dense graphs, would make the problem fall into the regime of property testing, which is unfortunately known to be computationally easy~\cite{GGR98}. 

We circumvent this by adapting the sparse graph construction of Bogdanov, Obata, and Trevisan~\cite{BOT02} (referred to as a \emph{BOT graph} henceforth). Originally used for query lower bounds in $3$-colorability testing, we refine this construction by correcting minor errors and enhancing robustness using the strong expanders of Alon and Capalbo \cite{AC07}. These modifications allow us to prove that for some absolute constant $\delta$, we can decode a $3$-coloring satisfying a $(1-\delta)$-fraction of the original edges, provided the DPP log-likelihood is sufficiently close to the theoretical maximum. The reduction sequence is summarized in Fig.~\ref{fig:reductions}.

\begin{figure}
\centering
\tikzstyle{pinstyle} = [pin edge={thin,black}]
\tikzset{
    block/.style={draw, fill=blue!20, rectangle, minimum height=3em, minimum width=6em, align=center},
    input/.style={coordinate},
    output/.style={coordinate}
}
\begin{tikzpicture}[auto, node distance=2cm,>=latex']
    \node [block, pin={[pinstyle]above:\textsc{Max-$3$SAT}}] (3SAT) {$3$-CNF formula};
    \node [block, right of= 3SAT, pin={[pinstyle]above:\textsc{$3$-Coloring}},
            node distance=3.25cm] (BOT) {BOT graph};
    \draw [->] (3SAT) --  (BOT);
	\node [block, right of= BOT, pin={[pinstyle]above:\textsc{$3$-Coloring}}, 
	      node distance=3.25cm] (hypergraph) {BOT hypergraph};
	\draw [->] (BOT) -- (hypergraph);
	\node [block, right of= hypergraph, pin={[pinstyle]above:\textsc{MLE-DPP-kernel}}, 
	      node distance=6.5cm] (DPP) {DPP learner};
	\draw [->] (hypergraph) -- (DPP) 
    node[midway, above=3pt, align=center] {
        \small\textbf{Completeness} \\ 
        \small\textbf{Theorem (Thm~\ref{thm:completeness})}
    } 
    node[midway, below=3pt, align=center] {
        \small\textbf{Soundness} \\ 
        \small\textbf{Theorem (Thm~\ref{thm:soundness})}
    };
\end{tikzpicture}
\caption{High level overview of our reductions. }
\label{fig:reductions}
\end{figure}
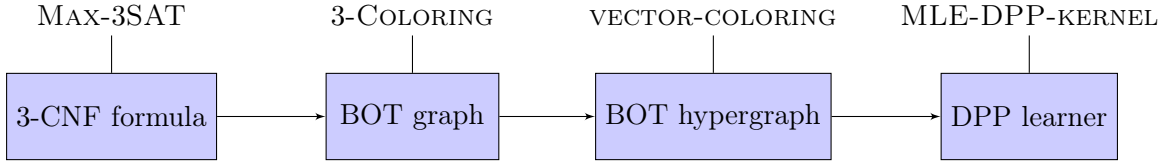

\paragraph{Algorithmic results.} 
For the upper bound, we obtain an approximation algorithm by using some of the properties required
for the analysis of our reduction. 
The algorithm itself is very simple: given a data set $X_1,X_2,\ldots,X_m$, 
output the DPP marginal kernel given by the $N\times N$ diagonal matrix $K$ such that 
$K_{ii} = |\{j : X_j \ni i\}|/m$ for all $i$ in the ground set. 
In other words, the diagonal entries of the marginal kernel are just the empirical probabilities
of elements in the data set.
Hadamard's inequality gives a lower bound on the optimal likelihood that is similar to the likelihood of the 
diagonal kernel; if the elements all appear in at most a $a_{\max}/m$-fraction of the subsets, 
the ratio $\frac{\text{log likelihood output by the algorithm}}{\text{optimal log likelihood}}$ is at most 
$1+\frac{\log((1-\frac{a_{\max}}{m})^{1-a_{\max}/m})}{\log((\frac{a_{\max}}{m})^{a_{\max}/m})}\approx 1+\frac{1}{\log N}$ 
when $a_{\max}/m$ is $O(1/N)$. 
For an unconditional upper bound, observe that elements in all $m$ sets should occur with probability $1$ 
in the maximum likelihood DPP (and thus may be disregarded without loss of generality), 
hence we may plug $a_{\max}=m-1$ into the aforementioned bound and obtain a $(1+o(1))\log{m}$ upper bound on the ratio.

\subsection{Related work}\label{sec:relatedwork}
\paragraph{Learning DPPs.} As mentioned earlier, \cite{UBMR17} in particular proposed an algorithm for recovering a DPP's 
kernel up to similarity, which is efficient when 
(i) the graph defined by interpreting the kernel as a weighted adjacency matrix has a 
``cycle basis'' of cycles of bounded length and 
(ii) the nonzero entries are not too small. 
Furthermore, they gave a lower bound on the sample complexity of estimating the DPP kernel, 
showing that it indeed depends similarly on these quantities. 
Thus, when there is enough data to permit exact recovery of the kernel, 
this algorithm will perform well, but otherwise there is no guarantee 
that the algorithm produces a kernel for a DPP with likelihood close to maximum.

In a companion paper, \cite{BMRU17} 
also studied the rate of estimation obtained by the maximum likelihood kernel. 
Again, they determined classes of DPPs for which it is efficient (or not). 
Moreover, they identified an exponential number of saddle points, 
and conjectured that these are the only critical points; 
they further suggested that a proof of this conjecture might lead to 
an efficient algorithm for computing a maximum likelihood kernel. 
But, they did not actually provide algorithms for computing the likelihood or the kernel itself.

Starting with the pioneering work of~\cite{KT11}, 
various empirical learning algorithms have been proposed for learning discrete DPPs,
such as Bayesian methods~\cite{AFAT14}, expectation-maximization (EM) algorithms~\cite{GKFT14},
fixed-point iteration~\cite{MS15}, learning non-symmetric DPPs~\cite{GBD19}, 
learning with negative sampling~\cite{MGS19},
and minimizing Wasserstein distance~\cite{AGR+20}.
However, none of these algorithms has theoretical guarantees.
Efficient learning algorithms have also been designed for restricted classes of DPPs~\cite{MS16,GPK17,DB18,ORG+18}.
A related problem, namely testing DPPs, recently has been investigated by~\cite{GAJ20}.

We note that in contrast to the problem of learning the DPP kernel from a data set as considered here, 
the problem of computing the mode (``MAP estimate'') of a DPP given by its kernel has long been known to 
be NP-complete~\cite{ko1995exact,civril2009selecting}. 
The inapproximability for this problem was recently strengthened substantially by \cite{Ohs21}.

\paragraph{Vector coloring problems.}
The notion of \emph{orthogonal representation} (in which there is an edge between two vertices if and only if the two corresponding representation vectors are orthogonal\footnote{Some authors, for example~\cite{Lov79}, define orthogonal representation by mandating the vectors of two non-adjacent vertices to be orthogonal.}) was introduced by \cite{Lov79}, and was used in the definition of the famous Lov{\'{a}}sz's $\vartheta$ function, 
which has been employed to bound quantities such as Shannon capacity, the clique numbers or the chromatic numbers of graphs. More generally, a \emph{geometric representation} of a graph is a mapping from vertices of the graph to points in a metric space $\mathcal{M}$, such that two vertices are connected by an edge if and only if the distance between the two corresponding points satisfies a certain condition. For example, orthogonal representation is a special case of the \emph{unit-distance} graph where (in the framework of geometric representation) the underlying metric space is 
the unit sphere (with distance $1$ replaced by angular distance $\pi/2$). Geometric representation of graphs is a well-studied subject, revealing many surprising connections between parameters (e.g. dimension) of geometric representations and properties of the original graph, such as chromatic number, connectivity, excluded subgraphs, tree width, planarity, etc; see e.g.~\cite{Lov79,LSS89,PP89,KMS98,LV99,LSS00,HPS+10} and the recent textbook~\cite{Lov19}.

\paragraph{Matrix completion problem.}
Geometric representations of graphs are intimately connected to another class of problems,
\emph{matrix completion problems}.
For instance, the celebrated result of \cite{Pee96}, showing NP-hardness of deciding whether 
a $3$-dimensional orthogonal representation over a finite field exists for a graph,
was obtained through reducing \textsc{$3$-Colorability} to a low-rank matrix completion problem.
Matrix completion studies under what conditions a partially specified matrix can be completed
into one which belongs to a certain class of matrices, 
such as low-rank matrices, semidefinite matrices, Euclidean distance matrices, etc.
See e.g.~\cite{Lau09} for an overview of the important results in this area.
Interestingly, a recent work of \cite{HMRW14}, which proved the hardness of
low-rank matrix completion problem under the incoherence assumption 
(a commonly used assumption for many matrix completion results), was also based on gapped versions of 
computationally hard problems on graphs such as the \textsc{$r$-Coloring} problem and 
the \textsc{($r,\epsilon$)-Independent-Set} problem.\footnote{In this problem, one is given an undirected graph 
that is promised to be $r$-colorable and is asked to find an independent set of size $\epsilon n$, where $\epsilon<1/r$ and 
$n$ is the number of the vertices in the graph.}


\subsection{Organization of the paper}
The rest of the paper is organized as follows.
In Section~\ref{sec:main_result} we summarize notation, definitions and theorems to be used later,
define formally the problem of maximum likelihood learning of DPPs and state our 
main theorem on hardness of maximum likelihood learning of DPPs.
Then we outline the proof of the main theorem, while deferring some technical proofs to later sections.
Specifically, in Section~\ref{sec:BOT}, 
we provide a detailed description of our slightly modified BOT graph construction
and prove some useful properties of these graphs.
In Section~\ref{sec:diagonal}, we show the following: 
given any training set, at least one of its optimal DPP kernels satisfy that their diagonals are
equal to the empirical frequencies of elements in the ground set.
In Section~\ref{sec:completeness-proof}, we explicitly construct a rank-$3$ optimal kernel
for $3$-colorable BOT graphs.
For general BOT graphs, we further prove in Section~\ref{sec:rank-3} that one may restrict 
to optimizing over rank-$3$ DPP kernels only, without sacrificing too much in likelihood.
Then in Section~\ref{sec:soundness} we put these pieces together and prove the soundness theorem, 
which completes the proof of our main NP-hardness of DPP learning theorem. 
In Section~\ref{sec:algorithm} we present and analyze our simple approximation algorithm. 
Finally, we conclude in Section~\ref{sec:discussion} with discussions and open problems.

\section{Maximum likelihood learning of DPP and our main hardness result}\label{sec:main_result}

\subsection{Preliminaries}
Unless stated otherwise, all logarithms in this paper are to the base $e$ (i.e. natural logarithms). 
For positive integer $n$, we write $[n]$ to denote the set $\{1,2,\ldots, n\}$.
For an $n$-dimensional real vector $x$, we use $\|x\|_{2}=\sqrt{x_1^2+\cdots+x_n^2}$ to denote the
$\ell_2$ or Euclidean norm of $x$. The inner product of two vectors $x, y\in \R^{n}$ is $\ip{x}{y}=x^{\top}y=\sum_{i=1}^{n}x_{i}y_{i}$.
We use $\theta(x, y)$ to denote the angle between two $n$-dimensional real vectors $x$ and $y$. Note that $\theta(x,y)=\cos^{-1}\left(\frac{\ip{x}{y}}{\|x\|_{2} \cdot \|y\|_{2}}\right)$.

\paragraph{Matrix analysis.}
Let $A$ be an $m\times n$ matrix. The $(i, j)^{\text{th}}$ entry of $A$ will be denoted by $A_{i,j}$. For a matrix $A \in \R^{m \times n}$, $S\subset [m]$ and $T \subset [n]$, then $A_{S,T}$ denotes the sub-matrix of $A$ whose rows are indexed by $S$ and columns indexed by $T$. When $S=T$, we simply write $A_{S}$ to denote the principal submatrix of $A$ indexed by $S$.

All matrices in this paper are over real numbers $\R$; therefore the
Hermitian adjoint of $A$, $A^{H}$ is the same as $A^{\top}$, the transpose of $A$.
By the spectral theorem, the eigenvalues
of a real, symmetric matrix $M\in \R^{n\times n}$ are all real numbers, and will
be denoted $\lambda_1(M)\geq \lambda_2(M) \geq \cdots \geq \lambda_n(M)$.
A real, symmetric matrix $M$ is called \emph{positive semidefinite} (PSD) 
if all its eigenvalues are non-negative (i.e., $\lambda_n(M)\geq 0$).
Well-known equivalent characterizations of PSD matrices include $x^{\top} M x\geq 0$ for all $x\in \R^n$,
and the existence of a matrix $Q\in \R^{k\times n}$ for some $k>0$ such that $M=Q^{\top}Q$.

A useful variational characterization of the eigenvalues of real, symmetric matrices is the
Courant-Fischer theorem, which states that for every $1\leq k\leq n$ (when a set of vectors whose 
indices are outside the range $[n]$, then the set is understood to be empty) we have
\[
\lambda_{k}(A)=\min_{x_1, \ldots, x_{k-1}\in \R^n} 
\max_{\substack{y\neq 0, \, y\in \R^n \\ y\perp x_1, \ldots, x_{k-1}}} \frac{y^{\top} A y}{y^{\top} y},
\]
and 
\[
\lambda_{k}(A)=\max_{x_{k+1},\ldots, x_n \in \R^{n}}
\min_{\substack{y\neq 0, \, y\in \R^n \\ y\perp x_{k+1}, \ldots, x_{n}}} \frac{y^{\top} A y}{y^{\top} y}.
\]

The \emph{singular values} of a matrix $A\in \R^{m\times n}$ are defined as the (positive) square roots 
of the eigenvalues of $A^{H}A=A^{\top}A$ (a real, symmetric $n\times n$ matrix). 
Namely, $\sigma_i(A)=\sqrt{\lambda_i(A^{\top}A)}$, $i=1,\ldots, n$.
We also arrange the singular values of a matrix $A$ in decreasing order, that is
$\sigma_1(A)\geq \sigma_2(A)\geq \cdots \geq \sigma_n(A)$.
The \emph{Frobenius norm} of $A$, denoted $\|A\|_{F}$, is defined to be 
$\|A\|_{F}=\sqrt{\sum_{i=1}^{m}\sum_{j=1}^{n}|A_{i,j}|^{2}}$. It is well-known that
$\|A\|_{F}^{2}=\sigma_{1}^{2}(A)+\cdots+\sigma_{n}^{2}(A)$. Finally, the 
\emph{spectral norm} of a square $n\times n$ matrix $A$ is defined as the square root of 
the maximum eigenvalue of $A^{H}A$, i.e.,
\[
\|A\|_{2}=\sqrt{\lambda_1(A^{\top}A)} = \max_{x\neq 0}\frac{\|Ax\|_{2}}{\|x\|_{2}} = \sigma_{1}(A).
\] 

\paragraph{Discrete determinantal point processes.}
A \emph{discrete determinantal point process} (DPP) $\calP$ over a finite set $\mathcal{X}$ 
is a probability measure over the set of all subsets of the ground set $\mathcal{X}$.
The distribution of $\calP$ is specified by a {\em marginal kernel} $K\in \R^{\mathcal{X}\times \mathcal{X}}$, 
which is a positive semidefinite matrix with eigenvalues in $[0,1]$, in the following manner:
if $\bY \subseteq \mathcal{X}$ is a random subset drawn according to $\calP$, 
then its probability mass function $\calP_K$ is defined such that, for every $S \subseteq \mathcal{X}$,
\[
\Pr_{\bY \sim \calP_K}[S \subseteq \bY] = \det(K_S).
\]
Here $K_S$ is the principal submatrix of $K$ indexed by $S\subseteq \mathcal{X}$.


If it is the case that all eigenvalues of $K$ are in $[0,1)$, 
then $\calP$ is called an \emph{$L$-ensemble}, whose kernel can be defined to be the 
positive definite\footnote{A real, symmetric matrix is called \emph{positive definite} (PD) 
if all its eigenvalues are positive.} matrix $L=K(I-K)^{-1}$. 
In this case, the corresponding probability mass function, denoted $\calP_L$,
can be shown to be 
\[
\Pr_{\bY \sim \calP_L}[ \bY=S] = \frac{\det(L_S)}{\det(I+L)},
\]
for every $S \subseteq \mathcal{X}$.
Hence, $\Pr_{\bY \sim \calP_L}[ \bY=\emptyset] = {\det(I-K)}$, 
and consequently a DPP is an $L$-ensemble if and only if the random variable 
$\bY=\emptyset$ with non-zero probability.

\subsection{Maximum Likelihood Learning of DPPs}
We define the \emph{Maximum Likelihood Learning of DPPs} problem as follows.
A learning algorithm receives a training data sample $\{X_t\}_{t=1}^{T}$ (viewed as a multiset) 
drawn independently and identically from a distribution $D$ over the subsets of a ground set $\mathcal{X}$.
The goal of the learning algorithm is to find a DPP kernel $K$ based on the training sample so as to 
minimize\footnote{We add a negative sign at the front to make the estimator to be always positive, 
and thus changing the maximization problem into a minimization one. To be consistent, we still refer to the quantity 
as a ``maximum'' likelihood; see Remark~\ref{rem:log-likelihood}.} the following 
\emph{log likelihood} objective function:
\begin{align}\label{eqn:MLE}
\ell(K)=-\frac{1}{T} \log \prod_{t=1}^{T} \Pr_{\bY \sim \calP_K}[\bY=X_t]
=-\frac{1}{T} \sum_{t=1}^{T} \log  \Pr_{\bY \sim \calP_K}[\bY=X_t],
\end{align}
When the training data sample $\{X_t\}_{t=1}^{T}$ is clear from context, we simply denote the value of the maximum log likelihood of an optimal DPP kernel by $\ell^*$.

One common way to establish the hardness of maximum likelihood learning problems is to show that
even computing the maximum value of the log likelihood $\ell^*$ is hard:
if one could efficiently find an optimal DPP kernel $K$, then since evaluations of determinants can be performed
in polynomial time, clearly the corresponding log likelihood $\ell^*$ would be efficiently computable as well. 
That is also the approach we take in this paper. 
In fact, the lower bound actually proved is much stronger: we show that it is NP-hard even to compute a
$1-O(\frac{1}{\log^9{N}})$-approximation of $\ell^*$ for a ground set of size $N$. 

\begin{theorem}[Main]\label{thm:main}
There are infinitely many positive even integers $N$ such that the following holds.\footnote{As we will see later, 
$N/2$ is the number of edges in a specially constructed graph. We then construct a $3$-uniform hypergraph
based on this graph, and add a new node to the hypergraph for each edge in the graph. 
The ground set consists of these newly added nodes together with the set of vertices in the original graph, 
hence the cardinality of the ground set is at most $N$.} 
Let $\mathcal{X}=\{1, 2, \ldots, N\}$. There is a training data sample $\{X_t\}_{t=1}^{N/2}$ of size $N/2$,
where $X_i \subseteq \mathcal{X}$ for each $i$, such that it is NP-hard to 
$\left(1-O(\frac{1}{\log^9{N}})\right)$-approximate the maximum log likelihood value
that a DPP kernel can achieve on the training set.
\end{theorem}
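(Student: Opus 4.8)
The plan is to realize the chain of reductions depicted in Figure~\ref{fig:reductions}. Starting from the PCP-based inapproximability of \textsc{Max-$3$SAT}, the (slightly corrected) BOT construction of Section~\ref{sec:BOT}, reinforced by the Alon--Capalbo expanders~\cite{AC07}, produces from a $3$-CNF $\varphi$ a bounded-degree graph $G=(V,E)$ with a gap behaviour: if $\varphi$ is satisfiable then $G$ is properly $3$-colorable, while if $\varphi$ is unsatisfiable then $G$ is \emph{robustly} far from $3$-colorable, meaning that \emph{every} assignment of three colors to $V$ leaves at least a constant fraction $\delta_0$ of the edges monochromatic, and this remains true after deleting any $o(1)$-fraction of $E$. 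From $G$ I would build the $3$-uniform BOT hypergraph $H$ by adding, for each edge $e=\{x,y\}\in E$, a fresh element $w_e$ together with the hyperedge $\{x,y,w_e\}$; the ground set is then $\calX=V\cup\{w_e: e\in E\}$, of size at most $N:=2|E|$, and the training sample is $\{X_t\}$, consisting of exactly these $|E|=N/2$ hyperedges, each with empirical frequency $1/|E|$. (Lifting edges to triples is what lets us work with \textsc{$3$-Colorability} rather than the trivial $2$-coloring problem, as noted above.) The two things to prove are completeness (a $3$-coloring of $G$ yields a kernel with log likelihood $\ell^*\le\ell_{\mathrm{yes}}$) and soundness (robust non-$3$-colorability of $G$ forces $\ell^*$ to be strictly larger than $\ell_{\mathrm{yes}}$ by the claimed multiplicative margin), the target gap $1-O(1/\log^{9}N)$ arising from the blow-up of the BOT/PCP pipeline together with the magnitudes of the per-subset contributions.

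\textbf{Completeness.} Given a proper $3$-coloring $c\colon V\to\{1,2,3\}$, I would exhibit an explicit rank-$3$ kernel (this is Section~\ref{sec:completeness-proof}). Fix an orthonormal basis $u_1,u_2,u_3$ of $\R^3$. Set the embedding vector of a vertex $v$ to $q_v=\sqrt{f_v}\,u_{c(v)}$, where $f_v$ is the empirical frequency of $v$ (the value forced by the diagonal lemma of Section~\ref{sec:diagonal}), and set $q_{w_e}=\sqrt{f_{w_e}}\,u_k$ for $e=\{x,y\}$, where $k$ is the unique color absent from $\{c(x),c(y)\}$; after rescaling so that all eigenvalues lie in $[0,1)$, take $K=Q^{\top}Q$ with $Q$ the matrix of these columns. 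For every training subset $X_t=\{x,y,w_e\}$ the three columns of $K_{X_t}$ are pairwise orthogonal, so $\det(K_{X_t})=f_x f_y f_{w_e}$ is maximal, and the total log likelihood attains a value $\ell_{\mathrm{yes}}$ that I would compute in closed form from the frequencies, also verifying feasibility of the rescaling.

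\textbf{Soundness.} This is the heart of the argument and the step I expect to be the main obstacle. Suppose $G$ is unsatisfiable but, for contradiction, $\ell^*$ is within the claimed factor of $\ell_{\mathrm{yes}}$; I want to decode a $3$-coloring of $V$ that properly colors more than a $(1-\delta_0)$-fraction of $E$. First, by Section~\ref{sec:rank-3} one may assume the near-optimal kernel has rank $3$ at a cost absorbed into the $1-O(1/\log^{9}N)$ margin, and by Section~\ref{sec:diagonal} that its diagonal equals the frequencies $f_i$; hence the columns $q_i\in\R^3$ have fixed squared norms $\norm{q_i}^2=f_i$, so only their \emph{directions} are free. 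A convexity/averaging argument over the $N/2$ equally weighted subsets then shows that near-optimality of $\ell(K)$ forces, for all but an $O(1/\log^{9}N)$-fraction of the training subsets $X_t=\{x,y,w_e\}$, the volume $\det(K_{X_t})$ to be within a constant factor of its maximum $f_x f_y f_{w_e}$, which, the norms being fixed, means $q_x$ and $q_y$ are near-orthogonal. Picking three reference directions (e.g.\ by clustering the $\{q_v\}_{v\in V}$ so the rounding is globally consistent), round each $q_v$ to a color $c(v)\in\{1,2,3\}$; near-orthogonality of $q_x,q_y$ implies $c(x)\ne c(y)$, so $c$ properly colors a $(1-O(1/\log^{9}N))$-fraction of $E$, contradicting robust non-$3$-colorability once the constants are set so this exceeds $1-\delta_0$.

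\textbf{Where the difficulty lies.} A kernel can try to escape the argument in two ways: by using dimension larger than $3$, or by inflating the volumes of the training subsets at the expense of the normalization $\det(I+L)$ (equivalently, by straining the spectral constraint); the rank-$3$ reduction of Section~\ref{sec:rank-3} and the diagonal lemma of Section~\ref{sec:diagonal} are precisely what close these routes, and even with them in hand one must show that near-optimality of the \emph{average} log likelihood is robust enough to survive an arbitrary small fraction of bad subsets, so that the decoding error stays below $\delta_0$. The remaining work is bookkeeping: propagating the PCP gap, the degree and expansion parameters of the BOT and Alon--Capalbo constructions, and the rank-reduction loss, so that the constant robustness $\delta_0$ in $3$-coloring translates into exactly a $1-O(1/\log^{9}N)$ multiplicative gap for $\ell^*$, which is the source of the polylogarithmic factor in the statement of Theorem~\ref{thm:main}.
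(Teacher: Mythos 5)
Your proposal follows the paper's own chain of reductions (\textsc{Max-$3$SAT} $\to$ BOT graph $\to$ BOT hypergraph $\to$ MLE-DPP) and invokes the same three structural lemmas (diagonal constraint, rank-$3$ reduction, gadget robustness), so the high-level route is the correct one. But there is a genuine gap in the soundness decoding step, and a smaller issue in the completeness step.

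The soundness gap is the statement ``Picking three reference directions (e.g.\ by clustering the $\{q_v\}_{v\in V}$ so the rounding is globally consistent), round each $q_v$ to a color $c(v)$; near-orthogonality of $q_x,q_y$ implies $c(x)\ne c(y)$.'' This does not work as written. Near-orthogonality along edges is a \emph{local} constraint, and it does not force the vectors $\{q_v\}$ to concentrate in three clusters on $S^2$: a vector $3$-coloring with small $\err_\chi(G)$ can have the $q_v$'s spread continuously, with two nonadjacent vertices pointing in essentially the same direction but belonging to different color classes in any consistent proper coloring, or vice versa. A naive clustering step would then either fail to partition the vectors into three groups or produce a rounding that is not consistent with the edge orthogonality constraints. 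The paper avoids this by exploiting the \emph{specific} structure of the BOT graph: it first proves robustness of the equality gadget (Claim~\ref{claim:equality-gadget-dot-product}, Corollary~\ref{cor:equality-gadget-angles}, Lemma~\ref{lem:equality_robust}) and of the clause gadget (Lemma~\ref{lem:clause_robust}), and then, crucially, it propagates a color assignment from a single reference literal block along paths in the surviving strong-expander subgraph. Theorem~\ref{thm:strong.expander} guarantees this subgraph has diameter $O(\log n)$, so the angular error accumulated via repeated applications of the equality-gadget robustness is $O(\log n)$ times the per-edge angular defect. This is exactly why the soundness threshold in Theorem~\ref{thm:soundness} is $\ell(K)\le\ell^*+C/\log^2 n$: a per-edge defect of $O(1/\log n)$ (i.e.\ angular error $O(1/\log n)$) compounded over $O(\log n)$ hops stays bounded by a small constant $\delta_p$. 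Your proposal drops this diameter-based consistency mechanism, which is the technical heart of the soundness proof and the actual source of the $\log^2 n$ (and, after chaining with the rank-$3$ loss of Theorem~\ref{thm:dim3}, the $\log^9 N$) in the statement; simply ``choosing constants'' will not recover it.

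A smaller issue: in the completeness step you write ``after rescaling so that all eigenvalues lie in $[0,1)$, take $K=Q^{\top}Q$.'' No rescaling is performed in the paper, and in fact rescaling would \emph{break} the argument: the diagonal of $K$ must equal the empirical frequencies exactly (Theorem~\ref{thm:diagonal}), and any rescaling would change the diagonal and hence strictly worsen the likelihood. Instead, the paper proves directly that $\lambda_1(K)=1$ (not $<1$) via Cauchy--Schwarz, using the fact that for a proper $3$-coloring the total weight in each color class is exactly $1$ ($P_1=P_2=P_3=1$). The kernel is a valid marginal kernel (eigenvalues in $[0,1]$, not $[0,1)$); it is not an $L$-ensemble, and that is fine.
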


\subsection{Proof of the Main Theorem: an outline}\label{sec:proof_overview}
As pointed out in Introduction, there are three main reductions in the proof of our main theorem, as illustrated in Fig.~\ref{fig:reductions}. In the following we break this reduction sequence into two parts.
\subsubsection{The first two reductions}
\paragraph{\textsc{Max-$3$SAT} with bounded occurrence.}
Our starting point is the hardness of \textsc{Max-$3$SAT},
in which given a Boolean formula $\phi$ in $3$-CNF form, the goal is to output the maximum number of clauses of $\phi$ 
that can be satisfied by any truth assignment of the variables. 
A classical hardness result is the H{\aa}stad's 3-bit PCP theorem~\cite{Has01}, which states that
it is NP-hard to $(7/8+\eps)$-approximate \textsc{Max-$3$SAT} for any constant $\eps>0$.
However, for our purpose, we need the formula $\phi$ to have bounded occurrence of any variable.
Let \textsc{Max-$3$SAT($k$)} to denote a subclass of \textsc{Max-$3$SAT}, in which the instances
satisfy that every variable occurs in at most $k$ clauses. 
\cite{Has00} showed that it is NP-hard to $7/8+1/(\log{k})^c$-approximate \textsc{Max-$3$SAT($k$)}
where $c$ is some absolute constant.\footnote{We may also use the NP-hardness results of~\cite{BKT03} for
$3$-SAT instance in which every variable appears exactly $4$ times, or
assuming $\mathbf{RP}\neq \mathbf{NP}$ and use the hardness result of~\cite{Tre01} with better parameters.} Therefore,

\begin{lemma}[\cite{Has00}]\label{lem:max-3sat}
There exists an integer $k$ and constant $\eps>0$ (depending only on $k$) such that 
it is NP-hard to $(1-\eps)$-approximate \textsc{Max-$3$SAT($k$)}.
\end{lemma}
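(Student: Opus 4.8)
The plan is to derive Lemma~\ref{lem:max-3sat} from H{\aa}stad's optimal inapproximability of unbounded-occurrence \textsc{Max-E3SAT} via the standard expander-based degree-reduction transformation, and to note at the end that one may instead simply cite \cite{Has00} (or \cite{BKT03,Tre01}) as a black box. First I would recall H{\aa}stad's $3$-bit PCP theorem in its gap form: for every fixed $\eps_0>0$ it is NP-hard, given a $3$-CNF formula $\phi$ with $m$ clauses, to distinguish the case that $\phi$ is satisfiable from the case that no assignment satisfies more than a $(7/8+\eps_0)$-fraction of the clauses of $\phi$. In such instances a variable may occur in $\Omega(m)$ clauses, so the transformation must split each high-degree variable while forcing its copies to agree.

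Second, I would carry out the replacement. Fix a constant degree $\Delta$ and, for every sufficiently large $d$, a $\Delta$-regular expander $H_d$ on $d$ vertices whose edge expansion is at least an absolute constant $h>0$ (classical expander families provide this; very small blocks can be padded or handled directly). Given $\phi$, for each variable $x_i$ with $d_i$ occurrences introduce fresh copies $x_i^{(1)},\dots,x_i^{(d_i)}$, one per occurrence, and rewrite $\phi$ over these copies; then, for each edge $\{a,b\}$ of $H_{d_i}$, append the pair of clauses $(x_i^{(a)}\vee\neg x_i^{(b)})$ and $(\neg x_i^{(a)}\vee x_i^{(b)})$, which together encode $x_i^{(a)}=x_i^{(b)}$. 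Let $\psi$ be the resulting $3$-CNF. Every copy lies in exactly one original clause and in $2\Delta$ equality clauses, so every variable of $\psi$ occurs in at most $k:=2\Delta+1$ clauses; and since $\sum_i d_i=3m$, the formula $\psi$ has $m'=(1+3\Delta)m$ clauses.

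Third, completeness is immediate: a satisfying assignment of $\phi$ lifts to $\psi$ by setting every copy to the value of its original variable. Fourth --- the crux --- is soundness. Suppose an assignment $\sigma$ satisfies more than a $(1-\eps)$-fraction of the clauses of $\psi$ for a small constant $\eps$ to be fixed. Round each original $x_i$ to the majority value of its block $\{x_i^{(1)},\dots,x_i^{(d_i)}\}$, and let $\beta_i$ be the number of copies in block $i$ disagreeing with that majority. By edge expansion of $H_{d_i}$, $\sigma$ violates at least $h\beta_i$ of the equality clauses associated to $x_i$, so $\sum_i\beta_i\le \eps m'/h$ because $\sigma$ violates fewer than $\eps m'$ clauses overall. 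The rounded assignment fails an original clause only if $\sigma$ failed its rewritten copy or that clause contains a disagreeing copy, so it fails fewer than $\eps m'+\sum_i\beta_i\le \eps m'(1+1/h)$ of the $m$ original clauses, a fraction at most $(1+3\Delta)(1+1/h)\eps$. Choosing $\eps$ small enough that this is below $1/8-\eps_0$ produces a $(7/8+\eps_0)$-satisfying assignment of $\phi$, contradicting the assumed hardness; hence it is NP-hard to $(1-\eps)$-approximate \textsc{Max-$3$SAT($k$)}.

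The obstacle I anticipate is entirely quantitative rather than conceptual: one must check that the equality clauses number only a constant multiple of $m$ (they do, as $\sum_i d_i=3m$) and that the expansion constant $h$ can be chosen uniformly in the block size $d_i$ --- including handling tiny blocks, where $d_i=1$ needs no gadget and the smallest block sizes can use a complete graph --- so that the loss incurred when passing from $\psi$ back to $\phi$ remains a fixed constant times $\eps$. Given that care, the constants $k$ and $\eps$ depend only on $\Delta$, $h$, and $\eps_0$, as required.
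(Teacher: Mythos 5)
Your proposal is a correct and complete proof sketch, but it follows a different route from the paper: the paper does not prove Lemma~\ref{lem:max-3sat} at all, it simply cites H{\aa}stad's bounded-occurrence hardness result from \cite{Has00} (with alternatives noted from \cite{BKT03} and \cite{Tre01}). What you have written out is the classical Papadimitriou--Yannakakis-style expander degree-reduction from H{\aa}stad's unbounded-occurrence $3$-bit PCP theorem, which is indeed the underlying technology behind such bounded-occurrence hardness results. The expander gadget, the count $m'=(1+3\Delta)m$, the majority-rounding of each block, and the edge-expansion bound $\sum_i \beta_i \le \eps m'/h$ are all handled correctly, and the final choice of $\eps$ so that $(1+3\Delta)(1+1/h)\eps < 1/8-\eps_0$ closes the soundness argument. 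Two things are worth noting. First, as you anticipate, one must handle the fact that constant-degree expander families are not available for every block size $d_i$; padding, using complete graphs for small $d_i$, or simply omitting the gadget when $d_i=1$ resolves this, at the cost of a slightly larger but still constant $k$. Second, your argument gives a much weaker quantitative dependence of $\eps$ on $k$ than the $7/8 + 1/(\log k)^c$ bound in \cite{Has00}; this is entirely fine for the lemma as stated (which only demands \emph{some} constant $\eps>0$), but it is the reason the paper cites \cite{Has00} directly rather than re-deriving the result. In short, your proof is self-contained and correct where the paper is not self-contained, so the comparison is between a citation and a construction rather than between two proofs.
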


That is, for infinitely many integers $n$,  there are two families of instances $\phi_{Y}$ and $\phi_{N}$ in 
\textsc{Max-$3$SAT($k$)} of size $n$ each with the following property:
$\phi_{Y}$ is satisfiable; 
every truth assignment can satisfy at most an  $1-\eps$ fraction of the clauses in $\phi_{N}$; 
and it is NP-hard to distinguish whether one is given a graph from $\phi_{N}$ or from $\phi_{Y}$.

\paragraph{\textsc{$3$-Coloring} for bounded degree graphs.}
Next, we adapt a gap-preserving reduction of Bogdanov, Obata and Trevisan~\cite{BOT02}, which was originally used to
prove an $\Omega(n)$ query lower bound for testing $3$-Colorability in bounded-degree graphs 
under the property testing model. On input an instance $\phi$ of \textsc{Max-$3$SAT($k$)}, 
the reduction outputs a degree-bounded graph $G_{\phi}$ (BOT graph) which satisfies the following:
if $\phi$ is satisfiable then $G_{\phi}$ is $3$-colorable; 
and if every truth assignment can satisfy at most $1-\eps$ fraction of the clauses in $\phi$ then
every $3$-coloring of the vertices of $G_{\phi}$ can make at most $1-\eps'$ fraction of the edges in $G_{\phi}$
non-monochromatic. Here $\eps'$ is a constant depending only on $\eps$ and $k$. 

\begin{lemma}[\cite{BOT02}]\label{lem:BOT-graph}
There are absolute constants $d$ and $\eps'>0$ such that the following holds.
For infinitely many integers $n$, there are two families of degree-$d$ bounded graphs $G_{\phi, Y}$ and $G_{\phi, N}$ of size $n$ 
such that: every $G_{\phi} \in G_{\phi, Y}$ is $3$-colorable; no $1-\eps'$ fraction of the edges of any $G_{\phi} \in G_{\phi, N}$ is $3$-colorable; and yet when given either $G_{\phi} \in G_{\phi, Y}$ or $G_{\phi} \in G_{\phi, N}$, it is NP-hard to distinguish which family $G_\phi$ is from.
\end{lemma}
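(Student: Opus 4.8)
The plan is to instantiate the gap-preserving reduction of Bogdanov, Obata and Trevisan~\cite{BOT02} from \textsc{Max-$3$SAT($k$)} to bounded-degree \textsc{$3$-Coloring}, starting from the two families $\phi_Y,\phi_N$ provided by Lemma~\ref{lem:max-3sat}. In polynomial time we build a graph $G_\phi$ by a gadget construction that inspects only the clause structure of $\phi$ and not its satisfiability; hence $G_{\phi,Y}$ and $G_{\phi,N}$ have the same number of vertices (after trivially padding to align the size, which we then relabel $n$), and distinguishing ``$3$-colorable'' from ``far from $3$-colorable'' reduces to distinguishing $\phi_Y$ from $\phi_N$, which is NP-hard. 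What remains is to exhibit the gadget construction and verify completeness together with robust soundness under a degree bound $d=O(1)$.

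First I would recall the textbook reduction (not yet degree-bounded) and observe that it is already gap-preserving. Fix a ``palette'' triangle on vertices $T,F,B$; in any proper $3$-coloring these receive three distinct colors, which we name \emph{true}, \emph{false}, \emph{ground}. For each variable $x_i$ introduce literal vertices $u_i,\bar u_i$ with edges $u_i\bar u_i$, $u_iB$, $\bar u_iB$, which in a proper coloring forces $\{u_i,\bar u_i\}=\{\text{true},\text{false}\}$. For each clause attach a standard OR-gadget on a constant number of fresh vertices, with edges to $B$ and $F$, whose only consistency requirement is that its three input literal vertices are not all colored \emph{false}. The resulting graph is properly $3$-colorable iff $\phi$ is satisfiable; moreover, from any $3$-coloring that is monochromatic on a set $M$ of edges one decodes an assignment (set $x_i$ true iff $u_i$ carries the \emph{true} color) for which every unsatisfied clause can be charged to an edge of $M$ --- either inside its OR-gadget or on a literal/palette triangle --- so the number of unsatisfied clauses is $O(|M|)$.

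The obstacle is that $B,T,F$, and a literal vertex of a frequently occurring variable, have degree $\Theta(m)$ rather than $O(1)$. Following~\cite{BOT02} I would replace each such hub by $\Theta(m)$ private copies --- one palette and one copy of each literal vertex per clause occurrence --- and wire the copies of each hub together with a constant-degree expander ``equality network'', each expander edge implemented by a constant-size $3$-coloring equality gadget. Completeness is immediate: copy the colors of the original proper coloring. For soundness, argue in the contrapositive: given a $3$-coloring of $G_\phi$ monochromatic on at most an $\eps'$ fraction of edges, apply majority-vote decoding to the copies of each hub; by the expansion of the equality network, a violated equality gadget forces a cross-class expander edge and these are disjoint, so all but an $O(\eps')$ fraction of copies agree with their hub's majority. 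Thus the coloring is consistent with a single global assignment up to $O(\eps' m)$ ``locally broken'' occurrences, and by the charging argument this assignment satisfies a $1-O(\eps')$ fraction of the clauses of $\phi$. Taking $\eps'$ a small enough absolute constant (depending only on $k$, the expander degree, and the gadget sizes) contradicts the far-from-satisfiable guarantee of $\phi_N$; hence every $3$-coloring of $G_{\phi,N}$ is monochromatic on more than an $\eps'$ fraction of its edges, i.e.\ no $1-\eps'$ fraction of its edges is $3$-colorable.

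I expect the crux to be the quantitative robustness of the equality network: one needs a constant-degree expander with a sufficiently large spectral gap and a constant-size equality gadget so that the chain ``monochromatic edges $\Rightarrow$ disagreeing hub copies $\Rightarrow$ unsatisfied clauses'' loses only a constant factor at each step. This is precisely where the argument of~\cite{BOT02} requires the minor corrections alluded to in the introduction, and where --- for the stronger robustness needed downstream in the DPP reduction --- the expander is replaced by the Alon--Capalbo construction~\cite{AC07}. The remaining points --- bounding the maximum degree by an absolute constant $d$, equalizing the two instance sizes, and polynomial running time --- are routine gadget bookkeeping.
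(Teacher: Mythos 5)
Your plan is correct and follows the same high-level path as the paper: start from the bounded-occurrence \textsc{Max-$3$SAT($k$)} hardness of Lemma~\ref{lem:max-3sat}, run the BOT gadget construction (palette triangles, literal blocks, clause gadgets, expander-linked equality networks for the high-degree hubs), check completeness by copying a proper coloring, and establish a robust soundness by charging monochromatic edges to unsatisfied clauses. Where you diverge is in the soundness decoding. You propose a global majority-vote decoder over the copies of each hub, using edge expansion to bound the minority sets; the paper instead argues locally, with no explicit appeal to expansion at this stage. It declares an equality gadget ``broken'' when one of its edges is removed, a literal block ``damaged'' when its internal edges or all of a node's equality connections are gone, forms the ``survived subgraph'' by discarding damaged blocks and the (at most two) clause gadgets each touches (Claims~\ref{claim:subgraph_colorable} and~\ref{claim:survived_clauses}), and then observes that the survived subgraph is itself a BOT graph for the surviving subformula, so Proposition~\ref{prop:BOT_correctness} yields a decoded assignment. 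Both routes are sound; your majority-vote/expansion argument is the one that actually appears later, essentially, as the robustness machinery behind Theorem~\ref{thm:soundness} (via Lemmas~\ref{lem:Dense}--\ref{lem:strong_expander_diameter} and the Alon--Capalbo very strong expander), whereas for Lemma~\ref{lem:BOT-graph} itself the paper gets away with the lighter survived-subgraph accounting. If you adopt your own route you should be careful with the case where a hub's minority class exceeds $n/2$ (with three colors the ``majority'' may hold under half the copies), which requires either treating each minority color class separately or an initial averaging argument; the paper's local damage-counting sidesteps this case entirely.
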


\paragraph{Enhancing BOT graphs with strong expanders.}
The main idea of the reduction of~\cite{BOT02} is to make $k$ copies of the nodes labeled
$\textsc{True}$, $\textsc{False}$ and $\textsc{Dummy}$ for each variable and its negation,
and use equality gadgets arranged with the topology of a low-degree expander to connect these copies, to ensure the copies are colored identically.
Any constant-degree expander with reasonable vertex-expansion suffices:
on one hand, the resulting graph $G_{\phi}$ is of bounded degree; on the other hand, by the expansion property,
deleting any small fraction of the edges in $G_{\phi}$ will still leave the graph with a large connected component.
Using the coloring of $G_{\phi}$ on this connected component, one can decode an assignment that satisfies most of
the clauses.

However, for our purpose of proving hardness of DPP Maximum Likelihood Learning, 
we need the expander to have some additional properties, which are encapsulated in the following definition.  
 
\begin{definition}[Very strong expanders~\cite{AC07}]
A graph $G=(V,E)$ is called a \emph{$d$-regular very strong expander} on $n$ vertices if
the average degree in any subgraph of $G$ on at most $n/10$ vertices is at most $d/6$, and 
the average degree in any subgraph of $G$ on at most $n/2$ vertices is at most $2d/3$.
\end{definition}

The nice properties of very strong expanders that we require are summarized in the following theorem from~\cite{AC07}.
\begin{theorem}[\cite{AC07}] \label{thm:strong.expander}
Let $G=(V,E)$ be a very strong $d$-regular expander on $n$ vertices.
If we delete an arbitrary subset of $m'\leq nd/150$ edges from $G$ and denote the resulting graph by $G'$,
then $G'$ contains a subgraph $H$ on at least $n-15m'/d$ vertices and the diameter of $H$ is $O(\log{n})$.  
\end{theorem}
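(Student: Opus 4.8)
I would prove the theorem by an isoperimetric ball‑growth argument in three stages: (i) turn the ``very strong'' hypothesis into a quantitative edge‑expansion statement; (ii) show that, after deleting $m'$ edges, all but $O(m'/d)$ vertices have a radius‑$O(\log n)$ ball in $G'$ covering more than half of $V$; (iii) assemble $H$ as a single shallow BFS tree. For (i): writing $\partial_G(S)$ for the edges of $G$ with exactly one endpoint in $S$, we have $|\partial_G(S)| = d|S| - 2e_G(S) = d|S| - \bar d(G[S])\,|S|$, so the two conditions defining a very strong expander give $|\partial_G(S)|\ge \tfrac56 d|S|$ when $|S|\le n/10$ and $|\partial_G(S)|\ge \tfrac13 d|S|$ when $|S|\le n/2$. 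If $S$ is a union of connected components of $G'$, every edge of $\partial_G(S)$ was deleted, so $|\partial_G(S)|\le m'$; since $m'\le nd/150 < \tfrac13 d\cdot\tfrac{n}{10}$, no such $S$ has size in $[n/10,n/2]$. Hence $G'$ has a unique component $C$ with $|C|>n/2$, and $U:=V\setminus C$ (a union of components of total size $<n/10$) satisfies $|U|\le \tfrac{6m'}{5d} < 3m'/d$.

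\textbf{Ball growth.} For $v\in V$ let $B_r(v)$ be the radius‑$r$ ball around $v$ in $G'$. If $6m'/d < |B_r(v)| \le n/2$ then $|\partial_{G'}(B_r(v))| \ge |\partial_G(B_r(v))| - m' \ge \tfrac13 d|B_r(v)| - m' > \tfrac16 d|B_r(v)|$, and since each vertex of $B_{r+1}(v)\setminus B_r(v)$ absorbs at most $d$ of these edges, $|B_{r+1}(v)| \ge \tfrac76|B_r(v)|$. Iterating this, there is $\rho = O(\log_{7/6} n) = O(\log n)$ such that for every $v$: if $|B_r(v)| > 6m'/d$ for some $r\le \rho/2$, then $|B_\rho(v)| > n/2$. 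Call $v$ \emph{bad} if $|B_\rho(v)|\le n/2$ and let $\mathrm{Bad}$ be the set of bad vertices; by the contrapositive, every bad $v$ has $|B_r(v)|\le 6m'/d \le n/10$ for all $r\le \rho/2$.

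\textbf{The heart: $|\mathrm{Bad}|\le 15m'/d$.} Bad vertices lying in $U$ number at most $3m'/d$ (this already absorbs every bad $v$ whose $G'$‑component has size $\le 6m'/d$). For a remaining bad $v\in C$, the ball $B_r(v)$ is a proper, strictly growing subset of $C$ for all $r\le\rho/2$, and it cannot grow by a factor $\ge 7/6$ at each of those $\Omega(\log n)$ steps (that would force $|B_{\rho/2}(v)| > n/2$); so there is a \emph{slow step}, i.e.\ some $r<\rho/2$ with $|B_{r+1}(v)| < \tfrac76|B_r(v)|$, whence $|\partial_{G'}(B_r(v))| < \tfrac16 d|B_r(v)|$. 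As $|B_r(v)|\le n/10$ we have $|\partial_G(B_r(v))|\ge \tfrac56 d|B_r(v)|$, so at least $\tfrac23 d|B_r(v)|$ of the $m'$ deleted edges are incident to $B_r(v)$. I would then charge each such $v$ to the deleted edges incident to its slow‑step ball, pass to a Vitali‑type disjoint subfamily of these balls (so each deleted edge lies near $O(1)$ chosen centers at any fixed scale), and bound the charging multiplicity scale by scale to conclude the count is $O(m'/d)$; the constants $150$, $n/10$, $n/2$ are exactly what is needed to bring the total to $\le 15m'/d$. Finally, to build $H$: fix $u_0\notin\mathrm{Bad}$; for any $x\notin\mathrm{Bad}$ the balls $B_\rho(u_0)$ and $B_\rho(x)$ both exceed $n/2$, hence meet, so $d_{G'}(u_0,x)\le 2\rho$; take $H$ to be a BFS tree of $G'$ rooted at $u_0$ of depth $2\rho$. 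Then $V(H)\supseteq V\setminus\mathrm{Bad}$, so $|V(H)|\ge n-15m'/d$, and $\mathrm{diam}(H)\le 4\rho = O(\log n)$.

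\textbf{Main obstacle.} The delicate step is the $O(m'/d)$ bound on the ``growing but slow'' bad vertices. A slow‑step ball can have $\Theta(m'/d)$ vertices, and if one inflates it by the $O(\log n)$ radius needed to capture nearby bad vertices, a disjoint‑family union bound degrades to $O((m'/d)^2)$, which is useless. Getting the correct linear count requires charging bad vertices to deleted edges at the right scale with bounded multiplicity — precisely the combinatorial content encapsulated by the Alon–Capalbo very‑strong‑expander machinery, and the reason the two‑threshold definition ($n/10\to d/6$ and $n/2\to 2d/3$) is essential rather than ordinary edge expansion.
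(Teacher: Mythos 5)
The statement you are proving is cited from Alon--Capalbo, and the paper does not reprove it directly, but it does quote the three lemmas from \cite{AC07} (Lemmas~\ref{lem:Dense}, \ref{lem:subgraph_strong_expander}, and \ref{lem:strong_expander_diameter} in Section~\ref{sec:soundness}) that together constitute the intended argument. That argument is a \emph{vertex-trimming} argument: after deleting $E'$, repeatedly delete any vertex whose current degree is below $3d/4+2$; very strong expansion bounds the number of trimmed vertices by $15m'/d$ (Lemma~\ref{lem:Dense}); the surviving subgraph has minimum degree $\ge 3d/4$ and is therefore a $3d/4$-strong expander (Lemma~\ref{lem:subgraph_strong_expander}); and strong expanders have diameter $O(\log n)$ (Lemma~\ref{lem:strong_expander_diameter}). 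The whole point of trimming is that the $15m'/d$ count falls out of a one-shot edge-counting inequality and never requires you to count ``bad'' vertices by charging them to balls.

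Your proposal takes a genuinely different route --- BFS ball growth plus an attempted charging argument --- and the route has a real gap at the step you yourself flag. You correctly observe that a slow-step ball $B_r(v)$ must have at least $\frac23 d|B_r(v)|$ deleted edges incident to it, hence $|B_r(v)|=O(m'/d)$; but you never get from ``each slow-step ball absorbs $\Omega(d)$ deleted edges'' to ``there are $O(m'/d)$ bad vertices.'' The Vitali/disjoint-subfamily idea gives $O(m'/d)$ \emph{disjoint balls}, but each such ball can contain $\Theta(m'/d)$ bad vertices (a bad vertex need not be the center of a chosen ball, only covered by one after inflation), which is exactly the $O((m'/d)^2)$ degradation you describe and then do not repair. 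The sentence claiming the constants $150,\ n/10,\ n/2$ ``are exactly what is needed to bring the total to $\le 15m'/d$'' is not a proof --- it is the conclusion you need, restated. There is also a smaller issue: you assert that every bad $v\in C$ has a slow step at some $r<\rho/2$, but a vertex whose ball never exceeds $6m'/d$ at any radius $\le\rho/2$ (possible for $v\in C$ if its local neighborhood in $G'$ is sparse but still attached to $C$ by a long path) has no ``large-enough'' ball to apply the $\ge\frac13 d|B_r(v)|$ boundary bound, so it evades your dichotomy; you handled this for $v\in U$ but not for $v\in C$. To close the argument cleanly you would need either the trimming argument of \cite{AC07}, or a charging scheme that charges each bad vertex to deleted edges with $O(1)$ multiplicity \emph{uniformly over scales}, which is nontrivial and not sketched here.

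If you want a proof that matches the source, the shortest path is: (1) run the degree-trimming process and bound $|S|\le 15m'/d$ by counting, for each trimmed vertex, the $>d/4-2$ edges it has lost to deletions or to earlier-trimmed neighbors, and using the very-strong-expander bound $e_G(S)\le \frac{d}{12}|S|$ for $|S|\le n/10$ to show $S$ cannot grow past that threshold; (2) observe the survivor has min-degree $\ge 3d/4$ and is therefore a strong expander; (3) apply the standard two-ball-growth diameter bound for strong expanders, which \emph{does} work because min-degree $\ge 3d/4$ guarantees every ball (not just ``most'' balls) grows geometrically until it exceeds $n/2$.
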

The known \emph{explicit} constructions of Ramanujan graphs~\cite{LPS88,Mar88} yield  families of
$d$-regular strong expanders on $n$ vertices for infinitely many $n$'s.

\paragraph{Transforming BOT graphs into $3$-uniform hypergraph.}
To obtain the input data, we transform a BOT graph $G_{\phi}=(V, E)$ into a $3$-uniform hypergraph
$H_{\phi}=(V', E')$ as follows. The vertex set $V'$ is $V(G) \cup E(G)$, and by abuse of notation,
we will simply label the ``graph-vertex'' vertices by $v$ for every $v\in V(G)$, and 
label the ``graph-edge'' vertices in $V'$ by $(u,v)$ for every edge $(u,v)\in E(G)$.
Therefore the set of hyper-edges is 
$
E'=\{(a, v, (u,v)): (u, v)\in E(G)\}.
$
It follows that $H_{\phi}$ is a $3$-uniform hypergraph with\footnote{In order to not overload the statement in Theorem~\ref{thm:main} with multiple parameters, we may add isolated vertices
to graph $G_{\phi}$ so that $n=m$ and hence the ground set size is $N$ and the sample size is $m=N/2$.}
$N=|V'|=n+m$ and $|E'|=m$, where $n$ and $m$ denote the number of vertices and edges of the BOT graph $G_{\phi}$, respectively.

\subsubsection{The final reduction and proof of the soundness theorem}\label{sec:subsection-soundness-reduction}
What if we treat the set of hyperedges of $H_{\phi}$ as the training data $\{X_t\}_{t=1}^{m}$ and provide it as input to a DPP Maximum Likelihood Estimation (MLE) algorithm? Our final reduction aims to establish a gap property:
\begin{description}
\item[Completeness:] If the hypergraph $H_{\phi}$ (or equivalently, BOT graph $G_{\phi}$) is $3$-colorable, then the log-likelihood value of an optimal DPP for the sample set $\{X_t\}_{t=1}^{m}$ is low.
\item[Soundness:] If $H_{\phi}$ (or equivalently, BOT graph $G_{\phi}$) is $\epsilon'$-far from $3$-colorable, then the  log-likelihood value of any DPP for $\{X_t\}_{t=1}^{m}$ is high.
\end{description}
To achieve this, we first characterize the structure of optimal DPP kernels and establish a formal connection between the kernels learned from BOT hypergraphs and the \emph{vector coloring} problem.

\paragraph{DPP kernels and vector coloring.}  
Let $K$ be the marginal kernel of a DPP. Since $K$ is a positive semidefinite matrix, we can write $K$ as $K = Q^{\top}Q$ for some matrix $Q$. Let $q_1,\ldots,q_N \in \R^k$ be the columns of $Q$. We can further decompose these vectors as $q_i = \|q_i\|_{2} \chi_i$, where $\chi_i \in \R^k$ is a unit vector. The quantity $\|q_i\|_2$ is a measure of the ``importance'' of item $i$, and $\chi_i$ is a normalized vector which encodes diversity features of item $i$.  The entries of the marginal kernel satisfy $K_{ij} = \|q_i\|_2 \chi_i^{\top} \chi_j \|q_j\|_2$, 
where $\chi_i^{\top} \chi_j \in [-1,1]$ is a signed measure of the similarity between item $i$ and item $j$. In particular, the diagonal entries satisfy that $K_{ii} = \|q_i\|^{2}_{2}$ for every $i \in [N]$.  

As a first step, we prove the following theorem, which somewhat decouples good values for $\|q_i\|_2$ and $\chi_i$ for each item $i$, and identifies (from the training set) the value of the ``importance'' (i.e. value $\|q_i\|_2$) for each item.\footnote{It is no coincidence that our simple algorithm (see Section~\ref{sec:algorithm} for details), using this ``first-moment'' information from the training set in a similar manner, constructs its DPP that achieves nontrivial worst-case approximation to the optimal log likelihood.} Our result essentially determines at least one of the optimal settings of the importance of each item.  
\begin{theorem}\label{thm:diagonal}
Let $K$ be a marginal kernel with likelihood $\llhd(K)$.  
Then there exists a marginal kernel $K'$ with 
$\llhd(K')\leq \llhd(K)$ such that the diagonal of $K'$ (indexed by vertices and edges of $G_\phi$) satisfies
\[K'_{ii}=
\begin{cases}
\frac{\deg_{G_\phi}(u)}{m} & \mathrm{for \;} i=u \in V(G_\phi); \\
\frac{1}{m} & \mathrm{for \;} i=(u,v) \in E(G_\phi).
\end{cases}
\]
\end{theorem}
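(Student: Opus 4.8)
The plan is to argue that we can modify any marginal kernel $K$ so that each diagonal entry $K'_{ii}$ equals the empirical frequency of item $i$ in the training set — namely $\deg_{G_\phi}(u)/m$ for a graph-vertex $a_u$ and $1/m$ for a graph-edge vertex $a_{(u,v)}$ — without increasing the (negative) log likelihood $\ell$. Since each diagonal entry $K_{ii} = \|q_i\|_2^2$ is exactly the marginal probability $\Pr[i \in \bY]$, this is really a statement about optimizing the ``importance'' $\|q_i\|_2$ of each item while holding the normalized similarity vectors $\chi_i$ fixed. The key observation is that, in the $L$-ensemble formulation, $\log \Pr[\bY = S] = \log\det(L_S) - \log\det(I+L)$, and if we scale the $i$-th column $q_i \mapsto c\, q_i$ (equivalently rescale row/column $i$ of $K$, or of $L$), then every term $\log\det(L_S)$ for $S \ni i$ picks up an additive $\log c^2$ while terms for $S \not\ni i$ are unchanged; meanwhile $\log\det(I+L)$ changes in a controlled way. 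This should reduce the optimization of $\ell$ over the scale of a single item to a one-dimensional convex (in $\log c^2$, say) problem whose optimum we can identify explicitly.

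The steps, in order: (1) First I would recall the $L$-ensemble identity and verify that since the training sets are the hyperedges $X_t = \{a_u, a_v, a_{(u,v)}\}$, which are all nonempty, we may assume we are working with an $L$-ensemble (any optimal $K$ assigning positive probability to some sample has $\emptyset$ in its support after a limiting/perturbation argument, or one handles the boundary case separately as in \cite{BMRU17}). (2) Fix all $\chi_i$ and all $\|q_j\|_2$ for $j \neq i$, and write $\ell$ as a function $f(p)$ of the single marginal $p = K_{ii} = \|q_i\|_2^2 \in [0,1]$. Using multilinearity of the determinant in the $i$-th row and column, I expect $\det(L_S)$ to be affine in the appropriate parametrization, so that $-\log\Pr[\bY=S]$ splits into a part depending on whether $i \in S$ and the shared normalizer. (3) Count how many of the distinct training sets $X_t$ contain item $i$: exactly $\deg_{G_\phi}(u)$ of them contain $a_u$ (one per incident edge), and exactly $1$ contains $a_{(u,v)}$. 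Divide by $m = $ (number of distinct samples) to get the empirical frequency. (4) Show $f$ is minimized at $p = $ (that empirical frequency); I anticipate the stationarity condition will read like a weighted version of ``marginal = observed frequency,'' matching the classical fact that the MLE of a Bernoulli parameter is the sample mean, and that $f$ is convex so this critical point is the global minimum. (5) Apply this coordinate-wise, or all at once, to produce $K'$ with $\ell(K') \le \ell(K)$ and the claimed diagonal; check $K'$ remains PSD with eigenvalues in $[0,1]$ (rescaling a single column of $Q$ keeps $K' = (Q')^\top Q'$ PSD, and the eigenvalue-$\le 1$ constraint is exactly the constraint that $K'$ is a valid marginal kernel, which is where the boundary value $p \le 1$ matters).

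The main obstacle I expect is Step (2)/(4): disentangling the dependence of $\log\det(L_S)$ and of the normalizer $\log\det(I+L)$ on the single scale parameter, and proving convexity of the resulting one-variable objective. The subtlety is that changing $\|q_i\|_2$ changes $L = K(I-K)^{-1}$ nonlinearly, and the term $\log\det(I+L) = -\log\det(I-K)$ is not obviously convex in $K_{ii}$ when the off-diagonal entries co-vary (as they do, since $K_{ij} = \|q_i\|_2 \chi_i^\top\chi_j \|q_j\|_2$). The cleanest route is probably to parametrize directly in $\|q_i\|_2 = \sqrt{p}$ and use the matrix-determinant / Sherman–Morrison-type expansion for a rank-one modification of a column of $Q$, reducing $\det(I - K)$ and each $\det(K_S)$ to explicit quadratics or simple rational functions of $\sqrt p$; then a direct computation of the second derivative, or an appeal to log-concavity of determinants along such one-parameter families, gives convexity. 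A secondary technical point is ensuring that after fixing the diagonal at the empirical frequencies we haven't pushed any eigenvalue of $K'$ outside $[0,1]$; here I would note that the empirical frequency is at most $1$ by construction and argue (or cite the extension of \cite{BMRU17} the authors mention) that the PSD and spectral constraints are preserved, possibly processing the coordinates one at a time so that the feasibility invariant is easy to maintain.
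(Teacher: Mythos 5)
Your overall plan — reduce the question to a one-parameter optimization over the ``scale'' of a single item, show the optimum of that one-parameter problem sits exactly at the empirical frequency, and handle the boundary (non-$L$-ensemble) case by a perturbation/limit argument — is the same in spirit as the paper's proof of Lemma~\ref{lem:BMRU_diagonals}. The perturbation part of your step (1) matches the paper's treatment quite closely. However, there is a concrete gap in steps (2)--(4): the one-parameter family you propose to optimize along is the wrong one, and along it the claimed identity is false.

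Specifically, you propose to scale the column $q_i \mapsto c\,q_i$, which is the same as $K \mapsto D_c K D_c$ with $D_c = I + (c-1)e_i e_i^\top$. You assert this is ``equivalently'' rescaling row/column $i$ of $L$; that equivalence does not hold, because $L = K(I-K)^{-1}$ is nonlinear in $K$, and $D_c K D_c (I - D_c K D_c)^{-1} \ne D_c L D_c$. This distinction matters. Along the $K$-scaling curve, for $X \ni i$ one has $\Pr[\bY=X] = |\det(I_{\bar X} - D_c K D_c)| = c^2 |\det(I_{\bar X}-K)|$ as you expect, but for $X \not\ni i$ a rank-one (Sherman--Morrison-type) computation gives $\det(I_{\bar X}-D_cKD_c)=\det(I_{\bar X}-K)\bigl(a_X + c^2(1-a_X)\bigr)$ with $a_X = \bigl((I_{\bar X}-K)^{-1}\bigr)_{ii}$ depending on $X$ (not merely on $K_{ii}$), so the one-variable stationarity condition $\frac{f_i}{m\,c^2} = \frac{1}{m}\sum_{t: X_t\not\ni i}\frac{1-a_{X_t}}{a_{X_t}+c^2(1-a_{X_t})}$ is a weighted condition that does \emph{not} reduce to $K'_{ii}=f_i/m$. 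A two-element worked example makes this concrete: with $K = \bigl(\begin{smallmatrix} a & b \\ b & d \end{smallmatrix}\bigr)$ and samples $\{1\},\{1,2\}$, scaling column $2$ and optimizing yields $K'_{22} = 1/(2\sin^2\theta_{12})$, which exceeds $f_2/m = 1/2$ whenever $\chi_1$ and $\chi_2$ are not already orthogonal. So the one-dimensional minimizer, with the directions $\chi_j$ held fixed as you propose, is \emph{not} the empirical frequency.

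The fix — and the reason the paper's argument goes through cleanly — is to scale in $L$-space rather than $K$- or $Q$-space. Along the curve $L \mapsto D_c L D_c$ the training-set terms behave multilinearly in $c^2$ exactly as you want ($\det((D_cLD_c)_X) = c^2 \det(L_X)$ if $i\in X$, and is unchanged if $i\notin X$), and a Sherman--Morrison computation on the normalizer gives $\det(I+D_cLD_c) = \det(I+L)\bigl(c^2 K_{ii} + (1-K_{ii})\bigr)$, which is a function only of $K_{ii}$. The resulting one-variable stationarity condition does collapse to $K'_{ii} = f_i/m$, and the second derivative at that point is positive. Differentiating this curve at $c=1$ gives the tangent $H = \frac12 LT + \frac12 TL$ with $T = e_ie_i^\top$, which is precisely the test direction the paper plugs into the first-order condition $\sum_X D(X)\,\mathrm{Tr}(L_X^{-1}H_X) - \mathrm{Tr}((I+L)^{-1}H) = 0$. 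So the missing idea in your proposal is to change variables from ``scale a column of $Q$'' to ``scale a row/column of $L$'' — the two are genuinely different one-parameter families, and only the latter makes the stationarity condition read as ``marginal equals observed frequency.''
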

Thus, it remains to determine the diversity features that maximize the likelihood. For a training set of BOT hypergraph edges, an optimal DPP marginal kernel ``encodes'' a \emph{vector coloring} of the BOT graph nodes. Intuitively --- and provably so when the marginal kernel has rank $3$ --- this encoding assigns a real vector to each vertex such that adjacent vertices are nearly orthogonal, thereby maximizing the log-likelihood, i.e., minimizing \eqref{eqn:MLE}. 

To formalize this connection, recall that a standard $r$-coloring of a graph $G = (V, E)$ is a function $c : V(G) \to \{1, \dots, r\}$ such that $c(u) \neq c(v)$ for all $(u, v) \in E$. The vector coloring problem extends this to a continuous setting. Since there exist at most $r$ pairwise orthogonal unit vectors in the sphere $S^{r-1} = \{x \in \mathbb{R}^r : \|x\|_2 = 1\}$, the discrete palette of colors is replaced by the unit sphere $S^{r-1}$, and the non-monochromatic constraint is replaced by orthogonality. Specifically, an \emph{$r$-vector-coloring} of a graph $G$ is a function $\chi : V(G) \to S^{r-1}$. We say $\chi$ is \emph{perfect} if $\chi_u^\top \chi_v = 0$ for every $(u, v) \in E$, and a graph is \emph{$r$-vector-colorable} if it admits such a coloring.

When a graph $G$ is not perfectly $r$-vector-colorable, we must quantify its proximity to this state. We define the \emph{orthogonality} of an $r$-vector-coloring $\chi$ as the geometric mean of the squared sines of the angles between adjacent vertex vectors:
\[
\mathrm{orth}(G, \chi)= \left(\prod_{(u, v)\in E}\sin^{2}\theta(\chi(u), \chi(v)) \right)^{1/|E|}.
\]
Here, $0 \leq \mathrm{orth}(G, \chi) \leq 1$, where $\mathrm{orth}(G, \chi)=1$ if and only if the coloring is perfect. This leads us to the following optimization problem:
\begin{tcolorbox}[
    title=$r$-Vector-Coloring Problem,   
    halign title=center,        
    colback=white,              
    colframe=blue!75!black,              
    width=0.7\textwidth,        
    center,                     
    sharp corners,               
	toptitle=2mm,    
    bottomtitle=2mm, 
    top=-2mm,         
    bottom=5mm       
]

\begin{align*}
	\textbf{input:} \quad & \text{graph } G=(V, E) \text{ and integer } r \\
    \textrm{maximize} \quad & \mathrm{orth}(G, \chi) \\
    \textrm{subject to} \quad & \chi: V(G) \to S^{r-1} \\
	                          & (\text{i.e. } \chi(v) \in \mathbb{R}^r \text{ and } \|\chi(v)\|_2 = 1 \text{ for all } v\in V(G))
\end{align*}
\end{tcolorbox}

A key observation is that when the optimal marginal kernel $K$ has rank $3$, the proximity of its log-likelihood $\ell(K)$ to the theoretical optimum, as established by the completeness theorem for $3$-colorable BOT graphs $G_{\phi}$, implies that $K$ encodes an almost perfect $3$-vector-coloring of $G_{\phi}$. However, this relationship requires a formal statement of the completeness theorem to be made precise.
\begin{remark}\label{rem:vector-coloring}
Note that the vector coloring problem defined here is not part of the hardness of approximation reduction sequence. Rather, it serves as a bridge between a marginal kernel with a near-optimal log-likelihood and a proper $3$-coloring of the training BOT graph.    
\end{remark}

To facilitate the statement of the completeness theorem, we introduce the following definition.
\begin{definition}\label{def:ell-yes}
Let $G_{\phi}$ be a BOT graph constructed from a $3$-CNF formula $\phi$. Let $n=|V(G_{\phi})|$ be the number of vertices and $m=|E(G_{\phi})|$ be the number of edges of $G_{\phi}$, respectively. Define the following quantity
\[
\ell_{\textsc{yes}}(G_{\phi})=3\log{m}-\frac{1}{m}\sum_{(u,v)\in E(G_{\phi})}\left(\log(\deg_{G_{\phi}}(u))+\log(\deg_{G_{\phi}}(v))\right).
\]
as a function of the graph $G_{\phi}$. 
\end{definition}
Note that $\ell_{\textsc{yes}}(G_{\phi})$ is defined for any BOT graph $G_{\phi}$, regardless of whether it is $3$-colorable or not. Moreover, $\ell_{\textsc{yes}}(G_{\phi})$ can be computed in linear time by scanning the adjacency list of $G_{\phi}$.

Now, by combining Theorem~\ref{thm:diagonal} with the fact that any $3$-coloring of $G$ naturally induces
a $3$-vector-coloring of $G$, it is not hard to prove the following ``completeness'' theorem.
\begin{theorem}[Completeness theorem]\label{thm:completeness}
Let $G_{\phi}$ be a BOT graph. Then for any DPP marginal kernel $K$, $\ell(K) \geq \ell_{\textsc{yes}}(G_{\phi})$.
If $G_{\phi}$ is $3$-colorable, then there exists a rank-$3$ DPP marginal kernel $K$ such that
\[
\ell(K)=\ell_{\textsc{yes}}(G_{\phi}).
\]
\end{theorem}
That is, $\ell_{\textsc{yes}}(G_{\phi})$ provides an optimal value benchmark of the maximum log likelihood of any DPP kernel for a BOT graph $G_{\phi}$.

\paragraph{Projecting DPP kernels to $\R^3$.}  
Intuition suggests that the maximum likelihood marginal kernel has dimension $3$ so that
zero probability measure will be assigned for subsets of size at least $4$.
We were unable to prove this, 
but we nevertheless manage to show that the loss in making such an assumption is not too great:

\begin{theorem}\label{thm:dim3}
Let $G_{\phi}$ be a BOT graph with maximum degree at most $k$ and let $\ell_{\textsc{yes}}(G_{\phi})$ be as defined in Definition~\ref{def:ell-yes}. Then there is a constant $C_k$ depending only on $k$ such that the following holds. Let $K$ be an optimal marginal kernel for $G_{\phi}$ with likelihood $\llhd(K) \leq \ell_{\textsc{yes}}(G_{\phi}) + \delta$ for some $0 < \delta \leq 1/(128k)^2$, then there exists a marginal kernel $K'$ for $G_{\phi}$ of dimension $3$ such that 
$\ell(K') \leq \ell_{\textsc{yes}}(G_{\phi})+C_{k}\delta^{1/4}$.
\end{theorem}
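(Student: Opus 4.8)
The plan is to start from the near‑optimal kernel $K$ and build $K'$ by projecting the ``direction vectors'' of $K$ onto a carefully chosen $3$‑dimensional subspace $W$. First I would normalize: by Theorem~\ref{thm:diagonal} we may assume $K_{ii}$ equals the empirical frequency of item $i$ (this only decreases $\ell$, so $\ell(K)\le\ell^*+\delta$ still holds), whence $\operatorname{tr}(K)=3$. Writing $K=Q^\top Q$ with columns $q_i=\|q_i\|_2\,\chi_i$, the diagonal pins $\|q_i\|_2$ down, so only the unit directions $\chi_i$ are free, and for any $3$‑set $S=\{i,j,\ell\}$ we have $\det(K_S)=K_{ii}K_{jj}K_{\ell\ell}\cdot\operatorname{Gram}(\chi_i,\chi_j,\chi_\ell)$. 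Two elementary facts reduce the problem to pure geometry: (i) for every kernel and every $3$‑set $X_t$, $\Pr_{\calP_K}[\bY=X_t]\le\Pr_{\calP_K}[X_t\subseteq\bY]=\det(K_{X_t})$, hence $\ell(K)\ge-\frac1m\sum_t\log\det(K_{X_t})$; and (ii) for a \emph{rank-$3$} kernel $K'$ every inclusion--exclusion term of order $\ge4$ vanishes, so $\ell(K')=-\frac1m\sum_t\log\det(K'_{X_t})$ exactly. Thus it suffices to produce a rank-$3$ kernel $K'$ (with the same diagonal) for which $\frac1m\sum_{(u,v)\in E(G_\phi)}\log\frac{\det(K_{X_t})}{\det(K'_{X_t})}\le C_k\delta^{1/4}$, where $X_t=\{a_u,a_v,a_{(u,v)}\}$.

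Next I would exploit that each edge-item $a_{(u,v)}$ lies in exactly one hyperedge: having placed the vertex directions $\chi'_{a_w}$ inside $W$, I may take $\chi'_{a_{(u,v)}}$ to be the unit vector of $W$ orthogonal to $\chi'_{a_u},\chi'_{a_v}$, and restoring the diagonal to the empirical frequencies gives $\det(K'_{X_t})=K'_{a_ua_u}K'_{a_va_v}K'_{a_{(u,v)}a_{(u,v)}}\cdot\sin^2\beta_{uv}$, where $\beta_{uv}$ is the angle between the normalized projections onto $W$ of $\chi_{a_u}$ and $\chi_{a_v}$, and $\alpha_w=\angle(\chi_{a_w},W)$. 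Using $\operatorname{Gram}(\chi_{a_u},\chi_{a_v},\chi_{a_{(u,v)}})\le\sin^2\theta_{uv}$ (where $\theta_{uv}=\angle(\chi_{a_u},\chi_{a_v})$), the target becomes
\[
\frac1m\sum_{(u,v)\in E(G_\phi)}\log\frac{\sin^2\theta_{uv}}{\sin^2\beta_{uv}}\le C_k\delta^{1/4}.
\]
The heart of the argument, and the step I expect to be hardest, is a structural rigidity lemma for near-optimal kernels on a \emph{BOT} graph: if $\ell(K)$ is within $\delta$ of the $3$‑colourable value $\ell^*_{\mathrm{complete}}=3\log m-\frac1m\sum_{(u,v)}\bigl(\log\deg_{G_\phi}(u)+\log\deg_{G_\phi}(v)\bigr)$, then the vector colouring $\{\chi_{a_w}\}$ is, \emph{uniformly over all vertices and edges}, within $O_k(\delta^{1/4})$ of a genuine orthogonal $3$‑colouring by a fixed orthonormal triple $\{e_1,e_2,e_3\}$; in particular one may take $W=\operatorname{span}\{e_1,e_2,e_3\}$, and then $\sin^2\alpha_w=O_k(\delta^{1/4})$ for every $w$ and $\sin^2\theta_{uv}\ge1-O_k(\delta^{1/4})$ for every edge. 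I would prove this lemma as follows: $\ell(K)$ close to $\ell^*_{\mathrm{complete}}$ forces (by fact (i) and Hadamard, since the diagonal is fixed) the average of $\log\frac1{\operatorname{Gram}_t}$ to be $O(\delta)$; a Markov bound then makes all but an $O(\sqrt\delta)$-fraction of edges near-orthogonal; deleting the remaining $O(m\sqrt\delta)$ edges and invoking Theorem~\ref{thm:strong.expander} leaves a subgraph $H$ on almost all vertices with diameter $O(\log n)$ on which the colouring is near-orthogonal on \emph{every} edge; finally the triangles built into the BOT gadgets pin a local orthonormal triple at each gadget of $H$, and propagating these along the short paths of $H$, together with a global majority argument to damp accumulated drift, identifies a single triple to which the whole colouring is $O_k(\delta^{1/4})$-close. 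It is precisely the average-to-pointwise strengthening, from $O(\delta)$ on average to $O(\delta^{1/4})$ uniformly, that produces the exponent $1/4$.

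Granting the rigidity lemma, the rest is routine. From $\sin^2\alpha_u,\sin^2\alpha_v=O_k(\delta^{1/4})$ and the spherical identity $\cos\beta_{uv}=\bigl(\cos\theta_{uv}-\sin\alpha_u\sin\alpha_v\,\langle w_u,w_v\rangle\bigr)/(\cos\alpha_u\cos\alpha_v)$ for the off-$W$ unit components $w_u,w_v$, we get $\sin^2\beta_{uv}\ge\sin^2\theta_{uv}-O_k(\delta^{1/4})\ge1-O_k(\delta^{1/4})$, hence $\log\frac{\sin^2\theta_{uv}}{\sin^2\beta_{uv}}=O_k(\delta^{1/4})$ on every single edge; summing over the $m$ edges gives the displayed bound. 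One last point is that $K'$ is a legitimate marginal kernel: it has rank $\le3$ and $\operatorname{tr}(K')=3$, and by the rigidity lemma the color classes are $O_k(\delta^{1/4})$-balanced so that $Q'(Q')^\top$ is within $O_k(\delta^{1/4})$ of $I_3$; rescaling $Q'$ by $1+O_k(\delta^{1/4})$ then forces all eigenvalues of $K'$ into $[0,1]$ at a cost of $O_k(\delta^{1/4})$ in $\ell(K')$. Chaining everything, $\ell(K')=-\frac1m\sum_t\log\det(K'_{X_t})\le-\frac1m\sum_t\log\det(K_{X_t})+O_k(\delta^{1/4})\le\ell(K)+O_k(\delta^{1/4})\le\ell^*+C_k\delta^{1/4}$ after renaming the constant.

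The main obstacle is the rigidity lemma of the second paragraph: quantitatively controlling how close a near-orthogonal vector colouring of a BOT graph must be to an actual $3$‑colouring, and in particular upgrading the average-case near-orthogonality guaranteed by near-optimality of $\ell$ to a uniform per-vertex/per-edge bound. This is exactly why the reduction uses the \emph{very strong} expanders of Theorem~\ref{thm:strong.expander} (so that the graph stays low-diameter after deleting the $O(m\sqrt\delta)$ ``bad'' edges) rather than ordinary expanders, and why the triangle gadgets in the BOT construction are indispensable: it is they, and not mere near-orthogonality on edges, that force a near-orthonormal \emph{triple} of directions rather than an arbitrary near-orthogonal family spread over many dimensions.
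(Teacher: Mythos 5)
Your proposal takes a genuinely different route from the paper's, and unfortunately it hinges on a ``rigidity lemma'' that cannot hold at the quantitative level you need. The paper never proves anything like ``the vector coloring is \emph{uniformly} within $O_k(\delta^{1/4})$ of a fixed orthonormal triple.'' Instead, it chooses the projection subspace $V$ by \emph{geometric averaging over the training set}: since $\prod_t \Pr[\bY = X_t] \geq \exp(-\delta m)\prod_t\prod_{i\in X_t}\|q_i\|_2^2$, there is a single hyperedge $S=\{u,v,(u,v)\}$ with $\Pr[\bY=S]\geq e^{-\delta}\prod_{i\in S}\|q_i\|^2$, and $V=\mathrm{span}\{q_u,q_v,q_{(u,v)}\}$. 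It then bounds $\sum_{i\notin S}\|\mathrm{proj}_{V^\perp}q_i\|_2^2\leq\delta$ via the fact that conditioning a DPP on containing $S$ yields another DPP whose indicator variables are negatively associated (Lyons), so $\Pr[\bZ=\emptyset]\geq e^{-\delta}$ forces $\E|\bZ|\leq\delta$. The exponent $1/4$ then comes from a local threshold trade-off (declare $i$ ``bad'' when $\|\mathrm{proj}_{V^\perp}q_i\|_2^2\geq\sqrt\delta\|q_i\|_2^2$, giving $\leq\sqrt\delta m$ bad vertices and angle error $\delta^{1/4}$ for the good ones), plus a greedy spherical reassignment of bad vectors and a Frobenius-norm bound on the rescaling $\beta$. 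No global coherence of the coloring is ever established or needed.

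The gap in your route is concrete: the rigidity lemma fails in the regime where Theorem~\ref{thm:dim3} is actually stated. The hypothesis allows $\delta$ up to the \emph{constant} $1/(128k)^2$, and the theorem must also apply to near-optimal kernels on \textsc{No}-instance BOT graphs, for which no genuine orthogonal $3$-coloring exists, so ``closeness to a $3$-coloring'' is the wrong target. More decisively, even on \textsc{Yes} instances your error-propagation step does not close: after a Markov bound each surviving edge $(u,v)$ only guarantees $\theta_{(u,v)}\geq\pi/2-O(\sqrt{\delta/\eps''})$, and propagating a local orthonormal frame along paths of length $\mathrm{diam}(H)=\Theta(\log n)$ (Theorem~\ref{thm:strong.expander}) accumulates drift $\Theta(\sqrt\delta\,\log n)$, not $O_k(\delta^{1/4})$. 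A majority vote cannot damp this: the drift is systematic (a slow rotation of the frame), not zero-mean noise over independent paths. This is exactly why the paper's soundness argument (Theorem~\ref{thm:soundness}), which \emph{does} propagate frames along $O(\log n)$-length paths in a very strong expander, only works once $\delta$ has been pushed down to $O(1/\log^2 n)$ so that the per-edge angle error is $O(1/\log n)$ and the accumulated drift is $O(1)$. Folding that argument into Theorem~\ref{thm:dim3}, whose whole point is to hold at constant $\delta$ so that the soundness step can follow \emph{afterwards}, reverses the logical order of the reduction and produces a bound that blows up with $n$. Finally, your outline makes no use of the negative-association/conditional-DPP mechanism, which is the only tool in the paper controlling the mass of the vectors outside the target subspace; without it there is no way to bound the number of vectors whose projection onto $W$ is tiny, and those vectors, once rescaled back to their fixed diagonal norms, can wreck the determinants of the hyperedges they touch.
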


We conjecture that an even stronger statement is in fact true.
\begin{conjecture}[Cardinality-rank conjecture]\label{conj:3-dim}
Let $k\geq 1$ be an integer. If the cardinality of every subset in a training set is at most $k$, then every optimal maximum likelihood marginal kernel for the training set has dimension at most $k$.
\end{conjecture}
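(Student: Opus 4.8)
The plan is to prove the contrapositive in a sharp local form: if a marginal kernel $K$ for the training set has $\mathrm{rank}(K)>k$, then $K$ is not even a local minimizer of $\ell$ over the spectrahedron $\mathcal{K}=\{K:0\preceq K\preceq I\}$. The guiding intuition is the one already noted after Conjecture~\ref{conj:3-dim}: a rank-$r$ kernel forces $\Pr_{\bY\sim\calP_K}[\bY=S]>0$ for some $S$ with $|S|=r>k$, and every set of size larger than $k$ is a non-datum; since $\ell$ only rewards probability mass placed on the $X_t$'s, this surplus mass should be removable in favor of the data sets, decreasing $\ell$.

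Concretely I would proceed as follows. First, reduce to $L$-ensembles by induction on $k$: if $1$ is an eigenvalue of $K$, with eigenspace $W$ of dimension $f\ge 1$, then after arguing that at an optimum $W$ may be taken to be coordinate-aligned, $W=\mathrm{span}\{e_i:i\in F\}$ (otherwise a rotation of the eigenvalue-$1$ block gives a neutral-or-improving move), every $X_t$ with positive probability must contain $F$; conditioning on $F$ gives a DPP on $\mathcal{X}\setminus F$ whose data sets have size at most $k-f$, and $\mathrm{rank}(K)=f+\mathrm{rank}(\text{conditioned kernel})$, so the claim follows. In the remaining case $0\preceq K\prec I$ set $L=K(I-K)^{-1}$, so that $-\ell(L)=\sum_t w_t\log\det(L_{X_t})-\log\det(I+L)$ with $\sum_t w_t=1$, and use the standard gradients of $\log\det$. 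First-order optimality on the PSD cone then yields, at the optimal $L$, that $M:=(I+L)^{-1}-\sum_t w_t[(L_{X_t})^{-1}]^{\mathrm{pad}}\succeq 0$ and $ML=0$, where $[\,\cdot\,]^{\mathrm{pad}}$ denotes zero-padding outside the $X_t\times X_t$ block. The target is to rule out that such a stationary $L$ with $r:=\mathrm{rank}(L)>k$ is a minimum, via one of two routes: (a) show the system is outright inconsistent when $r>k$, which is plausible because $\sum_t w_t[(L_{X_t})^{-1}]^{\mathrm{pad}}$ is a sum of matrices of rank at most $|X_t|\le k$, each supported on the coordinates of $X_t$, whereas on $\mathrm{range}(L)$ the operator $(I+L)^{-1}$ acts with full rank $r$ (eigenvalues $\tfrac{1}{1+\lambda_i}$) and must still dominate the sum globally; or, failing that, (b) pick a top eigenvector $u\in\mathrm{range}(L)$ of $L$, consider the rank-reducing perturbation $L\mapsto L-t\,uu^{\top}$, and show that the second-order term of $\ell$ along it is negative when $r>k$, the slack coming exactly from the rank-$k$-versus-rank-$r$ mismatch. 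Either way $K$ is not optimal, contradiction.

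The real obstacle is route (b)'s second-order analysis — and, more fundamentally, the fact that $\ell$ is a difference of two concave functions ($\sum_t w_t\log\det(L_{X_t})$ is concave, and so is $-\log\det(I+L)$, but their sum need not be), so no convexity is available; along a rank-reducing direction the first derivative of $\ell$ at a critical point vanishes, so one genuinely must descend to second order, and there one confronts the same landscape in which \cite{BMRU17} exhibited an exponential family of saddle points and could only \emph{conjecture} that these exhaust the critical points. Thus ruling out a rank-$(>k)$ critical point being a genuine global minimum of $\ell$ appears to require a strong form of that conjecture. The eigenvalue-$1$ reduction is also delicate, since a priori the eigenspace need not be coordinate-aligned and the rotation argument must be carried out with care. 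Finally, Theorem~\ref{thm:dim3} already implements a quantitatively lossy version of exactly the desired projection (it produces a rank-$3$ kernel losing only $C_k\delta^{1/4}$ in likelihood when $\ell(K)\le\ell^*+\delta$), so perhaps the most promising concrete route to the full conjecture is to sharpen that projection so the loss vanishes at $\delta=0$; doing so seems to demand precisely the understanding of the critical set that the \cite{BMRU17} conjecture asks for.
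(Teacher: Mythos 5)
The statement is Conjecture~\ref{conj:3-dim}, which the paper explicitly leaves open: the authors write that they were unable to prove it and establish only the quantitatively lossy Theorem~\ref{thm:dim3} as a substitute. There is therefore no proof in the paper to compare against.

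Your write-up is best read as a reconnaissance of why the conjecture is hard, and to your credit you say as much at the end. The gap you flag is the central one: at a rank-$(>k)$ critical point the first variation of $\ell$ along a rank-reducing direction vanishes, so one must descend to second order, and there one hits exactly the landscape in which \cite{BMRU17} exhibited an exponential family of saddle points without being able to show these exhaust the critical set. Beyond that, two of your intermediate steps are themselves unproven claims. First, the reduction via the eigenvalue-$1$ eigenspace assumes $W$ can be made coordinate-aligned by a neutral-or-improving rotation; this is not obvious, since rotating the eigenvalue-$1$ block changes which subsets receive positive mass and can change $\ell$ in either direction, and in general the eigenvalue-$1$ eigenspace of an optimal kernel need not be axis-aligned. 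Second, route (a)'s inconsistency claim does not follow from rank counting: the sum $\sum_t w_t\,[(L_{X_t})^{-1}]^{\mathrm{pad}}$ of rank-$\le k$ matrices over many $t$ can have rank far exceeding $r$, so the stationarity system $M\succeq 0$, $ML=0$ is not obviously overdetermined when $r>k$. In short, this is an honest diagnosis of the obstruction rather than a proof; the conjecture remains open, and your identification of the \cite{BMRU17} critical-point conjecture as the bottleneck is consistent with why the paper settles for the weaker Theorem~\ref{thm:dim3} instead.
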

This conjecture may be of independent interest outside the realm of maximum likelihood learning of DPPs.

\paragraph{Decoding a 3-coloring from an almost vector coloring.}
Because of Theorem~\ref{thm:dim3}, from now on we assume that the dimension of $Q$ is $3$. Therefore, for each $(u, v, (u,v)) \in E'$,   
\[
\Pr_{\bY \sim \calP_K}[\bY = \{u, v, (u,v)\}] = \det(K_{\{u, v, (u,v)\}}),
\]
where $K = Q^{\top} Q$ is the marginal kernel of a DPP $\calP$. To maximize the likelihood, we want to maximize the product of determinants of the above form. Since each $(u,v)$ occurs in only one example set, we can assume that $\chi_{(u,v)}$ is always taken to be orthogonal to $\chi_{u}$ and $\chi_{v}$. Thus, the likelihood contribution from $(u, v, (u,v))$ is maximized when $\chi_{u}^{\top} \chi_{v} = 0$; or equivalently, when $\chi_{u}$ and $\chi_{v}$ are orthogonal. We formally prove this correspondence between the ``orthogonality'' of the associated  $3$-vector-coloring of $G_{\phi}$ and the likelihood of the corresponding DPP with marginal kernel $K = Q^{\top} Q$, where the column of $Q$ corresponding to vertex $u$ is $\|q_{u}\|_2 \chi_{u}$:
\begin{align*}
\ell(K) 
&= 3\log{m}-\frac{1}{m}\sum_{(u,v)\in E(G_{\phi})}\left(\log(\deg_{G_{\phi}}(u))+\log(\deg_{G_{\phi}}(u))\right)-\log\left(\mathrm{orth}(G_{\phi}, \chi) \right) \\
&= \ell_{\textsc{yes}}(G_{\phi})-\log\left(\mathrm{orth}(G_{\phi}, \chi) \right),
\end{align*}
see \eqref{eqn:ell(K)-orth} in Section~\ref{sec:dim3-overview}. In other words, the orthogonality $\mathrm{orth}(G_{\phi}, \chi)$ of the $3$-vector-coloring characterizes how closely the log-likelihood of the kernel $K$ approaches the optimal value for $G_{\phi}$---a value that is attainable only when $G_{\phi}$ is $3$-colorable.

More importantly, we demonstrate that when the $3$-vector-coloring $\chi$ is nearly perfect (i.e., $\mathrm{orth}(G_{\phi}, \chi) \approx 1$), the continuous vectors $\{\chi(v)\}_{v\in V(G_{\phi})}$, which fuzzily encode a discrete $3$-coloring, can be successfully decoded. Specifically, we show that a proper $3$-coloring of $V(G_{\phi})$ can be recovered, provided a small fraction of ``noisy'' edges is removed.

\begin{theorem}[Soundness theorem]\label{thm:soundness}
Let $\ell_{\textsc{yes}}(G_{\phi})$ be as defined in Definition~\ref{def:ell-yes}, namely, the optimal log-likelihood assuming that $\phi$ were a YES instance (cf.~Theorem~\ref{thm:completeness}). Let $k$ and $\eps'$ be as those defined in Lemma~\ref{lem:BOT-graph} and let $n=|V(G_{\phi})|$ be the number of vertices in the BOT graph. Then there exists a constant $C=C(k,\eps')$ such that the following holds: If there exists a DPP marginal kernel $K$ for $G_{\phi}$ of rank $3$ satisfying 
\[
\ell(K)\leq \ell_{\textsc{yes}}(G_{\phi})+\frac{C}{\log^2{n}},
\]
then $G_{\phi}$ is $\eps'$-close to $3$-colorable. That is, there exists a set $E' \subset E(G_{\phi})$ with $|E'| \leq \eps'|E(G_{\phi})|$ such that $G_{\phi}\setminus E'$ is $3$-colorable. 
\end{theorem}
Figure~\ref{fig:soundness} shows how the main arguments in the proof of the soundness theorem fit together.

To see why Theorem~\ref{thm:soundness} implies our main theorem, Theorem~\ref{thm:main}, consider that we start the reduction with two families of instances $\phi_{Y}$ and $\phi_{N}$ in \textsc{Max-$3$SAT($k$)} which are NP-hard to distinguish. We then construct their corresponding BOT hypergraphs, $H_{\phi_{Y}}$ and $H_{\phi_{N}}$, and use the edge sets of these two hypergraphs as training sets of size $m$ for a DPP maximum likelihood learning algorithm. The log likelihood value of $\phi\in \phi_{Y}$ is $\ell_{\textsc{yes}}(G_{\phi})=\Theta(\log{N})$ by Theorem~\ref{thm:completeness}. Furthermore, both of the two reductions, as well as the value of $\ell_{\textsc{yes}}(G_{\phi})$, can be computed in polynomial time. What is the log likelihood value of $\phi\in \phi_{N}$? Well, by Theorem~\ref{thm:dim3}, if the marginal kernel of $G_\phi$ has log likelihood $\llhd(K) \leq \ell_{\textsc{yes}}(G_{\phi}) + \delta$ for some small enough $\delta > 0$, then there exists a marginal kernel $K'$ of dimension $3$ such that $\llhd(K') \le \ell_{\textsc{yes}}(G_{\phi}) + C_k\delta^{1/4}$. By Theorem~\ref{thm:soundness}, we must have $\delta=\Omega(\frac{1}{\log^2{N}})$ for our $\phi\in \phi_{N}$. That is, the log likelihood value of $\phi$ is $\ell_{\textsc{yes}}(G_{\phi})+\Omega(\frac{1}{\log^8{N}})$.
Now if there were a polynomial-time algorithm $\mathcal{A}$ which approximates the log likelihood value within a factor of $1-1/\Omega(\log^9{N})$, then $\mathcal{A}$ would be able to tell apart $\phi_{Y}$ from $\phi_{N}$, thus solving an NP-complete problem, simply by approximating the maximum log likelihood value on training data obtained from $H_{\phi_{Y}}$ and $H_{\phi_{N}}$, respectively.

\begin{figure}
\centering
\tikzset{
    block/.style={draw, fill=blue!20, rectangle, minimum height=3em, minimum width=6em, align=center},
    input/.style={coordinate},
    output/.style={coordinate}
}
\begin{tikzpicture}[node distance=0.80cm, auto, >=Latex]
    \node [block] (3SAT) {DPP with ``almost'' \\ optimal DPP kernel};
    
    \node [block, right=of 3SAT] (BOT) {``almost'' optimal \\ rank-$3$ kernel};  
    \draw [->] (3SAT) -- (BOT);

    \node [block, right=of BOT] (hypergraph) {``almost'' perfect $3$-vector- \\ coloring of $G'_{\phi}$};
    \draw [->] (BOT) -- (hypergraph);

    \node [block, right=of hypergraph] (DPP) {$3$-coloring \\ of $G''_{\phi}$};
    \draw [->] (hypergraph) -- (DPP);
\end{tikzpicture}
\caption{The overall argumentative structure of the proof of the soundness theorem ($G'_{\phi}$ is an intermediate graph between $G_{\phi}$ and $G''_{\phi}$, and $G''_{\phi}$ is $\eps'$-close to $G_{\phi}$)}
\label{fig:soundness}
\end{figure}

\section{Robustness of very strong expanders -- Proof of Lemma~\ref{lem:BOT-graph}} \label{sec:BOT}

\subsection{BOT graphs}\label{sec:BOT-graph-construction}
We now provide a detailed description and analysis of the construction in~\cite{BOT02}, 
as our hardness proof relies crucially on the properties of BOT graphs.

Recall that in the classic reduction from $3$-SAT to \textsc{$3$-Coloring}, there are three designated nodes
\textsc{True}, \textsc{False} and \textsc{Dummy} which form a triangle so that in any $3$-coloring of the graph, they
must be assigned to different colors. The reduction then connects the nodes in literal gadgets and clause gadgets
to these three nodes so that every literal is connected to the \textsc{Dummy} node, and 
any $3$-coloring of the graph can be decoded into a satisfying assignment simply 
by reading the color of every literal node.
On the other hand, any satisfying assignment of the variables can also be transformed 
into a valid $3$-coloring of the graph 
by assigning ``True'' literal nodes the same color as the \textsc{True} node 
and  assigning ``False'' literal nodes the same color as the \textsc{False} node. 

The basic idea of the construction in~\cite{BOT02} is to make $k$ ``local'' copies for each literal node
as well as for the three designated \textsc{True}, \textsc{False} and \textsc{Dummy} nodes,
so that each local copy of these nodes is used only \emph{once} in the clause gadgets.
The benefit of doing so is that deleting any single node or any single edge will affect (or ``destroy'')
a single clause, hence makes the reduction from $3$-SAT to \textsc{$3$-Coloring} (up to a constant factor) 
distance-preserving.
However, this introduces a new difficulty: we need to make sure that all $k$ copies of literal nodes
(\textsc{True}, \textsc{False} and \textsc{Dummy} nodes) are colored identically.
Such consistency is ensured by embedding these local copies on an expander
so that the graph is robust against deletions of a small number of edges or nodes, 
and at the same time the increase in the maximum degree is bounded.

Specifically, given a Boolean formula $\phi$ in $3$-CNF form with $n$ variables and $m$ clauses, and such that 
each literal appears in at most $k$ clauses, construct an undirected graph $G_{\phi}(V, E)$ as follows:
\begin{enumerate}
    \item \label{step:one} Create $2kn$ literal nodes: for each variable $x_i$, $1\leq i \leq n$, 
	create $k$ copies of it, $x_i^{(j)}$ for $j=1,\ldots, k$, 
	and $k$ copies of its negation, ${\bar x_i}^{(j)}$ for $j=1,\ldots, k$;
    \item For each of the $2kn$ literal nodes $x_i^{(j)}$ (resp. $\bar{x}_i^{(j)}$), create a block 
	 called the \emph{$x_i^{(j)}$-block} (resp. \emph{$\bar{x}_i^{(j)}$-block}) of $3$ color nodes  
	$T_i^{(j)}, F_i^{(j)}, D_i^{(j)}$ (resp. $\bar{T}_i^{(j)}, \bar{F}_i^{(j)}, \bar{D}_i^{(j)}$),
	which serve as local copies of the \textsc{True}, \textsc{False} and \textsc{Dummy} nodes.
	Add a triangle on $(T_i^{(j)}, F_i^{(j)}, D_i^{(j)})$ (resp. $(\bar{T}_i^{(j)}, \bar{F}_i^{(j)}, \bar{D}_i^{(j)})$) and edges between $x_j^{(i)}$ (resp.\ $\bar{x}_i^{(j)}$) and $D_i^{(j)}$ (resp.\ $\bar{D}_i^{(j)}$)
	  for each literal block;
    \item For each literal, add an edge between the literal node and its negation node, 
	so that literals and their negations will always be assigned to different colors. 
	These edges, as well as the $x_i^{(j)}$-block and the $\bar{x}_i^{(j)}$-block,
	are illustrated in Fig.~\ref{fig:literal_block};
	\item For each literal $x_i$ (resp. $\bar{x}_i$), use the \emph{equality gadget} as illustrated in
	Fig.~\ref{fig:equality_gadget} to connect between each pair of its $k$ copies. That is,
	create a ``$k$-clique'' among $\{x_i^{(j)}\}_{j\in [k]}$ (resp. $\{\bar{x}_i^{(j)}\}_{j\in [k]}$) such that
	each ``edge'' in the ``clique'' is actually an independent copy of the equality gadget;
	\item For each clause, add a \emph{clause gadget} to connect $3$ literals in that clause together with their 
	corresponding \textsc{True} nodes (in the blocks that are created in step~\ref{step:one})
	with $6$ auxiliary nodes as shown in Fig~\ref{fig:clause_gadget}. 
	Note that all literal nodes	and \textsc{True} nodes in the clause gadgets are used only once; and
	\item \label{step:six} Build a constant-degree expander graph of size $2kn$, with $x_i^{(j)}$-blocks and $\bar{x}_i^{(j)}$-blocks,
	$1\leq i \leq n$ and $1\leq j \leq k$, as its vertices.
	 Then, for any two blocks that are connected by an edge in the expander, 
	 connect their corresponding \textsc{True} nodes (resp. \textsc{False} nodes and \textsc{Dummy} nodes)
	 with an independent copy of the equality gadget. See Fig.~\ref{fig:BOT_graph} for an illustration.  
    \end{enumerate}

Note that for a bounded occurrence instance $\phi$ of $3$-CNF formula, the number of clauses is at most
$m\leq 2kn/3$. 

\usetikzlibrary{shapes, snakes}

\tikzstyle{r_node}=[draw,circle, minimum size=1cm, violet, fill={rgb:red,10;green,1;blue,1}, text=white,minimum width=25pt]
\tikzstyle{g_node}=[draw,circle, violet, fill={rgb:red,1;green,10;blue,1}, text=violet,minimum width=25pt]
\tikzstyle{b_node}=[draw,circle, violet, fill={rgb:white,5;blue,10}, text=white,minimum width=25pt]
\tikzstyle{w_node}=[draw,circle, violet, fill={white}, text=black,minimum width=25pt]
\tikzstyle{w2_node}=[draw,ellipse, violet, fill={white}, text=black,minimum width=25pt]
\tikzstyle{bl_node}=[draw,circle, violet, fill={black}, text=white,minimum width=25pt]

\tikzstyle{g3_node}=[draw,circle, minimum size=1.5cm, violet, fill={rgb:red,1;green,10;blue,1}, text=violet,minimum width=50pt]
\tikzstyle{b3_node}=[draw,circle, minimum size=1.5cm, violet, fill={rgb:white,5;blue,10}, text=white,minimum width=50pt]
\tikzstyle{w3_node}=[draw,circle, minimum size=1.5cm, violet, fill={white}, text=black,minimum width=50pt]
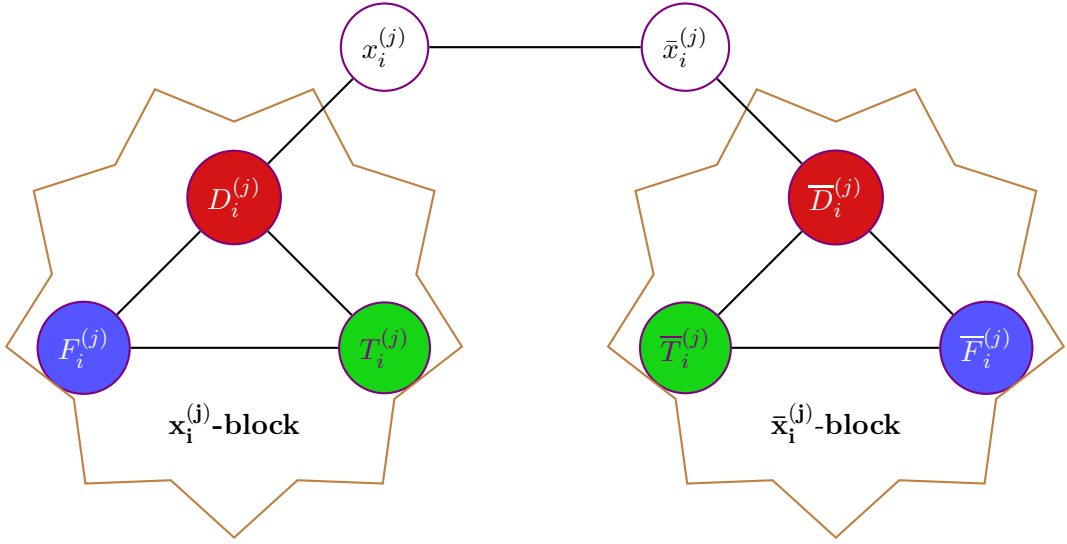
\begin{figure}
    \centering	
\begin{tikzpicture}[auto, thick]

\node[r_node] (Di)  at (-4,0) {$D_i^{(j)}$} ;
\node[b_node] (Fi)  at (-6,-2) {$F_i^{(j)}$} ;
\node[g_node] (Ti)  at (-2,-2) {$T_i^{(j)}$} ;

\node[r_node] (bDi)  at (4,0) {$\overline{D}_i^{(j)}$} ;
\node[b_node] (bFi)  at (6,-2) {$\overline{F}_i^{(j)}$} ;
\node[g_node] (bTi)  at (2,-2) {$\overline{T}_i^{(j)}$} ;

\node[star,star points=9,star point ratio=0.8,  minimum width=140pt, draw=brown ] at (-4,-1.45) {};
\node (x_block) [] at (-4, -3) {${x}_i^{(j)}$-\textrm{block}};
\node[star,star points=9,star point ratio=0.8,  minimum width=140pt, draw=brown ] at (4,-1.45) {};
\node (bx_block) [] at (4, -3) {$\bar{x}_i^{(j)}$-\textrm{block}};

\node[w_node] (xi)  at (-2,2) {$x_i^{(j)}$} ;
\node[w_node] (bxi)  at (2,2) {$\bar{x}_i^{(j)}$} ;
\foreach \source/\dest in {Di/Fi, Fi/Ti, Ti/Di, bDi/bFi, bFi/bTi, bTi/bDi, xi/bxi, xi/Di, bxi/bDi}
    \path (\source) edge (\dest);
\end{tikzpicture}
	
\caption{Literal blocks. This gadget enforces that the $j^{\text{th}}$ copy of literal $x_i$ and $\bar{x}_i$
will always be assigned opposite truth values, as long as the assignments of \textsc{True}, \textsc{False} 
and \textsc{Dummy} nodes in $x_i^{(j)}$-block and $\bar{x}_i^{(j)}$-block are consistent 
with their corresponding nodes in the rest of the graph.} 
\label{fig:literal_block}
\end{figure}

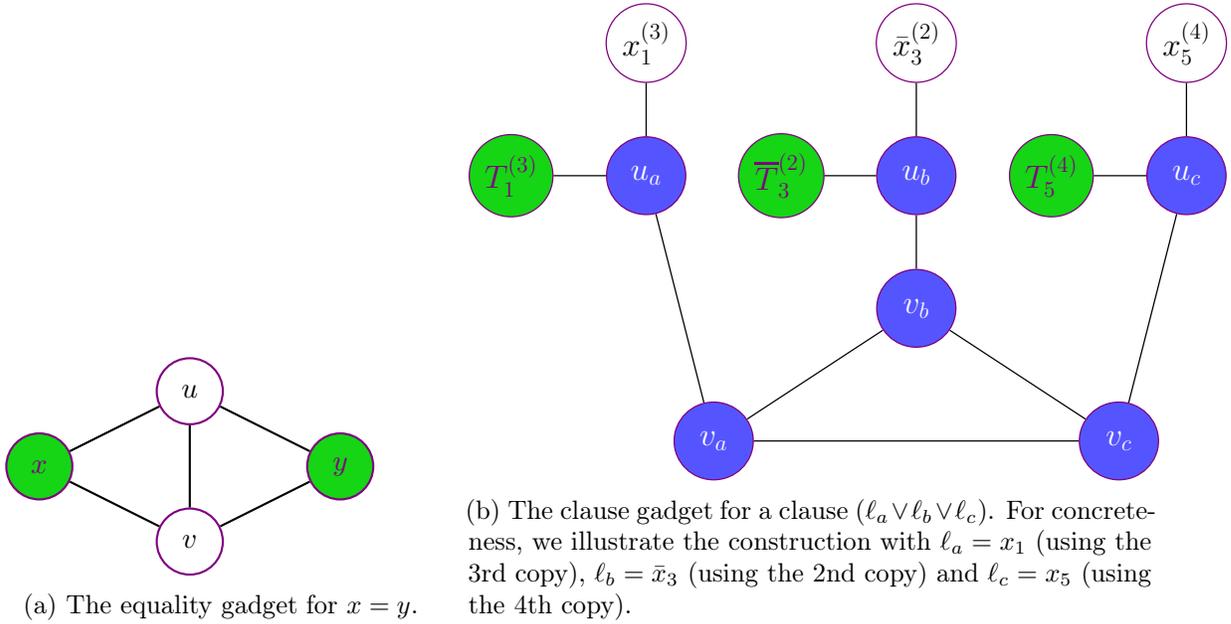
\begin{figure}
    \centering
    \begin{subfigure}[b]{0.3\textwidth}
        \begin{tikzpicture}[auto, thick]

		\node[g_node] (x)  at (0,0) {$x$} ;
		\node[g_node] (y)  at (4,0) {$y$} ;
		\node[w_node] (u)  at (2,1) {$u$} ;
		\node[w_node] (v)  at (2,-1) {$v$} ;

		\foreach \source/\dest in {x/u, x/v, y/u, y/v, u/v}
			\path (\source) edge (\dest);
		\end{tikzpicture}
        \caption{Equality gadget for $x=y$.}
        \label{fig:equality_gadget}
    \end{subfigure}
    ~ 
    \begin{subfigure}[b]{0.5\textwidth}
	\resizebox{3.5in}{2.25in}{%
        \begin{tikzpicture}[auto, thick]
		\node[g3_node] (T1)  at (-9,0) {\huge $T_{1}^{(3)}$} ;
		\node[g3_node] (T2)  at (-3,0) {\huge $\overline{T}_{3}^{(2)}$} ;
		\node[g3_node] (T3)  at (3,0) {\huge $T_{5}^{(4)}$} ;

		\node[w3_node] (l1)  at (-6,3) {\huge $x_{1}^{(3)}$} ;
		\node[w3_node] (l2)  at (0,3) {\huge $\bar{x}_{3}^{(2)}$} ;
		\node[w3_node] (l3)  at (6,3) {\huge $x_{5}^{(4)}$} ;

		\node[b3_node] (u1)  at (-6,0) {\huge $u_a$} ;
		\node[b3_node] (u2)  at (0,0) {\huge $u_b$} ;
		\node[b3_node] (u3)  at (6,0) {\huge $u_c$} ;

		\node[b3_node] (v1)  at (-4.5,-6) {\huge $v_a$} ;
		\node[b3_node] (v2)  at (0,-3) {\huge $v_b$} ;
		\node[b3_node] (v3)  at (4.5,-6) {\huge $v_c$} ;

		\foreach \source/\dest in {T1/u1, T2/u2, T3/u3, l1/u1, l2/u2, l3/u3, v1/v2, v2/v3, v3/v1, v1/u1, v2/u2, v3/u3}
			\path (\source) edge (\dest);
		\end{tikzpicture}
	}
        \caption{The clause gadget for a clause $(\ell_a \lor \ell_b \lor \ell_c)$. For concreteness, we illustrate
		the construction with $\ell_a=x_1$ (using the $3$rd copy), $\ell_b=\bar{x}_3$ (using the $2$nd copy) 
		and $\ell_c= x_5$ (using the $4$th copy).}
        \label{fig:clause_gadget}
    \end{subfigure}
    ~ 
    \caption{Two basic gadgets of BOT graphs.}\label{fig:gadget_1}
\end{figure}

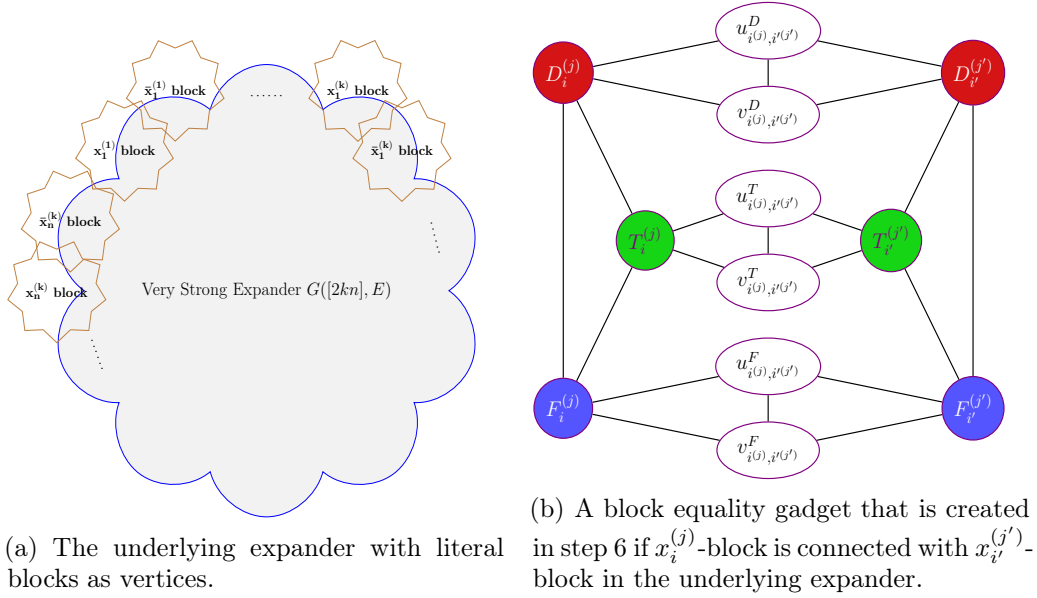
\begin{figure}
    \centering
    \begin{subfigure}[b]{0.40\textwidth}
	\resizebox{2.5in}{2.5in}{%
        \begin{tikzpicture}[auto, thick]
		\node[cloud, draw =blue,text=black,fill = gray!10, minimum width=50pt] (c) at (0,0) 
		     {\Large Very Strong Expander $G([2kn], E)$};

		\node[star,star points=9,star point ratio=0.8,  minimum width=5pt, draw=brown ] at (-4.2,4)  {\textbf{$\mathbf{x_1^{(1)}}$ block}};
		\node[star,star points=9,star point ratio=0.8,  minimum width=5pt, draw=brown ] at (-2.7,5.7)  {\textbf{$\mathbf{\bar{x}_1^{(1)}}$ block}};

		\node [] at (0, 5.5) {$\mathbf{\cdots \cdots}$};

		\node[star,star points=9,star point ratio=0.8,  minimum width=5pt, draw=brown ] at (4,4)  {\textbf{$\mathbf{\bar{x}_1^{(k)}}$ block}};
		\node[star,star points=9,star point ratio=0.8,  minimum width=5pt, draw=brown ] at (2.7,5.7)  {\textbf{$\mathbf{{x}_1^{(k)}}$ block}};

		\node [rotate=105] at (5,1.5) {$\mathbf{\cdots \cdots}$};
		\node [rotate=110] at (-5,-1.8) {$\mathbf{\cdots \cdots}$};

		\node[star,star points=9,star point ratio=0.8,  minimum width=5pt, draw=brown ] at (-5.8,2.0)  {\textbf{$\mathbf{\bar{x}_n^{(k)}}$ block}};
		\node[star,star points=9,star point ratio=0.8,  minimum width=5pt, draw=brown ] at (-6.2,0)  {\textbf{$\mathbf{{x}_n^{(k)}}$ block}};
		
		\end{tikzpicture}
	}
        \caption{The underlying expander with literal blocks as vertices.}
        \label{fig:expander}
    \end{subfigure}
    ~ 
    \begin{subfigure}[b]{0.40\textwidth}
	\resizebox{2.5in}{2.5in}{%
        \begin{tikzpicture}[auto, thick]
		\node[r_node] (Di)  at (-5,4) {\Large $D_i^{(j)}$} ;
		\node[b_node] (Fi)  at (-5,-4) {\Large $F_i^{(j)}$} ;
		\node[g_node] (Ti)  at (-3,0) {\Large $T_i^{(j)}$} ;

		\node[r_node] (bDi)  at (5,4) {\Large $D_{i'}^{(j')}$} ;
		\node[b_node] (bFi)  at (5,-4) {\Large $F_{i'}^{(j')}$} ;
		\node[g_node] (bTi)  at (3,0) {\Large $T_{i'}^{(j')}$} ;

		\node[w2_node]  (uT) at (0,1) {\Large $u^{T}_{i^{(j)},i'^{(j')}}$};
		\node[w2_node]  (vT) at (0,-1) {\Large $v^{T}_{i^{(j)},i'^{(j')}}$};

		\node[w2_node]  (uD) at (0,5) {\Large $u^{D}_{i^{(j)},i'^{(j')}}$};
		\node[w2_node]  (vD) at (0,3) {\Large $v^{D}_{i^{(j)},i'^{(j')}}$};

		\node[w2_node]  (uF) at (0,-3) {\Large $u^{F}_{i^{(j)},i'^{(j')}}$};
		\node[w2_node]  (vF) at (0,-5) {\Large $v^{F}_{i^{(j)},i'^{(j')}}$};

\foreach \source/\dest in {Di/Fi, Fi/Ti, Ti/Di, bDi/bFi, bFi/bTi, bTi/bDi, Ti/uT, uT/bTi, bTi/vT, vT/Ti, uT/vT, Di/uD, uD/bDi, bDi/vD, vD/Di, uD/vD,  Fi/uF, uF/bFi, bFi/vF, vF/Fi, uF/vF}
    \path (\source) edge (\dest);
		\end{tikzpicture}
	}
        \caption{A block equality gadget that is created in step~\ref{step:six} if $x_{i}^{(j)}$-block is connected 
		    with $x_{i'}^{(j')}$-block in the underlying expander.}
        \label{fig:expander_edge}
    \end{subfigure}
    ~ 
    \caption{The expander part of BOT graph.}\label{fig:BOT_graph}
\end{figure}

\subsection{Properties of BOT graphs and proof of Lemma~\ref{lem:BOT-graph}}
We first record the following correctness result of $G_{\phi}$ from~\cite{BOT02}, which follows a similar proof
of the classic reduction from $3$-SAT to \textsc{$3$-Coloring}.
\begin{prop}[\cite{BOT02}]\label{prop:BOT_correctness}
The $3$-CNF formula $\phi$ is satisfiable if and only if the graph $G_{\phi}$ is $3$-colorable.
\end{prop}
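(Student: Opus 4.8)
The plan is to follow the template of the classical reduction from $3$-SAT to \textsc{$3$-Coloring}, isolating two gadget facts and then gluing them together across the expander. Fix three colors, named \textsc{True}, \textsc{False}, \textsc{Dummy}. The first fact concerns the equality gadget of Fig.~\ref{fig:equality_gadget}: in any proper $3$-coloring its two designated endpoints $x,y$ receive the same color (the auxiliary vertices $u,v$ are mutually adjacent and each adjacent to both $x$ and $y$, so if $x,y$ differed then $u$ and $v$ would both be forced to the unique remaining color, contradicting the edge $uv$); conversely, whenever $x,y$ are already colored alike, assigning $u,v$ the other two colors extends the coloring. The second fact concerns the clause gadget of Fig.~\ref{fig:clause_gadget}, with the \textsc{True}-copies $T_a,T_b,T_c$ pre-colored \textsc{True} and the literal vertices $\ell_a,\ell_b,\ell_c$ pre-colored in $\{\textsc{True},\textsc{False}\}$: since each $u_\bullet$ is adjacent to a \textsc{True}-copy and to $\ell_\bullet$, a \textsc{False} literal forces $u_\bullet=\textsc{Dummy}$, and as $v_a,v_b,v_c$ form a triangle (hence use all three colors) with $v_\bullet$ adjacent to $u_\bullet$, the coloring extends to the gadget if and only if at least one of $\ell_a,\ell_b,\ell_c$ is \textsc{True}. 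One implication is a one-liner (all literals \textsc{False} $\Rightarrow$ all $u_\bullet=\textsc{Dummy}$ $\Rightarrow$ the $v$-triangle cannot be rainbow); the other I would verify by a short case analysis on the number of \textsc{True} literals, exploiting the freedom to color a $u_\bullet$ below a \textsc{True} literal by either \textsc{False} or \textsc{Dummy} so as to realize a rainbow on the $v$-triangle.

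Granting these two facts, the direction ``$\phi$ satisfiable $\Rightarrow$ $G_\phi$ is $3$-colorable'' is mechanical. Fix a satisfying assignment; color every block triangle $(T_i^{(j)},F_i^{(j)},D_i^{(j)})$ by $(\textsc{True},\textsc{False},\textsc{Dummy})$, color each literal vertex $x_i^{(j)}$ \textsc{True} or \textsc{False} according to the truth value of $x_i$, and color $\bar x_i^{(j)}$ by the complementary color. Every equality gadget --- those in the Step~4 cliques among the $k$ copies of a literal, and those in Step~6 joining like-named color vertices of expander-adjacent blocks --- then links two equally-colored endpoints, hence extends by the first fact; the Step~3 literal/negation edges and the literal--\textsc{Dummy} edges inside each block (Fig.~\ref{fig:literal_block}) are properly colored by construction; and each clause gadget extends by the second fact, since its clause is satisfied. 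Hence the partial coloring extends to all of $G_\phi$.

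For the converse, take any proper $3$-coloring of $G_\phi$. The Step~6 equality gadgets placed along the (connected) expander, together with the first fact, force all \textsc{True}-copies across the $2kn$ blocks to share one color, all \textsc{False}-copies to share another, and all \textsc{Dummy}-copies a third; since within a single block these three vertices form a triangle, the three shared colors are distinct, so after renaming we may speak of \textsc{True}, \textsc{False}, \textsc{Dummy} consistently throughout the graph. Each literal vertex $x_i^{(j)}$ is adjacent to $D_i^{(j)}$, hence is colored \textsc{True} or \textsc{False}; the Step~4 cliques make this independent of $j$, and the Step~3 edge makes $\bar x_i^{(j)}$ the opposite of $x_i^{(j)}$. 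Declare $x_i$ true iff $x_i^{(j)}$ is colored \textsc{True}; this is a well-defined assignment. Finally, applying the contrapositive of the second fact to each clause gadget (whose $T$-vertices are \textsc{True} and whose literal vertices carry these colors) shows that no clause has all three literals \textsc{False}, so the assignment satisfies $\phi$.

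The only genuinely delicate point is the ``hard'' implication of the clause-gadget fact --- exhibiting a concrete extension of the gadget whenever the clause is satisfied, which is exactly where the gadget must behave as an OR-gate --- and even that is a routine finite check; the rest is bookkeeping over the block/expander structure. Since the whole argument is in essence that of~\cite{BOT02}, which itself is a variant of the textbook $3$-SAT-to-$3$-Coloring reduction, the write-up can verify the two gadget facts carefully and stay brief about the assembly.
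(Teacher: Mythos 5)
Your proof is correct and follows precisely the approach the paper indicates for Proposition~\ref{prop:BOT_correctness}: the classical $3$-SAT-to-$3$-Coloring reduction analysed gadget by gadget, which is what~\cite{BOT02} does and what the paper cites without supplying the details. Both gadget facts check out (the finite case analysis for the clause gadget works for $1$, $2$, or $3$ \textsc{True} literals, and fails for $0$, as you claim), and the gluing via connectivity of the very strong expander (which is automatically connected by its defining density conditions) correctly propagates the \textsc{True}/\textsc{False}/\textsc{Dummy} labeling globally before decoding.
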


Next we do some simple calculations on the parameters of $G_{\phi}$.

\begin{prop}\label{prop:BOT_graph_basic}
Suppose the underlying expander is $d$-regular. Then $G_{\phi}$ satisfies the following:
\begin{itemize}
\item The maximum degree is $\max(2d+3, 2k+1)\leq 2\max(k,d)+3$;
\item $|V(G_{\phi})|=\Theta(nk)\cdot \max(k,d)$; and
\item $|E(G_{\phi})|=\Theta(nk)\cdot \max(k,d)$.
\end{itemize}
\end{prop}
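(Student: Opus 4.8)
The plan is a direct accounting over the six construction steps; the only subtlety is to track how many times each ``special'' node---a literal copy $x_i^{(j)}$ or one of its \textsc{True}/\textsc{False}/\textsc{Dummy} color nodes---is reused across the various gadgets.

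\textbf{Maximum degree.} First I would bound the degree of each vertex by its type. Every auxiliary node of an equality gadget (Fig.~\ref{fig:equality_gadget}) or of a clause gadget (Fig.~\ref{fig:clause_gadget}) has degree exactly $3$. A literal copy $x_i^{(j)}$ has one edge to $\bar x_i^{(j)}$ and one to $D_i^{(j)}$ (step~3), at most one edge into the single clause gadget in which that copy occurs (step~5, where each copy is used at most once), and two edges into each of the $k-1$ equality gadgets forming its $k$-clique (step~4), for a total of at most $2(k-1)+3=2k+1$. A color node such as $T_i^{(j)}$ has two triangle edges inside its block (step~2), at most one edge into a clause gadget---again because each \textsc{True} copy is used at most once---at most one edge to a literal copy (step~3 applies to $D$-type nodes), and two edges into each of the $d$ expander-induced equality gadgets attached to it (step~6), for a total of at most $2d+3$. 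Taking the maximum over all types gives $\Delta(G_\phi)=\max(2d+3,\,2k+1)$, and since each term is at most $2\max(k,d)+3$ the claimed bound follows.

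\textbf{Counting vertices.} I would sum the nodes created in each step: $2kn$ literal copies (step~1); $3\cdot 2kn$ color nodes (step~2); for step~4, each of the $2n$ literals contributes a clique of $\binom{k}{2}$ equality gadgets with $2$ new nodes each, i.e.\ $O(nk^2)$; for step~5, $6$ new nodes per clause, i.e.\ $6m=O(nk)$ using the bounded-occurrence bound $m\le 2kn/3$; and for step~6, the $d$-regular expander on $2kn$ vertices has $knd$ edges, each replaced by three equality gadgets with $2$ new nodes each, i.e.\ $O(nkd)$. The sum is $O(nk)+O(nk^2)+O(nkd)=O(nk)\cdot\max(k,d)$. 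For edges, rather than bounding $|E|\le\tfrac12|V|\Delta$ (which would cost an extra $\max(k,d)$ factor), I would again count per gadget: steps~2--3 contribute $O(nk)$ edges; step~4 contributes $5$ edges per equality gadget, hence $O(nk^2)$; step~5 contributes $12$ edges per clause, hence $O(m)=O(nk)$; step~6 contributes $15$ edges per expander edge, hence $O(nkd)$. Summing gives $|E(G_\phi)|=O(nk)\cdot\max(k,d)$ as well.

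The only place that could go wrong is the reuse count: it is precisely step~1's device of making $k$ private copies that forces each literal copy and each of its color nodes to appear in at most one clause gadget, which is what keeps the maximum degree at $O(\max(k,d))$ instead of $O(k\max(k,d))$. Everything else is arithmetic, so I do not anticipate a genuine obstacle here.
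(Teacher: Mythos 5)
Your proposal is correct and follows essentially the same accounting strategy as the paper: a degree bound by node type followed by a per-step tally of new vertices and edges, and you reproduce the paper's exact coefficients ($5$ edges and $2$ auxiliary nodes per equality gadget, $12$ edges and $6$ nodes per clause gadget, $15$ edges per expander edge, and the $m\le 2kn/3$ bound). The only small imprecision is the closing parenthetical: the $k$ private copies are there primarily to make the $3$-SAT-to-coloring reduction distance preserving (so that deleting one edge destroys at most a constant number of clauses), not to avoid an $O(k\max(k,d))$ degree blowup---even without the $k$ copies each $T$-node would incur only one clause-gadget edge per occurrence of its literal, i.e.\ at most $k$ extra edges, which is still $O(\max(k,d))$---but this side remark plays no role in the proof.
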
 

\begin{proof}
The maximum degree follows by noticing that (for simplicity we only consider nodes for the un-negated literals) 
$\deg(D_i^{(j)})=2d+3$, $\deg(F_i^{(j)})=2d+2$, $\deg(T_i^{(j)})\leq 2d+3$, and $\deg(x_i^{(j)})\leq 2k+1$.

As the lower bounds on the numbers of vertices and edges are straightforward, we prove the upper bounds only. Note that every equality gadget introduces $2$ auxiliary nodes and $5$ additional edges.
Similarly, every clause gadget introduces $6$ auxiliary nodes and $12$ additional edges.
The nodes in $G_{\phi}$ consist of $2nk$ literal nodes and $6nk$ literal block nodes and auxiliary nodes introduced
by gadgets. That is
\begin{align*}
|V(G_{\phi})|&=8nk+ 2n\cdot \frac{k(k-1)}{2} \cdot 2 +  3 \cdot \frac{2nk \cdot d}{2} \cdot 2 +6m \\
             &\leq 8nk+2n(k^2-k)+6nkd+4nk =O(nk)\cdot \max(k,d),
\end{align*}
where in the second-to-last step we use the bound $m\leq 2kn/3$.
Finally,
\begin{align*}
|E(G_{\phi})|&=nk(3+6)+ 2n\cdot \frac{k(k-1)}{2} \cdot 5 +  3 \cdot \frac{2nk \cdot d}{2} \cdot 5 +12m \\
             &\leq 9nk+5n(k^2-k)+15nkd+8nk =O(nk)\cdot \max(k,d),
\end{align*}
where the first term counts the edges in Fig.~\ref{fig:literal_block} for all $k$ copies of each \emph{variable},
the next two terms count the total number of edges in the equality constraints, and the last term counts
the total number of edges in the clause gadgets.
\end{proof}

We say an equality gadget is \emph{broken} if any of the $5$ edges in Fig.~\ref{fig:equality_gadget} is deleted,
and say a literal node $x_i^{(j)}$ (resp. $\bar{x}_i^{(j)}$) is \emph{isolated} if
either all its $k-1$ equality connections with other copies of $x_i$ (resp. $\bar{x}_i$) are all broken,
or all its $d$ equality connections with other literal nodes in the expander graph are all broken.
We say a literal pair $(x_i^{(j)}, \bar{x}_i^{(j)})$ is \emph{damaged} if
any of the edges within the two literal blocks in Fig.~\ref{fig:literal_block} is deleted,
or either $x_i^{(j)}$ node or $\bar{x}_i^{(j)}$ node becomes isolated.

Let $E' \subset E(G_{\phi})$ be a subset of the edges of graph $G_{\phi}$.
What happens if we delete all the edges in $E'$?
Define the \emph{survived subgraph induced by $E'$} of $G_{\phi}$, denoted $G'_{\phi, E'}$ as follows.
Delete all literal pairs that are damaged from deleting edges in $E'$, 
and delete all the edges connected with damaged pairs;
delete all clause gadgets that contain one or more damaged literals. 
We make the following two simple observations on a survived subgraph $G'_{\phi, E'}$.

\begin{claim}\label{claim:subgraph_colorable}
Let $\phi'$ be the $3$-CNF formula by keeping only the survived clauses in $G'_{\phi, E'}$.
Then $\phi'$ is satisfiable if and only if $G'_{\phi, E'}$ is $3$-colorable.
\end{claim}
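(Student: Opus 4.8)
This claim is the survived-subgraph analogue of Proposition~\ref{prop:BOT_correctness}, and my plan is to prove both implications by rerunning the standard \textsc{$3$SAT}$\leftrightarrow$\textsc{$3$-Coloring} argument on the sub-collection of intact gadgets that constitute $G'_{\phi,E'}$. The key structural fact to use throughout is that, by the way the survived subgraph was defined, every literal pair still present in $G'_{\phi,E'}$ has both of its literal blocks fully intact (all edges of Figure~\ref{fig:literal_block}), every equality gadget that survives---whether an inter-copy gadget from Step~4 or a block-equality gadget from Step~6---retains all five of its edges, every surviving clause gadget is complete, and deleted vertices and edges impose no colouring constraint at all.

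For the ``$\phi'$ satisfiable $\Rightarrow$ $G'_{\phi,E'}$ $3$-colourable'' direction I would extend a satisfying assignment $\sigma$ of $\phi'$ to all remaining variables arbitrarily and write down the canonical BOT colouring: fix three colours for the roles \textsc{True}, \textsc{False}, \textsc{Dummy}; colour the $T$-, $F$-, $D$-nodes of every surviving block with the corresponding role colours; colour each surviving literal copy $x_i^{(j)}$ (resp.\ $\bar x_i^{(j)}$) with the \textsc{True}-colour if $\sigma(x_i)$ is true and the \textsc{False}-colour otherwise (resp.\ with the opposite colour); and complete the auxiliary vertices using the colourability facts for the two gadgets---an intact equality gadget can be $3$-completed precisely when its two literal endpoints share a colour, which holds here since all surviving copies of a literal get a common colour, and an intact clause gadget can be $3$-completed precisely when the clause is satisfied by the local truth values of its literal inputs, which holds because $\sigma$ satisfies that surviving clause. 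One then checks that every surviving edge---the block triangles, the $x_i^{(j)}\!-\!\bar x_i^{(j)}$ edges, the edges to \textsc{Dummy} nodes, and the equality- and clause-gadget edges---receives two distinct colours; this is exactly the verification in the proof of Proposition~\ref{prop:BOT_correctness}, now restricted to the surviving part, so it goes through unchanged.

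For the converse I would take a proper $3$-colouring $c$ of $G'_{\phi,E'}$ and decode a truth assignment exactly as in the standard reduction: in every surviving block the triangle puts $\{T,F,D\}$ on three distinct colours and the edge to the \textsc{Dummy} node forces each literal copy onto the block's \textsc{True}- or \textsc{False}-colour, giving a local truth value; each intact equality gadget forces its two endpoints to agree, so surviving copies of a literal that are still linked by surviving equality gadgets share a colour, and the surviving Step~6 block-equality gadgets tie together the \textsc{True}/\textsc{False}/\textsc{Dummy} colours of the blocks they join; and each surviving clause gadget, being properly coloured, forces the local truth values of its inputs to satisfy that clause. Reading off, for each variable, the value dictated by its surviving copies then gives an assignment satisfying $\phi'$. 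The one delicate point---and the step I expect to be the only real obstacle---is showing that this decoded assignment is \emph{globally consistent}: $G'_{\phi,E'}$ may split into several connected components, and one must rule out that a surviving clause in one component demands $x_i$ true while one in another demands $x_i$ false. Here I would invoke the definition of the survived subgraph itself: a surviving literal node is not \emph{isolated}, so it keeps at least one intact inter-copy equality connection and at least one intact block-equality connection, and surviving pairs have intact blocks; chasing these surviving connections (together with the $x_i^{(j)}\!-\!\bar x_i^{(j)}$ edges and the tied \textsc{True}/\textsc{False} colours) shows that any two surviving occurrences of the same variable in surviving clauses decode to the same value, so the per-variable value is well defined. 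With consistency established, both directions close just as Proposition~\ref{prop:BOT_correctness} does for $G_\phi$.
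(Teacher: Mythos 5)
Your proposal re-runs the entire $3$SAT$\leftrightarrow$$3$-Coloring argument from scratch, whereas the paper's proof is a single observation: by construction, $G'_{\phi,E'}$ retains, for each surviving literal pair, its intact literal blocks together with the intact equality gadgets between surviving copies, and for each surviving clause, its intact clause gadget---so $G'_{\phi,E'}$ \emph{is} the BOT graph built from the formula $\phi'$, and Proposition~\ref{prop:BOT_correctness} applies immediately. Your plan effectively unfolds that proposition rather than citing it, which is a longer route; more importantly, it is precisely in the unfolded portion that your argument has a gap.

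The ``one delicate point'' you flag---global consistency of the decoded assignment---is not actually closed by what you sketch. Non-isolation is a purely \emph{local} guarantee: each surviving literal node keeps \emph{some} intact inter-copy equality gadget and \emph{some} intact expander-equality gadget, but the graph whose vertices are the surviving copies of a literal and whose edges are the intact inter-copy gadgets can still split into several components. (For $k=4$ copies, keep the gadgets on $\{1,2\}$ and on $\{3,4\}$ and break the remaining four: every copy is non-isolated, yet the two pairs form disjoint islands that ``chasing surviving connections'' never joins.) The same issue affects the survived expander. So the chase you describe does not, as stated, show that two surviving occurrences of the same variable in different surviving clauses decode to the same truth value. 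The clean way to dispose of this is the paper's route: regard $G'_{\phi,E'}$ as the BOT graph for $\phi'$ (with reduced copy counts and the survived expander taking the role of the underlying graph in Step~\ref{step:six}), so that the satisfiability--colourability equivalence is inherited wholesale from Proposition~\ref{prop:BOT_correctness} rather than re-derived by hand.
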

\begin{proof}
This follows directly by noticing that $G'_{\phi, E'}$ can be regarded as the BOT graph constructed
from the formula $\phi'$.
\end{proof}

\begin{claim}\label{claim:survived_clauses}
If we delete a subset of edges $E'$, then the number of survived clauses in $\phi'$ is at least
$m-2|E'|$.
\end{claim}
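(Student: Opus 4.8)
The plan is to prove the equivalent statement that the number of \emph{destroyed} clauses is at most $2|E'|$, where a clause of $\phi$ is destroyed if it does not survive into $\phi'$. I would partition $E'$ according to which substructure of the BOT construction each deleted edge belongs to — a literal block (Fig.~\ref{fig:literal_block}), an equality gadget (Fig.~\ref{fig:equality_gadget}), or a clause gadget (Fig.~\ref{fig:clause_gadget}) — writing $|E'| = E'_{\mathrm{bl}} + E'_{\mathrm{eq}} + E'_{\mathrm{cg}}$; this is a genuine partition because these substructures are pairwise edge-disjoint.

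By the definition of $G'_{\phi, E'}$ (and hence of $\phi'$), a clause $C$ is destroyed only if (i) some edge of $E'$ lies inside $C$'s own clause gadget, or (ii) one of the three literal-copies used by $C$'s clause gadget lies in a \emph{damaged} pair. Case (i) accounts for at most $E'_{\mathrm{cg}}$ clauses, since each clause-gadget edge sits in a unique clause gadget. For case (ii), recall that each literal-copy $x_i^{(j)}$ (resp.\ $\bar x_i^{(j)}$) is used in at most one clause gadget, so a single damaged pair destroys at most $2$ clauses; hence it suffices to bound $|\mathcal{D}|$, the number of damaged pairs, by $E'_{\mathrm{bl}} + E'_{\mathrm{eq}}$, which would give $D \le E'_{\mathrm{cg}} + 2\,|\mathcal{D}| \le E'_{\mathrm{cg}} + 2(E'_{\mathrm{bl}} + E'_{\mathrm{eq}}) \le 2|E'|$ and therefore at least $m - 2|E'|$ surviving clauses.

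To bound $|\mathcal{D}|$, note that a pair is damaged either because (a) one of its block edges is deleted — and block edges are private to a single pair, contributing at most $E'_{\mathrm{bl}}$ — or because (b) one of its two literal-copies becomes \emph{isolated}. Thus $|\mathcal{D}| \le E'_{\mathrm{bl}} + I$, where $I$ is the number of isolated literal-copies, and the crux is the inequality $I \le E'_{\mathrm{eq}}$. A copy is isolated either because all $k-1$ of its same-variable equality connections are broken, or because all $d$ of its expander equality connections are broken. For the first case I would argue inside the $k$-clique on the $k$ copies of a fixed literal: if $a$ of these copies get isolated this way, then every clique-edge incident to the $a$-set is a broken gadget, and a handshake count shows there are at least $a$ such edges (using that we may take $k \ge 3$ for the bounded-occurrence \textsc{Max-$3$SAT} instance, e.g.\ the $4$-occurrence variant); since a broken gadget requires at least one deleted edge, summing over the $2n$ literal-cliques bounds the number of same-variable-isolated copies by the number of deleted same-variable-gadget edges. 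For the second case, the set of expander-isolated blocks has all of its incident expander edges broken, so by $d$-regularity and a handshake count there are at least that many broken expander gadgets; this bounds the number of expander-isolated copies by the number of deleted expander-gadget edges. Adding the two cases gives $I \le E'_{\mathrm{eq}}$, completing the bound.

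The main obstacle is precisely the inequality $I \le E'_{\mathrm{eq}}$: the naive scheme of charging each isolated copy to ``the'' deleted edge that isolated it fails, because a broken same-variable equality gadget is shared between the two copies it connects, so a single deletion can in principle contribute to isolating two copies at once. The resolution is to charge globally instead of locally — counting all broken gadgets incident to the \emph{entire} set of isolated copies within each clique and within the expander, and invoking a handshake/expansion inequality — which is exactly why the argument needs the copy-multiplicity $k$ and the expander degree $d$ to each be at least a small absolute constant.
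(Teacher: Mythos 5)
Your proof is correct and follows the same overall plan as the paper (charge destroyed clauses to deleted edges via damaged literal pairs), but you rightly make rigorous what the paper dispatches in one sentence---``deleting an edge causes at most one literal pair damaged''---by observing that this per-edge charge is not literally true: a single deleted equality-gadget edge can simultaneously complete the isolation of \emph{both} of that gadget's endpoints, damaging two pairs at once. Your global handshake/amortization, giving $I\le E'_{\mathrm{eq}}$ and hence $|\mathcal D|\le E'_{\mathrm{bl}}+E'_{\mathrm{eq}}$ and $D\le E'_{\mathrm{cg}}+2|\mathcal D|\le 2|E'|$, is exactly the bookkeeping that rescues the count, and you correctly identify that it needs $k\ge 3$ and $d\ge 2$; both hold in the construction (H{\aa}stad's bounded-occurrence instance gives constant $k\ge 3$, and the $d$-regular Ramanujan expander has $d\ge 3$). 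Two minor remarks. First, under the paper's definition of the survived subgraph, a clause gadget is removed only if one of its three literal copies is in a damaged pair, not merely because some edge inside the gadget was deleted; so your case~(i) is not required by the stated definition---though including it is harmless and arguably the safer reading, since Lemma~\ref{lem:BOT-graph} ultimately needs the surviving clause gadgets to live inside a $3$-colorable graph, which a gadget retaining a deleted edge would not. Second, in your expander handshake you should confirm that a broken ``expander equality connection'' (which consists of three parallel $5$-edge gadgets between two blocks) is charged to at least one deleted edge that is not also charged by a different connection; this holds because the gadgets are edge-disjoint, but it is worth saying explicitly.
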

\begin{proof}
This is because deleting an edge causes at most one literal pair to be damaged, and thus affects at most
two clauses.
\end{proof}

Our main goal of this Section is the prove the following lemma on the
gap-preserving reduction from \textsc{Max-$3$SAT($k$)} to \textsc{$3$-Coloring} of bounded-degree graphs
of Bogdanov, Obata and Trevisan~\cite{BOT02}.
\begin{replemma}{lem:BOT-graph}[restatement]
There are absolute constants $d$ and $\eps'>0$ such that the following holds.
For infinitely many integers $n$, there are two families of degree-$d$ bounded graphs $G_{\phi, Y}$ and $G_{\phi, N}$ of size $n$ 
such that: every $G_\phi \in G_{\phi, Y}$ is $3$-colorable; no $1-\eps'$ fraction of the edges of any $G_\phi \in G_{\phi, N}$ is $3$-colorable; and yet when given either $G_{\phi} \in G_{\phi, Y}$ or $G_{\phi} \in G_{\phi, N}$, it is NP-hard to distinguish which family $G_\phi$ is from.
\end{replemma}

\begin{proof}
Recall that, by Lemma~\ref{lem:max-3sat}, there is an integer $k$ and constant $\eps(k)>0$ such that there are two families of instances $\phi_{Y}$ and $\phi_{N}$ in \textsc{Max-$3$SAT($k$)} of size $n$ each such that every $\phi \in \phi_{Y}$ is satisfiable, any truth assignment can satisfy at most $1-\eps(k)$ fraction of the clauses of any $\phi \in \phi_{N}$, and it is NP-hard to distinguish between these two families. To translate this NP hardness of approximation to \textsc{$3$-Coloring}, we use the very strong expander given in Theorem~\ref{thm:strong.expander} to construct the corresponding families of BOT graphs $G_{\phi, Y}$ and $G_{\phi, N}$ for $\phi\in\phi_{Y}$ and $\phi\in\phi_{N}$, respectively (we just need to make sure that the number of vertices of the very strong expander is at least $2nk$; if the number of vertices is larger, then we can simply make some of the literals more than $k$ copies). Let $n$ be the number of variables and $m$ be the number of clauses of the Boolean formula, where $m\geq n$. Define $\eps'=\frac{\eps \cdot n}{2|E(G_{\phi, N})|}$. Since both $k$ and $d$ are constants, by Proposition~\ref{prop:BOT_graph_basic}, $\eps'$ is an absolute constant depending on $\eps$, $k$ and the degree $d$ of the underlying expander graph. Clearly every $G_\phi \in G_{\phi, Y}$ is $3$-colorable. It is also not hard to prove that any $G_\phi \in G_{\phi, N}$ is not $(1-\eps')$ $3$-colorable: Indeed, suppose that $G_\phi\in G_{\phi, N}$ is $(1-\eps')$ $3$-colorable. This implies that we can delete a subset of edges from $G_\phi$ of size at most $\eps'|E(G_{\phi, N})|$, and obtain a survived subgraph with at least $m-\eps\cdot n\geq (1-\eps)m$ clauses (by Claim~\ref{claim:survived_clauses}) which is $3$-colorable. By Claim~\ref{claim:subgraph_colorable}, the $3$-coloring of the survived subgraph can be decoded into a satisfying assignment for the (at least $(1-\eps)m$) survived clauses of $\phi\in \phi_{N}$, a contradiction. Finally the constant degree bound of the constructed BOT graphs follows from the first item of Proposition~\ref{prop:BOT_graph_basic}.
\end{proof}

\section{Structure of max-likelihood solutions -- Proof of Theorem~\ref{thm:diagonal}}\label{sec:diagonal}
For convenience of the reader, we repeat Theorem~\ref{thm:diagonal} here.
\begin{reptheorem}{thm:diagonal}[restatement]
Let $K$ be a marginal kernel with likelihood $\llhd(K)$.  
Then there exists a marginal kernel $K'$ with 
$\llhd(K')\leq \llhd(K)$ such that the diagonal of $K'$ (indexed by vertices and edges of $G_\phi$) satisfies
\[K'_{ii}=
\begin{cases}
\frac{\deg_{G_\phi}(u)}{m} & \mathrm{for \;} i=u \in V(G_\phi); \\
\frac{1}{m} & \mathrm{for \;} i=(u,v) \in E(G_\phi).
\end{cases}
\]
\end{reptheorem}
The proof follows from the following slightly stronger lemma, which implies that
we can assume that the diagonal of a max likelihood kernel $K$ (indexed by vertices and edges of $G_\phi$) satisfies
that $K_{ii}$ is equal to the normalized frequency of element $i$ in the training set.

\begin{lemma}\label{lem:BMRU_diagonals}
Let $[N]$ be the ground set, $\mathcal{S}=\{X_j\}_{j=1}^{m}$ be a set of  $m$ example sets with 
each $X_j \subseteq [N]$.
Let $D$ be the empirical distribution induced by the set $\mathcal{S}$: namely, for every 
$X\subseteq [N]$, $D(X)=|\{X \in \mathcal{S}\}|/m$ if $X$ is in $\mathcal{S}$, and $D(X)$ is zero otherwise.
Then there is a maximum likelihood marginal kernel $K$ which satisfies that 
the diagonal entry of $K$ at each element in $[N]$
is equal to the element's empirical frequency in $\mathcal{S}$. That is,
$K_{ii} = \displaystyle\sum_{X\in \mathcal{S}: X \ni i} D(X)$ for every $i\in [N]$.
\end{lemma}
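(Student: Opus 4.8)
The plan is to prove the stronger statement that \emph{every} maximum likelihood marginal kernel already has the stated diagonal, and then note that a maximizer exists: the set of symmetric $K$ with $0\preceq K\preceq I$ is compact, $K\mapsto\Pr_{\calP_K}[\bY=X]$ is a polynomial in the entries of $K$ (so $\ell$ is lower semicontinuous, with value $+\infty$ allowed), and $\ell$ is finite at, e.g., a generic positive-definite $L$-ensemble kernel $K=L(I+L)^{-1}$, so the infimum is finite and attained. I would argue one coordinate at a time, via the reparametrization of $\calP_K$ obtained by conditioning on the event $\{i\in\bY\}$; this is the "extension" of the $L$-ensemble argument of \cite{BMRU17} to general marginal kernels.

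First I would fix $i\in[N]$ and write $K=\left(\begin{smallmatrix} p & u^{\top}\\ u & R\end{smallmatrix}\right)$ in block form for the split $\{i\}\mid[N]\setminus\{i\}$, where $p=K_{ii}$. A Schur-complement computation applied to $\det(K_{\{i\}\cup T})$, and the matrix-determinant lemma applied to $\det(K_T)-\det(K_{\{i\}\cup T})$, show that conditioning $\calP_K$ on $\{i\in\bY\}$ and on $\{i\notin\bY\}$ yields DPPs on $[N]\setminus\{i\}$ with marginal kernels
\[
K^{1}=R-\tfrac1p\,uu^{\top},\qquad K^{0}=R+\tfrac1{1-p}\,uu^{\top},
\]
both valid marginal kernels (this also follows directly from $0\preceq K\preceq I$ via Schur complements), and with $K^{0}-K^{1}=\tfrac1{p(1-p)}uu^{\top}=:ww^{\top}$ rank-one and PSD. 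For $p'\in[0,1]$ I would then define the candidate kernel
\[
\widetilde K(p')=\begin{pmatrix} p' & \sqrt{p'(1-p')}\,w^{\top}\\[2pt] \sqrt{p'(1-p')}\,w & K^{1}+(1-p')\,ww^{\top}\end{pmatrix},
\]
and verify it is a valid marginal kernel: the Schur complement of the top-left entry reduces $\widetilde K(p')\succeq0$ to $K^{1}\succeq0$, and the Schur complement of the top-left entry of $I-\widetilde K(p')$ reduces $\widetilde K(p')\preceq I$ to $K^{0}\preceq I$. Moreover $\widetilde K(p)=K$; a direct check of the two conditional-kernel formulas shows $\widetilde K(p')$ has $i$-marginal $p'$ and the \emph{same} two conditional DPPs $\calP_{K^{1}},\calP_{K^{0}}$; and at $p'\in\{0,1\}$ it degenerates to $\mathrm{diag}(0,K^{0})$, resp.\ $\mathrm{diag}(1,K^{1})$.

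The payoff is that $\Pr_{\widetilde K(p')}[\bY=X]$ factors as $p'\cdot\Pr_{\calP_{K^{1}}}[\bY'=X\setminus\{i\}]$ when $i\in X$ and as $(1-p')\cdot\Pr_{\calP_{K^{0}}}[\bY'=X]$ when $i\notin X$; summing $-\log$ against $D$ and using $\sum_X D(X)=1$ gives
\[
\ell(\widetilde K(p'))=-\hat p_i\log p'-(1-\hat p_i)\log(1-p')+C_i,
\]
where $\hat p_i=\sum_{X\in\mathcal S:\,X\ni i}D(X)$ and $C_i$ is independent of $p'$. The displayed function of $p'$ is strictly convex on $[0,1]$ and uniquely minimized at $p'=\hat p_i$, so if $K$ is a maximum likelihood kernel with $K_{ii}\neq\hat p_i$ then $\ell(\widetilde K(\hat p_i))<\ell(\widetilde K(p))=\ell(K)$, contradicting optimality. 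Applying this to a fixed optimum $K$ for each $i$ in turn forces $K_{ii}=\hat p_i$ for all $i$, which is the lemma (and Theorem~\ref{thm:diagonal} follows by instantiating $\hat p_i$ with $\deg_{G_\phi}(u)/m$ or $1/m$ for the BOT hypergraph training set).

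The only genuinely non-routine step is the reparametrization claim: checking that $\widetilde K(p')$ is a legitimate marginal kernel carrying the advertised $i$-marginal and conditionals, in particular at the boundary values $p'\in\{0,1\}$ — which is exactly the regime (elements of empirical frequency $0$ or $1$, eigenvalues hitting $0$ or $1$) where one must step outside the $L$-ensemble setting of \cite{BMRU17}. Everything else is the pair of Schur-complement inequalities above together with the elementary one-variable convexity fact.
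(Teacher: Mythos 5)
Your argument is correct, and it takes a genuinely different route from the paper's. The paper proves the $L$-ensemble case via the critical-point/derivative identity of \cite{BMRU17} (directional derivatives vanish at the optimum of the log-likelihood in $L$), and then handles kernels with eigenvalues at $0$ or $1$ indirectly: it perturbs the empirical distribution $D$ to $D_\eps$ by adding mass $\eps/2$ at $\emptyset$ and $[N]$ so that every optimizer of $D_\eps$ is forced to be an $L$-ensemble, establishes a Lipschitz-type bound $\ell(K_\eps,D)\leq\ell(K^*,D)+O(\eps\log(1/\eps))$ using a determinant-perturbation theorem of Ipsen--Rehman together with Weyl's inequalities, and finally extracts a limiting optimal kernel $K_\infty$ by Bolzano--Weierstrass that inherits the diagonal identity. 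Your one-parameter family $\widetilde K(p')$, which freezes the two conditional DPPs $\calP_{K^1},\calP_{K^0}$ and varies only $\Pr[i\in\bY]$, reduces everything to the elementary one-variable cross-entropy minimization and directly handles the boundary spectrum without any perturbation machinery. It also yields the stronger conclusion that \emph{every} maximizer (not merely some maximizer) has the prescribed diagonal, which the paper's limiting argument does not give. The trade-off is that you need the non-routine Schur-complement verification that $\widetilde K(p')$ is a legitimate kernel carrying the advertised marginal and conditionals — which you carry out — whereas the paper's route is a more mechanical but heavier perturbation-and-limit argument.

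One small point worth tightening: besides $p'\in\{0,1\}$, you should also note the case $p=K_{ii}\in\{0,1\}$ for the \emph{starting} optimizer. There $u=0$ automatically (PSD of $K$ resp.\ of $I-K$ forces the $i$-th row/column off-diagonal to vanish), so $K^0=K^1=R$, one takes $w=0$, and $\widetilde K(p')$ degenerates to $\mathrm{diag}(p',R)$; the factorization and the convexity argument then go through unchanged. As written, your remark about "boundary values" conflates the two parameters slightly, but the content is fine.
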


The rest of this section is devoted to a proof of Lemma~\ref{lem:BMRU_diagonals}.

\subsection{The $L$-ensemble case}
We start by assuming that all of $K$'s eigenvalues are in $(0,1)$, and
follow a similar strategy as of the proof of Proposition $13$ in~\cite{BMRU17}.    
Note that in this case the DPP is an $L$-ensemble, where $L = K(I-K)^{-1}$.  If $K$ is a maximum likelihood DPP, 
then the corresponding $L$ is a critical point of the log-likelihood function, 
and accordingly the directional derivative of the log-likelihood function in every direction is zero.  
Recall that for any matrices $M,H \in \R^{N\times N}$, the directional derivative of the log-determinant at $M$ in the ``direction'' $H$ (viewing $M$ and $H$ as vectors) is given by $\frac{\partial}{\partial x}[\log\det(M+xH)]=\mathrm{Tr}(M^{-1}H)$ (see, e.g., Example A.3 of~\cite{BV04}).
Thus, in the direction of any matrix $H \in \R^{N\times N}$, the log-likelihood satisfies

\[
\sum_{X\subseteq [N]} D(X) \mathrm{Tr}(L_X^{-1}H_X) - \mathrm{Tr}((I+L)^{-1}H) = 0,
\] 
where $L_X$ and $H_X$ denote the respective principal submatrices of $L$ and $H$ indexed by the subset $X$ of $[N]$.

Fix $t_1,t_2,\ldots,t_n \in \R$, and let $T$ be the diagonal matrix with diagonal $t_1,t_2,\ldots,t_n$.  
Setting $H = \frac12LT + \frac12 TL$ (this ensures that we consider symmetric kernels), 
we get $H_X = \frac12L_X T_X + \frac12T_X L_X$.  Since $(I+L)^{-1}$ and 
$L$ commute, and the trace is invariant under cyclic permutations, we get

\[
\sum_{X\subseteq [N]} D(X) \sum_{j \in X} t_j = \mathrm{Tr}(KT) = \sum_{j=1}^N K_{jj}t_j
\]

Fix $i \in [N]$, and consider the setting where $t_i = 1$ and $t_j = 0$ for all $j \neq i$.  
The above equation becomes

\begin{equation}\label{eqn:BMRU}
\sum_{X\subseteq [N]:X \ni i} D(X) = K_{ii},
\end{equation}
completing the proof of the lemma for the $L$-ensemble case.

\subsection{Perturbing the training example distribution}
To deal with the non $L$-ensemble case, we apply a continuity argument: 
specifically, we employ a Lipschitz property of the log likelihood function with respect to small perturbations
of the training example distribution.
In the rest of the proof we think $N$ (hence both $n$ and $m$) as a fixed constant and let
the small quantities such as $\eps$ tend to zero independent of $N$.

Let $D$ be the original empirical distribution induced by the sample set $\mathcal{S}$ of size $m$.
Without loss of generality, we assume\footnote{For if $D(\emptyset)\neq 0$, any optimal DPP kernel for $D$ 
is necessarily an $L$-ensemble. We further assume $D([N])=0$ to simplify exposition. 
It is easy to see
that the there are two parts in the definition of $K^{*}_{\eps}$ below: the rescaling part is to 
deal with DPP kernel singularity at $\emptyset$, and the spectrum shifting part is to deal with DPP kernel
singularity at $[N]$. One can therefore keep only the rescaling part in $K^{*}_{\eps}$ when
$D([N])$ is bounded away from zero.} both $D(\emptyset)=0$ and $D([N])=0$.
We perturb $D$ by adding very tiny probabilities at $\emptyset$ and $[N]$ so that the corresponding correlation
kernel is an $L$-ensemble.\footnote{Making the empirical distribution to be non-zero at $\emptyset$ ensures that
the kernel is an $L$-ensemble; however, in order to apply the matrix derivative argument in~\cite{BMRU17},
we need to make the distribution to be non-zero at $[N]$ as well.} Formally, define
$D_{\eps}$ to be a distribution on $2^{[N]}$ such that for any $X \subseteq [N]$
\begin{equation}\label{eqn:perturbed-dist}
D_{\eps}(X)=
\begin{cases}
(1-\eps)D(X) & \text{if $X\in \mathcal{S}$; }\\
\eps/2 & \text{if $X=\emptyset$; }\\
\eps/2 & \text{if $X=[N]$; }\\
0 & \text{otherwise.}
\end{cases}
\end{equation}
We stress that the small quantity $\eps$ is independent of the DPP learning instance under consideration.

In the following, we will use $K^{*}$ to denote a (non $L$-ensemble) optimal marginal kernel 
for the original training distribution $D$,
and use $K_{\eps}$ to denote an optimal marginal kernel for the perturbed training distribution $D_{\eps}$.
Since $D_{\eps}(\emptyset)> 0$ and $D_{\eps}([N])>0$, 
$K_{\eps}$ is necessarily an $L$-ensemble and hence by our previous argument, 
the diagonals of $K_{\eps}$ satisfy \eqref{eqn:BMRU}.
In particular, for every $i\in [N]$
\begin{align}\label{eqn:perturbed-diagonal}
(K_{\eps})_{ii} = \frac{\eps}{2}+\sum_{X\in \mathcal{S}: X \ni i} (1-\eps)D(X), 
\end{align}
which approaches $\sum_{X\in \mathcal{S}: X \ni i} D(X)$ when $\eps$ tends to zero.

To make the dependency of the maximum log likelihood estimator on the empirical distribution explicit,
we write $\ell(K,D)$ for the log likelihood function of DPP kernel $K$ on training distribution $D$. Therefore,  
\begin{align}\label{eqn:likelihood-original}
\ell(K^{*}, D)=\sum_{X\in \mathcal{S}}D(X) \log \left(\frac{1}{\Pr_{\bY \sim \calP_{K^{*}}}[\bY=X]} \right),
\end{align}
and
\begin{align}\label{eqn:likelihood-perturbed}
\ell(K_{\eps}, D_{\eps})= & \frac{\eps}{2}\log \left(\frac{1}{\Pr_{\bY \sim \calP_{K_{\eps}}}[\bY=\emptyset]} \right)
+\sum_{X\in \mathcal{S}}(1-\eps)D(X) \log \left(\frac{1}{\Pr_{\bY \sim \calP_{K_{\eps}}}[\bY=X]} \right) \nonumber\\
   &+\frac{\eps}{2}\log \left(\frac{1}{\Pr_{\bY \sim \calP_{K_{\eps}}}[\bY=[N]]} \right).
\end{align}

The main goal in this subsection is to prove the following inequality, which
says that the $L$-ensemble marginal kernel $K_{\eps}$, 
which is defined to be optimal for the perturbed training distribution $D_{\eps}$, 
is also close to being optimal for the original training distribution $D$.

\begin{prop}\label{prop:perturbed}
Let $D$, $K^{*}$, $D_{\eps}$ and $K_{\eps}$ be defined as before. Then for every $\eps>0$ that is small enough
\begin{align}\label{eqn:perturbed-near-optimal}
\ell(K_{\eps}, D) \leq \ell(K^{*}, D)+O(\eps \log{\frac{1}{\eps}}),
\end{align}
where the hidden constant in the $O(\eps \log{\frac{1}{\eps}})$ term depends only on $N$.
\end{prop}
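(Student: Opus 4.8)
The plan is a two-sided sandwich that routes both $\ell(K_{\eps},D)$ and $\ell(K^{*},D)$ through an explicit $L$-ensemble ``witness'' kernel obtained by perturbing $K^{*}$. First I would discard the trivial case: if $\Pr_{\calP_{K^{*}}}[\bY=X]=0$ for some $X\in\mathcal{S}$ then $\ell(K^{*},D)=\infty$ and \eqref{eqn:perturbed-near-optimal} is vacuous, so assume $\ell(K^{*},D)<\infty$. Also note that $K_{\eps}$, being optimal for a distribution that charges both $\emptyset$ and $[N]$, must have all eigenvalues strictly inside $(0,1)$, so $L_{\eps}=K_{\eps}(I-K_{\eps})^{-1}$ is positive definite, $\Pr_{\calP_{K_{\eps}}}[\bY=X]>0$ for every $X\subseteq[N]$, and in particular $\ell(K_{\eps},D)<\infty$.

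Next I would set $K^{*}_{\eps}:=(1-\eps)K^{*}+\tfrac{\eps}{2}I$, whose eigenvalues are $(1-\eps)\lambda_i(K^{*})+\tfrac{\eps}{2}\in[\tfrac{\eps}{2},\,1-\tfrac{\eps}{2}]$: the rescaling by $(1-\eps)$ pushes the spectrum off $1$ (curing the $\bY=\emptyset$ singularity) and the shift $\tfrac{\eps}{2}I$ pushes it off $0$ (curing the $\bY=[N]$ singularity), so $K^{*}_{\eps}$ is a bona fide marginal kernel with $\Pr_{\calP_{K^{*}_{\eps}}}[\bY=\emptyset]=\det(I-K^{*}_{\eps})\ge(\eps/2)^{N}$ and $\Pr_{\calP_{K^{*}_{\eps}}}[\bY=[N]]=\det(K^{*}_{\eps})\ge(\eps/2)^{N}$; hence the two ``boundary'' terms $\tfrac{\eps}{2}\log\tfrac1{\Pr_{\calP_{K^{*}_{\eps}}}[\bY=\emptyset]}$ and $\tfrac{\eps}{2}\log\tfrac1{\Pr_{\calP_{K^{*}_{\eps}}}[\bY=[N]]}$ are each $O(\eps\log\tfrac1\eps)$, with constant depending only on $N$. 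For each $X\in\mathcal{S}$ I would use that $K\mapsto\Pr_{\calP_{K}}[\bY=X]$ is a polynomial of degree $\le N$ in the entries of $K$ (inclusion--exclusion over the principal minors $\det(K_T)$), hence Lipschitz on the compact set $\{0\preceq K\preceq I\}$ with a constant depending only on $N$; since $\|K^{*}_{\eps}-K^{*}\|_{\infty}=O(\eps)$ entrywise, $\Pr_{\calP_{K^{*}_{\eps}}}[\bY=X]=\Pr_{\calP_{K^{*}}}[\bY=X]-O(\eps)$, and because $\Pr_{\calP_{K^{*}}}[\bY=X]$ is bounded below by a positive constant depending only on $N$ (e.g.\ by comparing the optimum $\ell(K^{*},D)$ to the likelihood of the diagonal kernel with entries $K_{ii}=\sum_{X\ni i}D(X)$, which is finite and at most $N\log 2$), this yields $\log\tfrac1{\Pr_{\calP_{K^{*}_{\eps}}}[\bY=X]}\le\log\tfrac1{\Pr_{\calP_{K^{*}}}[\bY=X]}+O(\eps)$, and summing with weights $D(X)$ gives $\ell(K^{*}_{\eps},D)\le\ell(K^{*},D)+O(\eps)$.

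Finally I would assemble the sandwich via identity \eqref{eqn:likelihood-perturbed}. Evaluating it at $K^{*}_{\eps}$ gives $\ell(K^{*}_{\eps},D_{\eps})=(1-\eps)\ell(K^{*}_{\eps},D)+O(\eps\log\tfrac1\eps)\le\ell(K^{*},D)+O(\eps\log\tfrac1\eps)$, also using that $\ell(K^{*},D)\ge 0$ is a fixed finite constant so $(1-\eps)\ell(K^{*},D)=\ell(K^{*},D)-O(\eps)$; optimality of $K_{\eps}$ for $D_{\eps}$ then gives $\ell(K_{\eps},D_{\eps})\le\ell(K^{*},D)+O(\eps\log\tfrac1\eps)$. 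Evaluating \eqref{eqn:likelihood-perturbed} instead at $K_{\eps}$ and discarding its two nonnegative boundary terms gives $\ell(K_{\eps},D_{\eps})\ge(1-\eps)\ell(K_{\eps},D)$, hence $\ell(K_{\eps},D)\le\tfrac1{1-\eps}\ell(K_{\eps},D_{\eps})\le(1+O(\eps))\big(\ell(K^{*},D)+O(\eps\log\tfrac1\eps)\big)=\ell(K^{*},D)+O(\eps\log\tfrac1\eps)$, which is \eqref{eqn:perturbed-near-optimal}. I expect the only genuinely delicate point to be the quantitative continuity step: converting the polynomial Lipschitz bound into an additive $O(\eps)$ error on $\log(1/\Pr_{\calP_{K^{*}}}[\bY=X])$ requires the per-sample probabilities under $K^{*}$ to be bounded away from $0$ by a quantity depending only on $N$; everything else is bookkeeping around \eqref{eqn:likelihood-perturbed} and the expansion $\tfrac1{1-\eps}=1+O(\eps)$.
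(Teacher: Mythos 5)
Your proof is correct and follows essentially the same route as the paper: you introduce the same witness kernel $K^{*}_{\eps}=(1-\eps)K^{*}+\tfrac{\eps}{2}I$, establish the same four inequalities (the paper's \eqref{eqn:seq1}--\eqref{eqn:seq4}) using the same spectrum-shift/rescaling and the same lower bound on per-sample probabilities via a trivial reference kernel, and combine them identically. The only cosmetic deviation is that you invoke generic Lipschitz continuity of the degree-$\le N$ inclusion--exclusion polynomial $K\mapsto\Pr_{\calP_K}[\bY=X]$ on the compact set of marginal kernels, whereas the paper makes the same $O(\eps)$ perturbation explicit via the Ipsen--Rehman determinant bound.
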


\begin{proof}
Let $I_{N}$ denote the $N\times N$ identity matrix. Define a new marginal kernel $K^{*}_{\eps}$ as 
\[
K^{*}_{\eps}=(1-\eps)K^{*}+\frac{\eps}{2}I_{N}.
\]

Clearly, since $K^{*}$ is PSD, $\lambda_{N}(K^{*}_{\eps})\geq \eps/2$.  
It is also not hard to verify that $\lambda_{1}(K^{*}_{\eps})=(1-\eps)\lambda_{1}(K^{*})+\eps/2\leq 1-\eps/2$.
It follows that $K^{*}_{\eps}$ is indeed a DPP marginal kernel.

We will prove the following sequence of inequalities
\begin{align}
\ell(K^{*}_{\eps}, D) &\leq \ell(K^{*}, D)+O(\eps). \label{eqn:seq1} \\
\ell(K^{*}_{\eps}, D_{\eps}) &\leq \ell(K^{*}_{\eps}, D)+O(\eps \log{\frac{1}{\eps}}). \label{eqn:seq2} \\
\ell(K_{\eps}, D_{\eps}) &\leq \ell(K^{*}_{\eps}, D_{\eps}). \label{eqn:seq3} \\
\ell(K_{\eps}, D) &\leq \ell(K_{\eps}, D_{\eps})+O(\eps). \label{eqn:seq4}
\end{align}

First of all, let's state and prove two useful bounds.

\begin{fact}\label{fact:trivial-kernel}
Let $D'$ be an arbitrary empirical distribution over the ground set $[N]$,
and let $\ell^{*}(D')$ denote the optimal value of the DPP maximum log likelihood estimator
for $D'$. Then $\ell^{*}(D')\leq N\log{2}$.
\end{fact}
\begin{proof}
This is because the trivial diagonal kernel $K_{I}:=\frac{1}{2}I_{N}$, 
which corresponds to the uniform distribution over all $2^{[N]}$ subsets, 
is a legal DPP marginal kernel for any distribution $D'$ over $[N]$, 
and its log likelihood estimator is easily seen to be $N\log{2}$.
\end{proof}

\begin{fact}\label{fact:minimum-prob}
Let $\mathcal{S}=\{X_j\}_{j=1}^{m}$ be a set of $m$ example sets with each $X_j \subseteq [N]$, the ground set. 
Let $K$ be any maximum likelihood marginal kernel of the empirical distribution induced by 
$\mathcal{S}$. Then, for any $X_j \in \mathcal{S}$,
\[
\Pr_{\bY \sim \calP_K}[\bY=X_j]\geq e^{-\log{2}\cdot mN} > e^{-\log{2} \cdot N^{2}}.
\]
\end{fact}
\begin{proof}
This follows by combining Fact~\ref{fact:trivial-kernel} with 
the fact that in \eqref{eqn:likelihood-original}, $D(X_j)\geq 1/m$.
\end{proof}

\subsubsection{Proof of inequality~\eqref{eqn:seq1}} 

For any DPP marginal kernel $K$ and $X\subseteq [N]$, it is well-known that using
inclusion-exclusion principle, we can express
$\Pr_{\bY \sim \calP_K}[\bY=X]$ in terms of $K$ as 
\begin{align*}
\Pr_{\bY \sim \calP_K}[\bY=X]
& =\sum_{T\supseteq X} (-1)^{|T \setminus X|} \det(K_{T})= (-1)^{|X|}\sum_{T\supseteq X} (-1)^{|T|} \det(K_{T})\\
& = (-1)^{|X|}\det(I_{\bar{X}}-K) =|\det(K-I_{\bar{X}})|,
\end{align*}
where $I_{\bar{X}}$ stands for the diagonal matrix whose $(i,i)$-entry is $1$ if
$i \in \bar{X}$ and is $0$ otherwise.

Now fix any $X\in \mathcal{S}$. Then $\Pr_{\bY \sim \calP_{K^*}}[\bY=X]=|\det(K^{*}-I_{\bar{X}})|=|\det(A)|$,
where $A:=K^{*}-I_{\bar{X}}$.
By Fact~\ref{fact:minimum-prob}, $|\det(A)|\geq e^{-mN\log{2}}$.
Similarly, 
\[
\Pr_{\bY \sim \calP_{K^{*}_{\eps}}}[\bY=X]=|\det(K^{*}_{\eps}-I_{\bar{X}})|=
|\det(A+\eps E)|,
\]
where $E:=\frac{1}{2}I_{N}-K^{*}$.

We use the following result to bound the difference between $\det(A+\eps E)$ and $\det(A)$.
\begin{theorem}[\cite{IR08}] \label{thm:IR08}
Let $A$ and $E$ be two $N\times N$ complex matrices. Let $\sigma_1\geq \cdots \geq \sigma_N$ be the singular
values of $A$, and suppose that the spectral norm of $E$ satisfies $\lVert E \rVert_{2} <1$. Then
\[
\left| \det(A+E) -\det(A)\right| \leq s_{N-1}\lVert E \rVert_{2} +O(\lVert E \rVert_{2}^{2}),
\]
where $s_{N-1}$ is the $(N-1)^{\text{st}}$ elementary symmetric function in the singular values of $A$
and is upper bounded by $N\sigma_1 \cdots \sigma_{N-1}$.
\end{theorem}

Since $A=K^{*}-I_{\bar{X}}$ is symmetric, 
$\{\sigma_1(A), \ldots, \sigma_N(A)\}=\{|\lambda_1(A)|, \ldots, |\lambda_N(A)|\}$.
To bound the singular values of $A$, 
we recall Weyl's theorem on the changes to eigenvalues of a Hermitian matrix that is perturbed.
Specifically, the theorem states that if $A, B$ and $C$ are Hermitian matrices of size $n\times n$, $C=A+B$,
then
\begin{align*}
\lambda_{1}(A)+\lambda_{n}(B) &\leq \lambda_{1}(C) \leq \lambda_{1}(A)+\lambda_{1}(B) \\
                       \cdots & \cdots  \cdots \\
\lambda_{n}(A)+\lambda_{n}(B) &\leq  \lambda_{n}(C) \leq \lambda_{n}(A)+\lambda_{1}(B)
\end{align*}
Now all the eigenvalues of $K^{*}$ are in $[0,1]$, and all the eigenvalues of $-I_{\bar{X}}$ are in $\{-1, 0\}$,
therefore, all the eigenvalues of $A=K^{*}-I_{\bar{X}}$ are in $[-1,1]$. It follows that 
$\sigma_1, \ldots, \sigma_N \in [0,1]$ and $s_{N-1}\leq N$.

Since $E=\frac{1}{2}I_{N}-K^{*}$, its eigenvalues are just the corresponding eigenvalues of $-K^{*}$ 
but shifted by $\frac{1}{2}$, hence all eigenvalues of $E$ are in $[-\frac{1}{2}, \frac{1}{2}]$
and $\lVert E \rVert_{2}\leq \frac{1}{2}$.   

Now we can plug in all these bounds into Theorem~\ref{thm:IR08} to obtain
\begin{align*}
\left|\Pr_{\bY \sim \calP_{K^{*}}}[\bY=X] - \Pr_{\bY \sim \calP_{K^{*}_{\eps}}}[\bY=X]\right|
& \leq \left| \det(K^{*}-I_{\bar{X}}) - \det\left(K^{*}-I_{\bar{X}}+\eps(\frac{I_N}{2}-K^*)\right)\right| \\
& \leq \frac{\eps N}{2}+O(\frac{\eps^2}{4}) <\eps N,
\end{align*}
for all small enough $\eps$.

It follows that, for all small enough $\eps$, 
\[
\Pr_{\bY \sim \calP_{K^{*}_{\eps}}}[\bY=X] \geq \Pr_{\bY \sim \calP_{K^{*}}}[\bY=X]-\eps N
\geq \Pr_{\bY \sim \calP_{K^{*}}}[\bY=X](1-Ne^{\log{2} \cdot N^{2}}\eps)
\geq \Pr_{\bY \sim \calP_{K^{*}}}[\bY=X] e^{-O(\eps)},
\]
where the last step follows from the inequality that $1-x \geq e^{-\frac{3}{2}x}$ for all $0\leq x \leq 1/2$.

Since this lower bound on DPP probability of kernel $K^{*}_{\eps}$ holds for every $X \in \mathcal{S}$, plugging it
into the log likelihood function~\eqref{eqn:likelihood-original} for the empirical distribution $D$
proves inequality~\eqref{eqn:seq1}.

\subsubsection{Proof of inequality~\eqref{eqn:seq2}} 
Since $D(\emptyset)=D([N])=0$, we have
\begin{align*}
\ell(K^{*}_{\eps}, D_{\eps}) 
&= (1-\eps)\ell(K^{*}_{\eps}, D)
+ \frac{\eps}{2}\log \left(\frac{1}{\Pr_{\bY \sim \calP_{K^{*}_{\eps}}}[\bY=\emptyset]} \right)
+ \frac{\eps}{2}\log \left(\frac{1}{\Pr_{\bY \sim \calP_{K^{*}_{\eps}}}[\bY=[N]]} \right) \\
&\leq \ell(K^{*}_{\eps}, D)
+ \frac{\eps}{2}\log \left(\frac{1}{\Pr_{\bY \sim \calP_{K^{*}_{\eps}}}[\bY=\emptyset]} \right)
+ \frac{\eps}{2}\log \left(\frac{1}{\Pr_{\bY \sim \calP_{K^{*}_{\eps}}}[\bY=[N]]} \right).
\end{align*}
We can lower bound $\Pr_{\bY \sim \calP_{K^{*}_{\eps}}}[\bY=\emptyset]$
and $\Pr_{\bY \sim \calP_{K^{*}_{\eps}}}[\bY=[N]]$, using the bounds on $\lambda_{1}(K^{*}_{\eps})$ 
and $\lambda_{N}(K^{*}_{\eps})$ obtained right after the definition of $K^{*}_{\eps}$, as follows.
\[
\Pr_{\bY \sim \calP_{K^{*}_{\eps}}}[\bY=\emptyset]=\det(I-K^{*}_{\eps})=
\prod_{i=1}^{N}\left(1-\lambda_{i}(K^{*}_{\eps}) \right) \geq (\frac{\eps}{2})^N.
\] 

Similarly,
\[
\Pr_{\bY \sim \calP_{K^{*}_{\eps}}}[\bY=[N]]=\det(K^{*}_{\eps})
=\prod_{i=1}^{N}\lambda_{i}(K^{*}_{\eps})\geq (\frac{\eps}{2})^N.
\]

Therefore,
\[
\ell(K^{*}_{\eps}, D_{\eps}) \leq \ell(K^{*}_{\eps}, D)+N\eps\log{\frac{2}{\eps}}
=\ell(K^{*}_{\eps}, D)+O(\eps \log{\frac{1}{\eps}}).
\]
\subsubsection{Proof of inequality~\eqref{eqn:seq3}} 
This follows directly from the fact that $K_{\eps}$ is an optimal kernel for the sample distribution $D_{\eps}$.

\subsubsection{Proof of inequality~\eqref{eqn:seq4}} 
Recall that
\begin{align*}
\ell(K_{\eps}, D_{\eps})= &\frac{\eps}{2}\log \left(\frac{1}{\Pr_{\bY \sim \calP_{K_{\eps}}}[\bY=\emptyset]} \right)
+\sum_{X\in \mathcal{S}}(1-\eps)D(X) \log \left(\frac{1}{\Pr_{\bY \sim \calP_{K_{\eps}}}[\bY=X]} \right) \\
 &+\frac{\eps}{2}\log \left(\frac{1}{\Pr_{\bY \sim \calP_{K_{\eps}}}[\bY=[N]]} \right).
\end{align*}
Therefore,
\allowdisplaybreaks
\begin{align*}
\ell(K_{\eps}, D)
& = \sum_{X\in \mathcal{S}}D(X) \log \left(\frac{1}{\Pr_{\bY \sim \calP_{K_{\eps}}}[\bY=X]} \right)\\
& \leq \frac{1}{1-\eps} \ell(K_{\eps}, D_{\eps}) \\
& \leq (1+2\eps)\ell(K_{\eps}, D_{\eps}) 
\qquad\text{(as long as $\eps\leq 1/2$)} \\
& \leq \ell(K_{\eps}, D_{\eps}) +(2\log{2})N\eps \\
& = \ell(K_{\eps}, D_{\eps})+O(\eps),
\end{align*}
where in the second last step we use the fact that $K_{\eps}$ is an optimal DPP kernel for $D_{\eps}$
and the bound in Fact~\ref{fact:trivial-kernel}.

\subsubsection{Putting everything together}
Adding inequalities \eqref{eqn:seq1}, \eqref{eqn:seq2}, \eqref{eqn:seq3} and \eqref{eqn:seq4} together 
yields $\ell(K_{\eps}, D)\leq \ell(K^{*}, D)+O(\eps \log{\frac{1}{\eps}})$, 
which completes the proof of Proposition~\ref{prop:perturbed}.
\end{proof}

\subsection{An optimal kernel as a limiting matrix}
Let $\veps$ be a small enough positive number so that Proposition~\ref{prop:perturbed} holds.
Define an infinite decreasing sequence $\{\eps_1, \ldots, \eps_k, \ldots\}$,
where $\eps_k=\veps/k$.
Define an infinite sequence of distributions $\{D_{\eps_1}, \ldots, D_{\eps_k}, \ldots\}$
as in~\eqref{eqn:perturbed-dist} for each $\eps_k$.
Correspondingly, let $\{K_{\eps_1}, \ldots, K_{\eps_k} \ldots \}$
be a sequence of optimal DPP kernels for distributions $\{D_{\eps_1}, \ldots, D_{\eps_k}, \ldots\}$.
Note that, since there can be more than one optimal DPP kernel for each distribution,
such an infinite sequence of kernels is in general not unique. We just fix one such sequence.

If we view\footnote{Correspondingly, the underlying matrix norm is the Frobenius norm.} 
the set of all (symmetric) $N\times N$ positive semidefinite matrices whose maximum eigenvalues are bounded
by $1$ as a subset $\mathcal{P}$ of $\R^{N^2}$, then for any $M\in \mathcal{P}$,
\[
\sum_{i, j=1}^{N}|M_{i,j}|^2 = \sum_{k=1}^{N}\sigma^{2}_{k}(M)= \sum_{k=1}^{N}\lambda^{2}_{k}(M)\leq N,
\] 
where $\sigma_{k}(M)$ denotes the $k^{\text{th}}$ singular value of matrix $M$.
It follows that $\mathcal{P}$ is a bounded set.

Consider $\{K_{\eps_1}, \ldots, K_{\eps_k}, \ldots \}$, which is an infinite sequence in $\mathcal{P}$.
Recall the Bolzano-Weierstrass theorem, which states that 
every bounded infinite sequence in a finite-dimensional Euclidean space $\R^{N^2}$ has a convergent subsequence.
Therefore there exists an infinite sequence of indices $i_1, \ldots, i_k, \ldots$ of $\mathbb{N}$
such that the infinite subsequence of matrices 
$\{K_{\eps_{i_1}}, \ldots, K_{\eps_{i_k}}, \ldots \}$ converge in $\mathcal{P}$.

Let $K_{\infty}$ be the limiting matrix of the converging sequence 
$\{K_{\eps_{i_1}}, \ldots, K_{\eps_{i_k}}, \ldots \}$. That is,
\[
K_{\infty} := \lim_{k\to \infty}K_{\eps_{i_k}}.
\]

Since each of the marginal kernel in $\{K_{\eps_{i_1}}, \ldots, K_{\eps_{i_k}}, \ldots \}$ 
satisfies the diagonal entry condition in~\eqref{eqn:perturbed-diagonal},
so does $K_{\infty}$. That is, for every $j\in [N]$
\[
(K_{\infty})_{jj} = 
\lim_{k\to \infty} (K_{\eps_{i_k}})_{jj} 
=\lim_{\eps \to 0} \left( \frac{\eps}{2}+\sum_{X\in \mathcal{S}: X \ni j} (1-\eps)D(X) \right)
=\sum_{X\in \mathcal{S}: X \ni j}D(X).
\]

Moreover, by Proposition~\ref{prop:perturbed}, 
\[
\ell(K_{\infty}, D)= \lim_{k\to \infty}\ell(K_{\eps_{i_k}}, D)
\leq \ell(K^{*}, D)+ \lim_{k\to \infty} O(\eps_{i_k} \log{\frac{1}{\eps_{i_k}}})
=\ell(K^{*}, D),
\]
which shows that $K_{\infty}$ is an optimal DPP kernel for the original sample distribution $D$.
This completes the proof of Lemma~\ref{lem:BMRU_diagonals}.

\section{An optimal kernel for $3$-colorable graphs -- Proof of Theorem~\ref{thm:completeness}}
\label{sec:completeness-proof}

We now prove Theorem~\ref{thm:completeness}, which states that if BOT graph $G_{\phi}$ is $3$-colorable (or equivalent, the corresponding $3$-CNF formula $\phi$ is satisfiable), then there exists a rank-$3$ optimal DPP marginal kernel $K$ whose log likelihood value is
$\ell(K)=\ell_{\textsc{yes}}(G_{\phi})=3\log{m}-\frac{1}{m}\sum_{(u,v)\in E(G_{\phi})}\left(\log(\deg_{G_{\phi}}(u))+\log(\deg_{G_{\phi}}(v))\right)$.

\begin{proof}
From Theorem~\ref{thm:diagonal}, 
we can assume that the diagonal of $K$ (indexed by vertices and edges of $G_\phi$) satisfies

\[K_{ii}=
\begin{cases}
\frac{\deg_{G_\phi}(u)}{m} & \mathrm{for \;} i=u \in V(G_\phi) \\
\frac{1}{m} & \mathrm{for \;} i=(u,v) \in E(G_\phi) \\
\end{cases}.
\]

We first prove a lower bound on the log likelihood function of any marginal kernel $K$, by means of the Hadamard inequality.
\begin{lemma}[Hadamard's inequality, see e.g. Theorem 7.8.1 in ~\cite{HJ12}]
\label{lem:det-leq-prod-main-diagonal}
For every positive semidefinite matrix $B$, $\det(B) \leq \prod_i B_{ii}$, with equality if and only if $B$ is diagonal.
\end{lemma}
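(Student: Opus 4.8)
The plan is to reduce to a correlation matrix by diagonal scaling and then apply the AM--GM inequality to the eigenvalues, reading off the equality case from the equality case of AM--GM. First I would dispose of the degenerate situation: if some $B_{ii}=0$, then positive semidefiniteness forces the determinant $-B_{ij}^2$ of the $2\times 2$ principal submatrix on $\{i,j\}$ to be nonnegative for every $j$, so $B_{ij}=0$; the whole $i$-th row and column vanish, $\det(B)=0=\prod_i B_{ii}$, and there is nothing to prove. So assume $B_{ii}>0$ for all $i$, set $D=\mathrm{diag}(B_{11},\dots,B_{nn})$, and put $\tilde B=D^{-1/2}BD^{-1/2}$, which is positive semidefinite with all diagonal entries equal to $1$.

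Since $\mathrm{Tr}(\tilde B)=n$, the eigenvalues $\mu_1,\dots,\mu_n\ge 0$ of $\tilde B$ satisfy $\sum_i\mu_i=n$, so by AM--GM $\det(\tilde B)=\prod_i\mu_i\le\bigl(\tfrac1n\sum_i\mu_i\bigr)^n=1$; multiplying by $\det(D)=\prod_i B_{ii}$ gives $\det(B)\le\prod_i B_{ii}$. Equality in AM--GM forces $\mu_1=\dots=\mu_n=1$, hence $\tilde B=I$ (a symmetric matrix all of whose eigenvalues are $1$), hence $B=D$ is diagonal; the converse is immediate. An alternative proof matching the geometric language of the paper is the Gram/volume argument: factor $B=Q^\top Q$ with columns $q_1,\dots,q_n$, so $B_{ii}=\|q_i\|_2^2$ and $\det(B)$ is the squared volume of the parallelepiped spanned by the $q_i$; then inductively $\mathrm{vol}(q_1,\dots,q_n)=\mathrm{vol}(q_1,\dots,q_{n-1})\cdot\mathrm{dist}\bigl(q_n,\mathrm{span}(q_1,\dots,q_{n-1})\bigr)\le\prod_i\|q_i\|_2$, with equality exactly when the $q_i$ are pairwise orthogonal, i.e.\ when $B$ is diagonal.

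The point I would be most careful about is the equality characterization, not the inequality (both routes above are short): the conclusion ``$B$ is diagonal'' holds when $B$ is positive definite (or, more generally, when every $B_{ii}>0$), but it can fail for a merely positive semidefinite $B$ with a vanishing diagonal entry, where equality holds trivially without $B$ being diagonal. In the write-up I would therefore phrase the equality clause for positive definite $B$, or simply remark that the application only invokes it for kernels with strictly positive diagonal.
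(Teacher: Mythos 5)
The paper does not prove this lemma; it is cited directly as Theorem 7.8.1 of Horn and Johnson, so there is no in-paper proof to compare against. Your two proofs (diagonal normalization plus AM--GM on the eigenvalues of the resulting correlation matrix, and the Gram/volume induction) are both correct, standard routes to Hadamard's inequality, and either would serve if the paper had chosen to include a proof. Your observation about the equality clause is also correct and worth flagging: as stated, ``equality if and only if $B$ is diagonal'' is the positive-\emph{definite} version of the theorem. For a merely positive semidefinite $B$ with some $B_{ii}=0$, both sides vanish and equality holds regardless of the off-diagonal structure of the complementary principal submatrix (e.g.\ the $3\times 3$ PSD matrix with a zero first row and column and a nondiagonal $2\times 2$ PSD block in the lower right). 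This does not affect the paper's use of the lemma --- it is invoked only for the inequality direction and, in the completeness argument, for kernels whose diagonal entries are strictly positive empirical frequencies --- but the hypothesis in the ``if and only if'' should really be positive definiteness, matching the cited theorem in Horn--Johnson.
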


Therefore, for every example $T = (u, v, (u,v)) \in E'(H_{\phi})$, we 
can upper bound the probability that marginal kernel $K$ outputs $T$ as
\[
\Pr_{\bY \sim \calP_K}[\bY = T] \leq \Pr_{\bY \sim \calP_K}[T \subseteq \bY] = \det({K_T}) 
\leq \frac{\deg_{G_\phi}(u)}{m}\frac{\deg_{G_{\phi}}(v)}{m}\frac{1}{m} = \frac{\deg_{G_\phi}(u)\deg_{G_\phi}(v)}{m^3},
\]
\noindent
The average log likelihood of the DPP associated with any marginal kernel $K$ thus satisfies
\begin{align*}
\llhd(K) &\geq 3 \log m - \frac{1}{m} \displaystyle\sum_{(u,v)\in E(G_\phi)} 
\left( \log \deg_{G_\phi}(u) + \log \deg_{G_\phi}(v) \right). 
\end{align*}

Next, for any $3$-colorable BOT graph $G_{\phi}$, we construct a rank-$3$ marginal kernel $K$
with matching log likelihood function, hence proving the optimality of the kernel.
The kernel $K$ is constructed by the natural $3$-dimensional embedding induced by the vertex coloring of $G_{\phi}$.
 
Let $\chi : V(G_\phi) \to \{1,2,3\}$ be a proper $3$-coloring of $G_{\phi}$.
We extend the coloring function $\chi$ to include the edge set $E(G_\phi)$ in the natural way:
for every $(u, v)\in E(G_\phi)$, let $\chi((u,v))=\{1,2,3\}\setminus \{\chi(u), \chi(v)\}$.
Since $\chi$ is a proper coloring of $G_\phi$, such an extended definition of $\chi$ is unambiguous.
 
Let $\mathbf{e}_1=\begin{pmatrix}1 \\ 0\\ 0 \end{pmatrix}$, 
$\mathbf{e}_2=\begin{pmatrix}0 \\ 1\\ 0 \end{pmatrix}$, and 
$\mathbf{e}_3=\begin{pmatrix}0 \\ 0\\ 1 \end{pmatrix}$. 
Let $N=|V(G_\phi)| + |E(G_\phi)|$.
Let $Q$, the embedding matrix of coloring $\chi$, 
be a $3\times N$ matrix whose columns are indexed by the vertices and edges of $G_\phi$:
\[
Q=\begin{pmatrix} 
\vertbar & \vertbar &        & \vertbar  \\
 q_1     &     q_2  & \cdots &  q_N  \\
 \vertbar & \vertbar &        & \vertbar 
\end{pmatrix},
\]
Each $q_i$ is a $3$-dimensional column vector set as follows:
\begin{itemize}
\item If $i=u\in V(G_\phi)$, then set $q_i=\sqrt{\frac{\deg_{G_\phi}(u)}{m}}\mathbf{e}_{\chi(u)}$;
\item if $i=(u,v) \in E(G_\phi)$, then set $q_i = \sqrt{\frac{1}{m}}\mathbf{e}_{\chi((u,v))}$. 
\end{itemize}
Finally, define the marginal kernel as $K=Q^\top Q$.

Clearly, $\mathrm{rank}(K)=3$. Since $K$ is a Gram matrix, it is positive semidefinite and 
the diagonals of $K$ satisfy the conditions specified in Theorem~\ref{thm:diagonal}.
Next we prove that $\lambda_1(K)$ --- the largest eigenvalue of $K$ --- is equal to $1$, which would imply that
$K$ is indeed a DPP marginal kernel.

By Courant-Fischer's variational formulation of the eigenvalues, 
\[
\lambda_1(K)=\sup_{x\in \R^N, \text{ }\lVert x \rVert_{2} =1}x^{\top} K x = 
\sup_{x\in \R^N, \text{ }\lVert x \rVert_{2} =1}\lVert Qx \rVert^{2}.
\]
For notational convenience, for each column vector in $Q$, let $q_i=\sqrt{p_i}\mathbf{e}_{j}$ where
$j\in \{1,2,3\}$ is the color of node $i$ in hypergraph $H_{\phi}$.
That is, $p_i=\frac{\deg_{G_\phi}(i)}{m}$ if node $i$ corresponds to a vertex in $G_{\phi}$
and $p_i=\frac{1}{m}$ if node $i$ corresponds to an edge in $G_{\phi}$.
Let $W_1=\sum_{i\in [N]: \chi(\text{node $i$})=1}x_{i}\sqrt{p_{i}}$,
and similarly define $W_2$ and $W_3$. 
Also define $P_1$ (respectively $P_2$ and $P_3$) to be 
$W_1=\sum_{i\in [N]: \chi(\text{node $i$})=1}p_{i}$.
An important observation is that, since $\chi$ is a proper $3$-coloring of $G_\phi$, $P_1 = P_2 = P_3 = 1$.
 
Clearly
\[
Qx=\begin{pmatrix} W_1 \\ W_2 \\ W_3 \end{pmatrix},
\]
and by the Cauchy-Schwarz inequality 
\[
W_1^2 \leq \left(\sum_{i\in [N]: \chi(\text{node $i$})=1}p_{i} \right)
\left( \sum_{i\in [N]: \chi(\text{node $i$})=1}x_{i}^{2}\right)=\sum_{i\in [N]: \chi(\text{node $i$})=1}x_{i}^{2}.
\]
Similar inequalities hold for $W_2$ and $W_3$. Therefore,
\begin{align*}
\lambda_1(K)&= \sup_{x\in \R^N, \text{ }\lVert x \rVert_{2} =1}\lVert Qx \rVert_{2}^{2}=W_1^2+W_2^2+W_3^2 \\
 &\leq \sum_{i\in [N]: \chi(\text{node $i$})=1}x_{i}^{2}+ \sum_{i\in [N]: \chi(\text{node $i$})=2}x_{i}^{2}
+ \sum_{i\in [N]: \chi(\text{node $i$})=3}x_{i}^{2} \\
 &=\sum_{i\in [N]}x_i^2=1.
\end{align*}
Furthermore, it is easy to see that we can choose $q_i$'s such that $q_i$ is proportional to $\sqrt{p_i}$ 
for every $i\in [N]$, thus making the Cauchy-Schwarz inequalities to be equalities. This shows that
the spectral norm of $K$ is indeed one.\footnote{We remark that, for $G_{\phi}$ that is not
$3$-colorable, a marginal kernel constructed from a $3$-coloring may fail to be an optimal DPP kernel for two reasons.
Firstly, such a coloring necessarily make some edges in $G_{\phi}$ monochromatic, and 
hence the corresponding DPP probabilities vanish. 
Secondly, it may incurs some discrepancy among three colors. 
Since $P_1+P_2+P_3=3$ always holds, the discrepancy then implies $\lambda_1(K)=\max\{P_1,P_2,P_3\}>1$, 
and consequently we need to scale down the matrix $K$ by a factor larger than one in order to make it a DPP kernel.
This in turn implies that kernel such constructed can not be an optimal one, as indicated by Theorem~\ref{thm:diagonal}.
}

It is well-known that, if $q_1, \ldots, q_k$ are $k$ vectors in an inner product space 
and form the $k\times k$ Gram matrix $A$ whose $(i, j)$ entry is the inner product between $q_i$ and $q_j$, 
then $\det(A)$ is equal to the square of the volume of the $k$-dimensional parallelepiped spanned by $q_1, \ldots, q_k$.
Now by our construction, for every example $T = (u, v, (u,v)) \in E'(H_{\phi})$,
the $u^{\text{th}}$, $v^{\text{th}}$, and $(u,v)^{\text{th}}$ column vectors of $Q$ are pairwise orthogonal.  
It follows that, since $\mathrm{rank}(K)=3$, 
\[
\Pr_{\bY \sim \calP_K}[\bY = T] = \Pr_{\bY \sim \calP_K}[T \subseteq \bY]
= \det(K_T) = \frac{\deg_{G_\phi}(u)}{m}\frac{\deg_{G_{\phi}}(v)}{m}\frac{1}{m}.
\]
Consequently, 
\begin{align*}
\llhd(K) 
&= -\frac{1}{T'} \sum_{t=1}^{T} \log  \Pr_{\bY \sim \calP_K}[\bY=X_t] \\
&= -\frac{1}{m} \sum_{(u,v)\in E(G_\phi)}\log\left(  \frac{\deg_{G_\phi}(u)}{m}\frac{\deg_{G_{\phi}}(v)}{m}\frac{1}{m} \right) \\
&= 3 \log m - \frac{1}{m} \displaystyle\sum_{(u,v)\in E(G_\phi)} 
\left( \log \deg_{G_\phi}(u) + \log \deg_{G_\phi}(v) \right), 
\end{align*}
which completes the proof of the theorem.
\end{proof}

\section{Good rank-$3$ kernels exist -- Proof of Theorem \ref{thm:dim3}}\label{sec:rank-3}

In this section we prove Theorem \ref{thm:dim3}, which shows that, if the log likelihood value is already close to
optimal, then there exists correspondingly a \emph{rank-$3$} kernel whose log likelihood value is also close (with worse
proximity parameter) to optimal.
\begin{reptheorem}{thm:dim3}[restatement]
Let $G_{\phi}$ be a BOT graph with maximum degree at most $k$ and let $\ell_{\textsc{yes}}(G_{\phi})$ be as defined in Definition~\ref{def:ell-yes}. Then there is a constant $C_k$ depending only on $k$ such that the following holds. Let $K$ be an optimal marginal kernel for $G_{\phi}$ with likelihood $\llhd(K) \leq \ell_{\textsc{yes}}(G_{\phi}) + \delta$ for some $0 < \delta \leq 1/(128k)^2$, then there exists a marginal kernel $K'$ for $G_{\phi}$ of dimension $3$ such that $\ell(K') \leq \ell_{\textsc{yes}}(G_{\phi})+C_{k}\delta^{1/4}$.
\end{reptheorem}

\subsection{Proof overview}\label{sec:dim3-overview}
Without loss of generality, we can assume that the diagonal of $K$ is consistent with Theorem~\ref{thm:diagonal}, 
namely its diagonal entries are 
\[
K_{ii}=
\begin{cases}
\frac{\deg_{G_\phi}(u)}{m} & \mathrm{for \;} i=u \in V(G_\phi) \\
\frac{1}{m} & \mathrm{for \;} i=(u,v) \in E(G_\phi) \\
\end{cases}.
\]
	
By our construction of $G_{\phi}$, we may assume that every element $i \in [m+n]$ occurs in at most $k$ 
training examples in $\calT$ for some absolute constant $k$.
	
Since $K$ is PSD, there exists a matrix $Q$ such that $K = Q^{\top}Q$.  Let $q_1,q_2,\ldots,q_N$ denote the columns of $Q$, 
and for a set $T \subseteq [N]$, define $Q_T = \mathrm{span}(\{q_i : i \in T \})$.
Recall that we need to compare 
\[
\ell_{\textsc{yes}}(G_{\phi})= 3 \log m - \frac{1}{m} \displaystyle\sum_{(u,v)\in E(G_\phi)} 
\left( \log \deg_{G_\phi}(u) + \log \deg_{G_\phi}(v) \right),
\]
which is the value attainable by an embedding with a ``perfect'' vector coloring (in which the coloring vectors of the vertices in each set are pairwise orthogonal), with the actual log-likelihood attained by $K$, $\ell(K)$. Recall that the determinant of each $K_X$ corresponds to the squared-volume of the parallelepiped spanned by the columns of $Q$ in the set $X=\{u,v,(u,v)\}$, where fixing the diagonal of $K$ fixes the lengths of $q_u$, $q_v$, and $q_{(u,v)}$ ($\|q_u\|_{2}^2=\deg_{G_\phi}(u)/m$, $\|q_{(u,v)}\|_{2}^2=1/m$, etc.). The volume is at least the product of the lengths times the area of the parallelogram spanned by the unit vectors $q_u/\|q_u\|_{2}$ and $q_v/\|q_v\|_{2}$, which is $\sin\theta(q_u, q_v)$. (This is tight if $q_{(u,v)}$ is orthogonal to both $q_u$ and $q_v$.) Thus, we have
\begin{align}\label{eqn:ell(K)-orth}
\ell(K)\geq 3\log{m}-\frac{1}{m}\sum_{(u,v)\in E(G_{\phi})}
\left(\log(\deg_{G_{\phi}}(u))+\log(\deg_{G_{\phi}}(v))+\log(\sin^2\theta(q_u, q_v)\right).
\end{align}
Since ``vertex'' $(u,v)$ in the hypergraph $H_{\phi}$ is connected only with vertices $u$ and $v$,
and our goal is to maximize the likelihood of $K$ (hence minimize the log likelihood $\ell(K)$),
from now on we assume that $q_{(u,v)}$ is always taken to be orthogonal to both $q_{u}$ and $q_{v}$, 
for every $(u,v)\in E(G_{\phi})$.

It is more convenient to work with likelihood functions, which are $L(K):=\exp(\ell(K))$ and 
$L_{\textsc{yes}}(K):=\exp(\ell_{\textsc{yes}}(G_{\phi}))$. 
After simple manipulations, we have
\begin{align}\label{ineq:near-optimal-assumption}
\frac{L_{\textsc{yes}}(K)}{L(K)} 
= \frac{\prod_{(u, v) \in E(G_{\phi})} \Pr_{\bY \sim \calP_K}[\bY=\{u, v, (u,v)\}]}
   {\prod_{(u, v) \in E(G_{\phi})}\|q_u\|_{2}^2 \cdot \|q_v\|_2^2 \cdot \|q_{(u,v)}\|_2^2}
=\prod_{(u, v) \in E(G_{\phi})} \sin^2 \theta(q_u, q_v) \geq \exp(-\delta m),
\end{align}
by our assumption that $\llhd(K) \leq \ell_{\textsc{yes}}(G_{\phi}) + \delta$.

That is, the ratio between the likelihood of a marginal kernel $K$ and
the likelihood of a perfect coloring kernel is just the product of $\sin^2\theta(q_u, q_v)$, 
over all edges $(u,v)\in E(G_{\phi})$.

To construct a good dimension-$3$ kernel $K'$ from kernel $K$,
our basic idea is to project each column vector of $Q$ onto a subspace of dimension $3$ (so that
the dimension of the new kernel is at most $3$), and show that the likelihood does not decrease too much. 
However, there are several issues we need to cope with. First, how to find such a subspace?  
Second, there can be pairs of column vectors $q_u$ and $q_v$ of $Q$ such that 
the angle $\theta(u,v)$ between them is small, 
and this angle may become even smaller or even zero upon projection.  
Last, there can be column vectors whose projections onto the subspace have extremely small or even zero lengths.  

Since the log likelihood function of $K$ is close to optimal, there exists an edge $(u,v)\in E(G_{\phi})$
such that $q_u$ and $q_v$ are almost orthogonal (recall that we always take $q_{(u,v)}$ to be orthogonal to 
the plane spanned by $q_u$ and $q_v$) by a simple probabilistic argument. 
The subspace spanned by $\{q_u, q_v, q_{(u,v)}\}$ will be 
the dimension-$3$ subspace onto which we project each column vector of $Q$.
To tackle the other two issues mentioned above, 
we employ a simple counting method and the negative association property of conditional DPPs
to bound the numbers of column vectors involved in these two bad situations,
and apply a greedy algorithm to assign new directions for these vectors while keep their norms unchanged.
This allows us to lower bound the probabilities of seeing these ``bad edges'' under the new DPP kernel.
Finally, we lower bound the likelihood function of the new kernel on the training set 
by providing an upper bound on the scaling factor 
needed to ensure that the new kernel's eigenvalues are bounded by $1$.

\subsection{Finding a good dimension-$3$ subspace}
Let $K=Q^{\top}Q$ be a marginal kernel such that $L(K) \geq \exp(-\delta m) L_{\textsc{yes}}(K)$.
If $\mathrm{rank}(Q)\leq 3$, we are done. So we assume that $\mathrm{rank}(Q)>3$.  
By geometric averaging, there exists a size-$3$ subset $S=\{u, v, (u,v)\}$ such that 
\[
\Pr_{\bY \sim \calP_K}[\bY = S] \geq \exp(-\delta) \prod_{i\in S}\|q_i\|_{2}^{2}.
\]
In the following, we use $\bar{S}:=[N]\setminus S$ to denote the complement of $S$.

Denote the dimension-$3$ subspace spanned by the embedding vectors of these three elements by $V$, namely
\[
V=\mathrm{span}\{q_u, q_v, q_{(u,v)}\}.
\]
Then $V$ will the subspace onto which we project all column vectors of $Q$.
More importantly, we will show that the average length orthogonal to $V$ of the column vectors of $Q$ is small.

\begin{lemma}\label{lem:sum_projection_weights}
Let $V$ be a dimension-$3$ subspace as constructed above, then 
\[
\sum_{i \in \bar{S}} \| \proj_{V^{\perp}} q_i \|_2^2 \leq \delta,	
\]
where $V^{\perp}$ denotes the dimension $\mathrm{rank}(Q)-3$ subspace orthogonal to $V$.
\end{lemma}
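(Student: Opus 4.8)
The plan is to derive an exact determinantal factorization of $\Pr_{\bY\sim\calP_K}[\bY=S]$ that isolates precisely the quantity $\sum_{i\in\bar S}\|\proj_{V^{\perp}}q_i\|_2^2$. Write $\tilde q_i:=\proj_{V^{\perp}}q_i$ for $i\in\bar S$, and let $B:=\sum_{i\in\bar S}\tilde q_i\tilde q_i^{\top}$, viewed as a positive semidefinite operator on $V^{\perp}$. The key identity I would establish is
\[
\Pr_{\bY\sim\calP_K}[\bY=S]=\det(K_S)\cdot\det\!\bigl(I-B\bigr).
\]
To prove it, start from inclusion--exclusion, $\Pr[\bY=S]=\sum_{R\subseteq\bar S}(-1)^{|R|}\det(K_{S\cup R})$, and use the elementary volume-decomposition identity $\det(K_{S\cup R})=\det(K_S)\cdot\det(\tilde G_R)$, where $\tilde G_R$ is the Gram matrix of $\{\tilde q_i:i\in R\}$; this is valid because the columns indexed by $S$ are linearly independent (their Gram determinant $\det(K_S)$ is nonzero, since $\|q_i\|_2^2=K_{ii}>0$ and $\Pr[\bY=S]>0$), so $V=\spn\{q_u,q_v,q_{(u,v)}\}$ has dimension $3$. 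Summing over $R$ gives $\sum_{R\subseteq\bar S}(-1)^{|R|}\det(\tilde G_R)=\det(I-\tilde G)$ for the full Gram matrix $\tilde G=(\langle\tilde q_i,\tilde q_j\rangle)_{i,j\in\bar S}$, and Sylvester's determinant identity converts $\det(I-\tilde G)$ into $\det(I-B)$.

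Given the factorization, the rest is short. By Hadamard's inequality (Lemma~\ref{lem:det-leq-prod-main-diagonal}) we have $\det(K_S)\le\prod_{i\in S}\|q_i\|_2^2$, so combining with the choice of $S$, namely $\Pr[\bY=S]\ge\exp(-\delta)\prod_{i\in S}\|q_i\|_2^2$, yields $\det(I-B)\ge\exp(-\delta)$. Next, the eigenvalues $\mu_1,\dots,\mu_r$ of $B$ all lie in $[0,1]$: $B$ has the same nonzero spectrum as $\tilde Q_{\bar S}^{\top}\tilde Q_{\bar S}=Q_{\bar S}^{\top}\proj_{V^{\perp}}Q_{\bar S}\preceq Q_{\bar S}^{\top}Q_{\bar S}=K_{\bar S}$, and $K_{\bar S}$ is a principal submatrix of the marginal kernel $K$, hence has eigenvalues in $[0,1]$ by Cauchy interlacing. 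Then $\prod_{j}(1-\mu_j)=\det(I-B)\ge e^{-\delta}$, so $\sum_j\bigl(-\ln(1-\mu_j)\bigr)\le\delta$, and since $-\ln(1-x)\ge x$ on $[0,1)$ we get $\sum_j\mu_j\le\delta$. Finally $\sum_j\mu_j=\mathrm{Tr}(B)=\sum_{i\in\bar S}\|\tilde q_i\|_2^2=\sum_{i\in\bar S}\|\proj_{V^{\perp}}q_i\|_2^2$, which is exactly the claimed bound.

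The only genuinely delicate step is the factorization $\Pr[\bY=S]=\det(K_S)\det(I-B)$ --- equivalently, the statement that the residual process on $\bar S$ obtained by conditioning the DPP on $S\subseteq\bY$ is itself the DPP whose embedding vectors are the projections $\proj_{V^{\perp}}q_i$. Once this is in place, the bound drops out from a single eigenvalue inequality, so I expect the bulk of the write-up to be the (standard but slightly fiddly) verification of the volume-decomposition and Sylvester steps, together with confirming $B\preceq I$ so that $\log\det(I-B)$ behaves as needed.
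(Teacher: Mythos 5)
Your proof is correct, but it takes a genuinely different route from the paper's. The paper proves this lemma probabilistically: after deriving $\Pr[\bZ=\emptyset]\ge e^{-\delta}$ (where $\bZ=\bY_S\setminus S$ is the residual process obtained by conditioning on $S\subseteq\bY$), it invokes Lyons' theorem that DPPs are negatively associated, then uses a small lemma (Claim~\ref{claim:na-expectation}) that converts a lower bound on $\Pr[\text{all indicators vanish}]$ into an upper bound on $\E[\sum\bX_i]$, and finally evaluates $\E[|\bZ|]=\sum_{i\in\bar S}\det(K_{S\cup\{i\}})/\det(K_S)=\sum_{i\in\bar S}\|\proj_{V^\perp}q_i\|_2^2$ (the volume decomposition specialized to single-element $R$). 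You instead replace the negative-association black-box with purely spectral reasoning: you establish the full determinantal factorization $\Pr[\bY=S]=\det(K_S)\det(I-\tilde G)$ from inclusion--exclusion together with the identity $\det(I-M)=\sum_R(-1)^{|R|}\det(M_R)$, then observe $\tilde G=Q_{\bar S}^{\top}\proj_{V^\perp}Q_{\bar S}\preceq K_{\bar S}\preceq I$, so that $\prod_j(1-\mu_j)\ge e^{-\delta}$ with $\mu_j\in[0,1)$ forces $\operatorname{Tr}(\tilde G)=\sum_j\mu_j\le\delta$. The two arguments converge on the same quantity ($\E[|\bZ|]=\operatorname{Tr}(\tilde G)$), but yours avoids citing both the conditioned-DPP theorem and Lyons' NA result, at the cost of carrying out the Schur-complement/volume factorization in full. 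Your version is therefore more self-contained and purely linear-algebraic; the paper's is shorter given the cited machinery, and the NA route is a little more robust in that it does not depend on the determinantal structure beyond the identity $\Pr[i\in\bZ]=\det(K_{S\cup\{i\}})/\det(K_S)$.
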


\begin{proof}
Let $\bY$ be a random set distributed according to the DPP kernel $K$, and 
let $\bY_{S}$ be a random set which is distributed according kernel $K$, 
conditioned on  $S$ occurring in $\bY_{S}$.
It is well-known that conditioning a DPP on the event that all of the elements in a fixed set are observed gives rise to
another DPP distribution (over the ground set $\bar{S}$) (see, e.g. \cite{Kul12}). 
Therefore if we define $\bZ:=\bY_{S}\setminus S$ as the random set 
containing elements outside $S$ which appear in $\bY_{S}$,
then $\bZ$ is distributed as a DPP.\footnote{
In fact, $\bZ$ is distributed as a DPP with marginal kernel 
$K-K_{\bar{S} S} K_S^{-1} K_{S\bar{S}}$~\cite{BR05}, 
where $K_{AB}$ denotes the submatrix of $K$ consisting of the rows in $A$ and columns in $B$. 
However, the actual form of the DPP kernel is unimportant for our argument here.} 

By our choice of subset $S$, 
\begin{align*}
& \qquad \exp(-\delta) \prod_{i\in S}\|q_i\|_{2}^{2} \leq \Pr_{\bY \sim \calP_K}[\bY = S] \\ 
&=\Pr_{\bY \sim \calP_K}[S \subseteq \bY] \cdot \Pr[\bY_{S} = S] 
   = \Pr_{\bY \sim \calP_K}[S \subseteq \bY] \cdot \Pr[\bZ=\emptyset] \\
&=\det(K_{S}) \cdot \Pr[\bZ=\emptyset]. 
\end{align*}
Since $\det(K_{S})\leq \prod_{i\in S}\|q_i\|_{2}^{2}$, it follows that $\Pr[\bZ = \emptyset] \geq \exp(-\delta)$.

Recall that a set of random variables $\bX_1,\bX_2,\ldots,\bX_N$ is said to be \emph{negatively associated} (NA) 
if for any two disjoint index subsets $S, T \subset [N]$ and any two functions $f, g$ 
that depend only on variables in subsets $S$ and $T$ respectively, and are either 
both monotone increasing or both monotone decreasing, we have 
\[
\EX\left[f(\bX_i: i \in S)\cdot g(\bX_j: j \in T) \right] \leq 
\EX\left[f(\bX_i: i \in S)\right] \cdot \EX\left[ g(\bX_j: j \in T) \right].
\]
Lyons~\cite{Lyo03} proved that the indicator random variables of elements in the set are negatively associated
if the probability distribution on the set is a DPP.

\begin{claim}	\label{claim:na-expectation}
If $\bX_1,\bX_2,\ldots,\bX_N$ are negatively associated $\{0,1\}$-valued random variables such that 
$\Pr[\bigwedge_{i=1}^N (\bX_i = 0)] \geq \exp(-\alpha)$, then $\EX[\sum_{i=1}^{N} \bX_i] \leq \alpha$.
\end{claim} 
\begin{proof}
First of all, define $f_i(\bX_i):=\mathbbm{1}{(\bX_i = 0)}$ for every $i \in [N]$. Note that
for disjoint subsets $S$ and $T$, $f:=\prod_{i\in S}f_i$ and $g:=\prod_{i\in T}f_i$
depend on disjoint subsets of variables and are both monotone decreasing. Therefore, inductively applying the 
negative association property of $\bX_1,\bX_2,\ldots,\bX_N$, 
we have $\EX\left[\prod_{i\in [N]}f_i(\bX_i)\right] \leq \prod_{i\in [N]}\left(\EX[f_i(\bX_i)]\right)$, or equivalently
\[
\Pr[\bigwedge_{i=1}^N (\bX_i = 0)] \leq \prod_{i=1}^N \Pr[\bX_i = 0].
\]
Therefore, 
\begin{align*}
\exp(-\alpha) &
\leq \Pr[\bigwedge_{i=1}^N (\bX_i = 0)] \leq \prod_{i=1}^N \Pr[\bX_i = 0] \\
&= \prod_{i=1}^N (1 - \Pr[\bX_i = 1]) \leq \prod_{i=1}^N \exp(-\Pr[\bX_i = 1]) \\
&= \exp(-\sum_{i=1}^{N} \Pr[\bX_i = 1]) = \exp(-\EX[\sum_{i=1}^{N} \bX_i]).
\end{align*}
Hence the claim follows.  
\end{proof}
	
\begin{claim}\label{claim:z-expectation}
Let $\bZ$ be the random set containing elements outside $S$ which appear in $\bY_{S}$ as before, 
then $\EX[|\bZ|] \leq \delta$.
\end{claim}
	
\begin{proof}
Define $\bX_i$ to be the $\{0,1\}$-indicator random variable corresponding to the event 
$i \in \bZ$ for each $i \in \bar{S}$.  
Then applying Claim~\ref{claim:na-expectation} to these variables, 
and observing that $|\bZ| = \sum_{i \in \bar{S}} \bX_i$ proves the claim.
\end{proof}
	
Now we have
\[
\EX[|\bZ|] = \sum_{i \in \bar{S}} \Pr[i \in \bZ] = 
\sum_{i \in \bar{S}} \frac{\Pr_{\bY \sim \calP_K}[S \cup \{i\} \subseteq \bY]}{\Pr_{\bY \sim \calP_K}[S \subseteq \bY]} =
\sum_{i \in \bar{S}} \frac{\det(K_{S \cup \{i\}})}{\det(K_S)},
\] 
and interpreting the determinants as squared volumes of parallelepipeds, we have
\[
\sum_{i \in \bar{S}} \frac{\det(K_{S \cup \{i\}})}{\det(K_S)} = 
\sum_{i \in \bar{S}} \| \proj_{V^{\perp}} q_i \|_2^2
\]
and applying Claim~\ref{claim:z-expectation} completes the proof.
\end{proof}

\subsection{Bounding the number of ``bad'' vertices}
Let $\eps_0<\frac18$ be a fixed constant. Define
\[
B_e = \{ (u, v) \in E(G_{\phi}) : \sin^2\theta(q_u, q_v) < \eps_0 \}.
\]
	
\begin{lemma}\label{lem:b0-small}
The size of $B_e$ can be upper bounded as $|B_e| \leq \delta m/2$.
\end{lemma}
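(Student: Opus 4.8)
The plan is to read off the bound directly from the near-optimality hypothesis recorded in \eqref{ineq:near-optimal-assumption}, via a simple averaging (Markov-type) argument on the logarithms of the $\sin^2\theta_{(u,v)}$.

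First I would recall that, since $K$ satisfies $\llhd(K)\le\llhd^*+\delta$ and we have normalized $q_{(u,v)}$ to be orthogonal to the plane $\spn\{q_u,q_v\}$ for every edge, inequality \eqref{ineq:near-optimal-assumption} gives
\[
\prod_{(u,v)\in E(G_\phi)}\sin^2\theta_{(u,v)}\ \ge\ \exp(-\delta m).
\]
Taking natural logarithms and flipping signs, and using that each $\sin^2\theta_{(u,v)}\le 1$ so that each summand is nonnegative, this is
\[
\sum_{(u,v)\in E(G_\phi)}\log\frac{1}{\sin^2\theta_{(u,v)}}\ \le\ \delta m .
\]
Since every term on the left is $\ge 0$, restricting the sum to the edges in $B_e$ only decreases it, so $\sum_{(u,v)\in B_e}\log\frac{1}{\sin^2\theta_{(u,v)}}\le \delta m$.

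Next, for every edge $(u,v)\in B_e$ we have by definition $\sin^2\theta_{(u,v)}<\eps_0<\tfrac18$, hence $\log\frac{1}{\sin^2\theta_{(u,v)}}>\log 8>1$ (here we use that logarithms are natural, so $\log 8\approx 2.08$). Therefore
\[
|B_e|\ \le\ |B_e|\cdot\log 8\ \le\ \sum_{(u,v)\in B_e}\log\frac{1}{\sin^2\theta_{(u,v)}}\ \le\ \delta m,
\]
which is exactly the claimed bound $|B_e|\le \delta m$. (More precisely one even gets $|B_e|\le \delta m/\log(1/\eps_0)$, but $|B_e|\le\delta m$ is what is needed downstream.)

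There is essentially no hard step here: the only things to be careful about are the sign/direction of the inequality after taking logs, the fact that the remaining summands are nonnegative so they may be discarded, and the numerical check that $\log(1/\eps_0)\ge 1$, which holds since $\eps_0<1/8<1/e$. I would present it in this order, with the one-line numerical remark about $\log 8>1$ flagged explicitly so the reader sees where the constant $\eps_0<1/8$ is used.
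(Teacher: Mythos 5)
Your proof is correct and takes essentially the same approach as the paper: the paper bounds $\prod_{(u,v)}\sin^2\theta_{(u,v)} < \eps_0^{|B_e|}$ (using $\sin^2\theta_{(u,v)} < \eps_0$ on $B_e$ and $\sin^2\theta_{(u,v)}\le 1$ elsewhere) and then takes logarithms against $\exp(-\delta m)$, obtaining $|B_e|\le \delta m/\log 8$; you take logarithms first, discard the nonnegative off-$B_e$ terms, and use $\log(1/\eps_0)>\log 8>1$, which is the identical calculation in a slightly different order.
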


\begin{proof}
If we rewrite \eqref{ineq:near-optimal-assumption} as
\[
\exp(-\delta m) L_{\textsc{yes}}(K) \leq L(K) \leq  L_{\textsc{yes}}(K) \prod_{(u, v) \in E(G_{\phi})} \sin^2\theta(q_u, q_v),	
\]
then
\[
\exp(-\delta m)  \leq \prod_{(u, v) \in E(G_{\phi})} \sin^2\theta(q_u, q_v) < {\eps_0}^{|B_e|}.
\]
Taking logarithms on both sides yields $- \delta m \leq  |B_e| (\log \eps_0)\leq  |B_e| \log (1/8)$, 
so $|B_e| \leq \dfrac{\delta m}{ \log{8}} \leq \delta m/2$.
\end{proof}

Define $B_1$ to be the set of ``bad'' vertices of the first kind;
that is, the set of vertices whose embedding vectors that are too close to some of their 
neighboring (in graph $G_{\phi}$) embedding vectors:
\[
B_1 = \{ u \in V(G_{\phi}) : \exists v \mathrm{\; such \; that \;} (u,v)\in E(G_{\phi}) \mathrm{\; and \;} \sin^2\theta(q_u, q_v) < \eps_0 \}.
\]
	
Then clearly, 
\begin{align}\label{ineq:bound-on-B1}
|B_1| \leq 2|B_e| \leq \delta m.
\end{align}

\subsection{Assigning projections for ``bad vectors''}	
Define $B_2$ to be the set of ``bad'' vertices of the second kind;
these are vertices whose embedding vectors' (relative) norms along $V^{\perp}$ are not small (consequently,
the norms of the projections onto $V$ of such vectors  are not large enough):  
\[
B_{2} = \{i \in [N] : \| \proj_{V^{\perp}} q_i \|_2^2 \geq \sqrt{\delta} \|q_i\|_2^2 \}.
\]

Let $B:=B_1 \cup B_2$ be the set of all ``bad'' vertices (correspondingly, ``bad'' embedding vectors). 
We will call a vertex (and its corresponding embedding vector) ``good'' if it is not ``bad''.

First of all, since $\|q_i\|_2^2 \geq \frac{1}{m}$ for each $i\in [N]$, by Lemma~\ref{lem:sum_projection_weights},
the size of $B_2$ can be bounded by
\begin{align}\label{ineq:bound-on-B2}
|B_2| \leq \sqrt{\delta} m.
\end{align}
Combined with \eqref{ineq:bound-on-B1}, this gives
\begin{align}\label{ineq:bound-on-B}
|B| \leq 2\sqrt{\delta} m.
\end{align}

For every ``good'' vertex $u \notin B$, let 

\begin{tcolorbox}[
    title=Assigning embedding vectors for ``good'' vertices,   
    halign title=center,        
    colback=white,              
    colframe=blue!75!black,              
    width=0.7\textwidth,        
    center,                     
    sharp corners,               
	toptitle=2mm,    
    bottomtitle=2mm, 
    top=-2mm,         
    bottom=5mm       
]
\begin{align*}
q_u' \leftarrow \proj_{V} q_u 
\end{align*}
\end{tcolorbox}

What about ``bad'' vertices? For each $v\in B$, we will use the greedy algorithm outlined in 
Fig.~\ref{assign_projections} to find a good direction $\mathbf{z}$ 
(as the direction of its new embedding vector $q'_v$) so that the angle between $\mathbf{z}$ and 
the new embedding vector of any neighbor (in graph $G_{\phi}$) of $v$  is not too small. 
Note that in the description of the algorithm, we assume that the the projection subspace $V=\R^3$
and use $\theta(x,y)$ to denote the angle between any two vectors $x$ and $y$ in $\R^3$.

\begin{claim}\label{claim:bound-tau-k}
There is an absolute constant $\tau_k$ depending on $k$ only such that, for all $(u,v)\in E(G_{\phi})$
with at least one of $u$ and $v$ ``bad'', the angle between their new embedding vectors satisfies that
$\sin^2 \theta(q'_u, q'_v)\geq \tau_k$.
\end{claim}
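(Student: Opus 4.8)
The plan is to prove Claim~\ref{claim:bound-tau-k} by verifying two things: that the greedy procedure of Fig.~\ref{assign_projections} never runs out of admissible directions (so it actually terminates successfully), and that whenever an edge $(u,v)\in E(G_{\phi})$ has a bad endpoint, the greedy step that assigns that endpoint its new direction is already forced to stay far from the direction previously assigned to the other endpoint. First I would record that every ``new'' vector in play is a genuine nonzero direction inside the $3$-dimensional subspace $V$. For a good vertex $u\notin B$ we have $u\notin B_2$, so $\norm{\proj_{V^{\perp}}q_u}_2^2<\sqrt{\delta}\,\norm{q_u}_2^2$ and hence $\norm{q'_u}_2^2=\norm{\proj_V q_u}_2^2>(1-\sqrt{\delta})\norm{q_u}_2^2>0$ (using $\delta\le 1/(128k)^2<1$, and $q_u\neq 0$ since $K_{uu}=\deg_{G_\phi}(u)/m>0$ for $u$ incident to an edge); for a bad vertex $v$ the procedure keeps the norm unchanged, so $\norm{q'_v}_2=\norm{q_v}_2>0$. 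Writing $\mathbf z_i\in S^2$ for the unit vector in the direction of $q'_i$, we have $\theta(q'_u,q'_v)=\theta(\mathbf z_u,\mathbf z_v)$ for every pair, so it suffices to reason about the $\mathbf z_i$'s.

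Next I would carry out the covering argument that pins down the constant. Let $\Delta=2\max(k,d)+3$ be the degree bound of $G_{\phi}$ from Proposition~\ref{prop:BOT_graph_basic} (so $\Delta$, hence $\tau_k$, depends only on $k$, as $d$ is absolute), and take the value returned by the greedy to be $\tau_k:=\tfrac{1}{4\Delta}$. Process the bad vertices in any fixed order. When the procedure reaches a bad vertex $v$, the directions it must avoid are those of the neighbors $w$ of $v$ in $G_{\phi}$ whose new vector has already been fixed --- the good neighbors (fixed at initialization) and the bad neighbors preceding $v$ --- and there are at most $\Delta$ of them. For one such direction $\mathbf z_w$, the forbidden set $\{\mathbf z\in S^2:\sin^2\theta(\mathbf z,\mathbf z_w)<\tau_k\}=\{\mathbf z\in S^2:|\ip{\mathbf z}{\mathbf z_w}|>\sqrt{1-\tau_k}\}$ is a union of two spherical caps of angular radius $\arccos\sqrt{1-\tau_k}$, of total normalized area $1-\sqrt{1-\tau_k}\le \tau_k$ (using $1-\sqrt{1-x}\le x$ for $x\in[0,1]$). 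Hence the union of all forbidden sets has normalized area at most $\Delta\tau_k=\tfrac14<1$, so an admissible unit vector $\mathbf z_v$ exists with $\sin^2\theta(\mathbf z_v,\mathbf z_w)\ge\tau_k$ for every already-fixed neighbor $w$; this is exactly what the greedy step of Fig.~\ref{assign_projections} selects, and the procedure completes.

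Finally I would deduce the claim. Fix an edge $(u,v)\in E(G_{\phi})$ with at least one of $u,v$ in $B$; among $u$ and $v$ let $w$ be the endpoint lying in $B$ whose direction is assigned last, regarding every good vertex (its direction fixed before any bad vertex is processed) as assigned ``first'', and let $w'$ be the other endpoint. Then $w'$ is a neighbor of $w$ in $G_{\phi}$ whose new vector was already fixed when $w$ was processed, so the greedy guarantee for $w$ yields $\sin^2\theta(q'_u,q'_v)=\sin^2\theta(\mathbf z_w,\mathbf z_{w'})\ge\tau_k$, which proves the claim. I expect the only real care to be bookkeeping rather than a genuine obstacle: making the processing order explicit enough that every bad edge is ``witnessed'' by its later-assigned endpoint, and keeping the edge-vectors $q_{(u,v)}$ out of the greedy's constraint sets via the standing assumption that each $q_{(u,v)}$ is taken orthogonal to $\spn\{q_u,q_v\}$; the spherical-cap estimate itself is routine.
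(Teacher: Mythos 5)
Your proof is correct and takes essentially the same route as the paper: both arguments bound $\tau_k$ away from zero by a spherical-cap covering/packing argument on $S^2$, driven by the fact that each bad vertex has only boundedly many already-fixed neighbors. The paper defines $\tau_k$ implicitly as the min--max--min $\min_{x_1,\ldots,x_k\in S^2}\max_{y\in S^2}\min_i\sin^2\theta(x_i,y)$ and lower bounds it by placing $k+1$ disjoint caps and picking the center of an unoccupied cap, whereas you give a concrete value $\tau_k=\tfrac{1}{4\Delta}$ via an area budget. Two details you handle more carefully than the paper's terse proof: (i) you regard every good vertex as assigned before any bad one, so that for a bad $u$ adjacent to a good $v$ the direction of $v$ is always in $u$'s constraint set --- the paper's algorithm, read literally with an arbitrary vertex order, would let a good vertex appear after a bad neighbor and thus escape $\mathcal N$; and (ii) your forbidden region for each fixed neighbor is correctly a union of \emph{two} antipodal caps (since $\sin^2\theta$ does not distinguish $\mathbf z_w$ from $-\mathbf z_w$), whereas the paper's $k+1$-cap packing argument, as stated, only excludes $x_i$ and not $-x_i$ from the chosen cap. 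Neither issue affects the validity of the claim, but your version closes both gaps cleanly; the well-definedness check that $q'_u\neq 0$ for good $u$ is also a worthwhile addition.
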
	
\begin{proof}
Recall that every element $u \in V(G_{\phi})$ has at most $k$ neighbors in $V(G_{\phi})$.
Then we can take $\tau_k$ to be the solution to the following optimization problem 
\[
\displaystyle\min_{x_1,x_2,\ldots,x_k \in S^2} \max_{y \in S^2} \min_i \sin^2 \theta(x_i,y),
\]
Observe that $\tau_k$ is an absolute constant between $0$ and $1$.
Indeed, one can obtain a simple lower bound on $\tau_k$ as follows: 
let $r_k$ be the maximum radius of a spherical cap such that $k+1$ identical such caps
can be placed disjointly on the surface of a unit sphere. 
Now for any configuration of $x_1,x_2,\ldots,x_k \in S^2$, at least one cap will contain no $x_i$.
Letting $y$ be the center of such a cap is a certificate that $\tau_k \geq r_{k}^2$. 
\end{proof}

\begin{figure}[htbp] 
    \centering
    \begin{tcolorbox}[
        enhanced,
        title=Assigning embedding vectors for ``bad'' vertices,
        halign title=center,        
		colback=white,              
		colframe=blue!75!black,              
		width=0.8\textwidth,        
		center,                     
		sharp corners,               
		toptitle=2mm,    
		bottomtitle=2mm, 
		top=0mm,         
		bottom=5mm       
    ]
 \begin{tabbing}
    \quad \= \quad \= \quad \= \quad \= \quad \kill 
    1. \> order all $n$ vertices arbitrarily \\
    2. \> \textbf{for each} $v \in V(G_{\phi})$ \textbf{do} \\
    \> \> \textbf{if} $v$ is ``good'' \\
    \> \> \> \textbf{continue} (that is, $q'_v \leftarrow \proj_{V}q_v$) \\
    \> \> \textbf{else} ($v$ is ``bad'') \\
    \> \> \> let $\mathcal{N}$ be the set of neighboring vertices of $v$ appearing before $v$ \\
    \> \> \> let $\mathbf{z} = \text{argmax}_{y \in S^2} \min_{u \in \mathcal{N}} \sin^2 \theta(u,y)$ \\
    \> \> \> \textit{\small (any such maximal direction works)} \\
    \> \> \> assign $q'_v \leftarrow \|q_v\|_{2}\mathbf{z}$
\end{tabbing}
    \end{tcolorbox}
    \caption{Algorithm for computing the new embedding vector for ``bad'' vertices.} \label{assign_projections}
\end{figure}

Once we set new embedding vectors for every vertex $v\in V(G_{\phi})$, let 
\[
Q'=\begin{pmatrix} 
\vertbar & \vertbar &        & \vertbar  \\
 q'_1     &     q'_2  & \cdots &  q'_N  \\
 \vertbar & \vertbar &        & \vertbar 
\end{pmatrix},
\]
and set 
\[
K' = \beta(Q')^{\top}Q',
\]
where $\beta$ is a scaling factor to be determined later to make the eigenvalues of $K'$ at most $1$.

\subsection{Bounding the likelihood value of the new kernel}	
To lower bound the likelihood value of the new kernel $K'$ on the training set, 
we compare the probabilities of kernels $K$ and $K'$ seeing an arbitrary subset $T:=\{u,v, (u,v)\}$ in the training set.  
We distinguish between two cases.

\paragraph{Both $u$ and $v$ are ``good'' vertices.}
Since $u$ is ``good'', the angle $\theta'_u$ between $q_u$ and its projection $q_u'$ is small.
Indeed, since $\|q_u'\|_2^2 > (1-\sqrt{\delta}) \| q_u \|_2^2$, we have $\sin{\theta'_u} < \delta^{1/4}$,
or $\theta'_u <\sin^{-1}\delta^{1/4} \leq \frac{2}{\pi}\delta^{1/4} < \delta^{1/4}$. 
It follows that, if both $u$ and $v$ are ``good'' vertices, 
then the angle $\theta(q_u', q_v')$ between their projected vectors satisfies
\begin{align}\label{ineq:bound-on-new-angle}
\theta(q_u', q_v')>\eps_{0}-2\delta^{1/4},
\end{align}
by the triangle inequality.

Therefore, we have
\allowdisplaybreaks
\begin{align*}
& \; \quad \frac{\Pr_{\bY' \sim \calP_{K'}}[\bY' = \{u, v, (u, v)\}]}{\Pr_{\bY \sim \calP_K}[\bY = \{u, v, (u, v)\}]}	
 \geq \frac{\Pr_{\bY' \sim \calP_{K'}}[\bY' = \{u, v, (u, v)\}]}{\Pr_{\bY \sim \calP_K}[\{u, v, (u, v)\}\subseteq \bY]} \\
& = \frac{\det(K'_{T})}{\det(K_T)} \qquad \text{(since $K'$ is a dimension-$3$ DPP marginal kernel)} \\
& = \beta^3 \dfrac{\|q'_u\|^{2}_{2} \|q'_v\|^{2}_{2} \sin^2 \theta(q_u', q_v')}
    {\|q_u\|^{2}_{2}\|q_v\|^{2}_{2}\sin^2\theta(q_u, q_v)} \\
& \geq \beta^3 (1-\sqrt{\delta})^{2} \dfrac{\sin^2(\eps_0-2\delta^{1/4})}{\sin^2 \eps_0}  
   \qquad \text{($\|q'_i\|^{2}_{2}/\|q_i\|^{2}_{2}>1-\sqrt{\delta}$ 
   for ``good'' embedding vectors and \eqref{ineq:bound-on-new-angle})}\\
& \geq \beta^3 (1-\sqrt{\delta})^2 \left(1-\frac{2\delta^{1/4}}{\sin^2 \eps_0}\right) 
\qquad \text{(using the fact that $|\frac{\mathrm{d}}{\mathrm{d}x}(\sin^2 x)| \leq 1$ for all $x$)}\\
& \geq  \beta^3 (1-\sqrt{\delta}) \left(1-\frac{3\delta^{1/4}}{\sin^2 \eps_0}\right),
\end{align*}
where the factor $\beta^3$ comes from the scaling factor $\beta$ in our definition of $K'$, 
and the fact that these probabilities are determinants of $3 \times 3$ principal submtrices of $\beta (Q')^{\top}Q'$.

\paragraph{At least one of $u$ and $v$ is ``bad''.}
Using Claim~\ref{claim:bound-tau-k}, we have for such edges 
\begin{align*}
& \; \quad \frac{\Pr_{\bY' \sim \calP_{K'}}[\bY' = \{u, v, (u, v)\}]}{\Pr_{\bY \sim \calP_K}[\bY = \{u, v, (u, v)\}]}	
   \geq \frac{\Pr_{\bY' \sim \calP_{K'}}[\bY' = \{u, v, (u, v)\}]}{\Pr_{\bY \sim \calP_K}[\{u, v, (u, v)\}\subseteq \bY]} 
   = \frac{\det(K'_{T})}{\det(K_T)} \\
& = \beta^3 \dfrac{\|q'_u\|^{2}_{2} \|q'_v\|^{2}_{2} \sin^2 \theta(q_u', q_v')}
    {\|q_u\|^{2}_{2}\|q_v\|^{2}_{2}\sin^2\theta(q_u, q_v)} \\
&\geq \beta^3 (1-\sqrt{\delta})\tau_k,
\end{align*}
since at least one of $u$ and $v$ is a ``bad'' vertex, and $\|q'_i\|^{2}_{2}/\|q_i\|^{2}_{2}=1$
for such vertices.

\paragraph{Putting the two cases together.}	
\allowdisplaybreaks
\begin{align*}
& \; \quad \frac{L(K')}{L(K)} = \frac{\prod_{(u,v)\in E(G_{\phi})}\Pr_{\bY' \sim \calP_{K'}}[\bY' = \{u, v, (u, v)\}]}
{\prod_{(u,v)\in E(G_{\phi})}\Pr_{\bY \sim \calP_{K}}[\bY = \{u, v, (u, v)\}]} \nonumber\\
&\geq \displaystyle\prod_{\substack{(u, v) \in E(G_{\phi}) \\ u \notin B \text{ and } v \notin B}} 
      \beta^3 (1-\sqrt{\delta})\left(1-\dfrac{3\delta^{1/4}}{\sin^2\eps_0}\right) 
	\displaystyle\prod_{\substack{(u, v) \in E(G_{\phi}) \\ u \in B \text{ or } v \in B }} 
	\beta^3 (1-\sqrt{\delta})\tau_k\nonumber\\
&\geq \beta^{3m}(1-\sqrt{\delta})^{m}\left(1-\dfrac{3\delta^{1/4}}{\sin^2 \eps_0}\right)^{m-k|B|} \tau_k^{k|B|} \nonumber\\
&\geq \beta^{3m}(1-\sqrt{\delta})^{m}\left(1-\dfrac{3\delta^{1/4}}{\sin^2\eps_0}\right)^{(1-2k\sqrt{\delta})m} 
   \tau_k^{2k\sqrt{\delta} m} \nonumber\\
&\geq \beta^{3m} \exp\left[-\left(\frac{3\delta}{2}+ \dfrac{3\delta^{1/4}}{\sin^2\eps_0}(1-2k\sqrt{\delta})
    +2k\log(\frac{1}{\tau_{k}})\sqrt{\delta}\right)m \right] \\  
&\geq \beta^{3m} \exp\left(-C''_{k} \delta^{1/4} m\right),	
\end{align*}
where $C''_{k}$ is some constant depending only on $k$ and in the second last step we use the inequality
$1-x\geq \exp(-3x/2)$ for all $0\leq 1/2 \leq x$.
It follows that 
\begin{align}\label{ineq:likelihood-bound}
\quad \frac{L(K')}{L_{\textsc{yes}}(K)}\geq \beta^{3m}\exp\left(-C''_{k} \delta^{1/4} m\right)\cdot \exp(-\delta m)
\geq \beta^{3m} \exp\left(-C'_{k} \delta^{1/4} m\right),
\end{align}	
for some absolute constant $C'_{k}$ depending on $k$ only.

\subsection{Bounding the scaling factor $\beta$}	
We finish the proof by providing a lower bound on $\beta$.  
First, we consider $K'' = (Q'')^{\top}Q''$, where the $i^{\text{th}}$ column of $Q''$ is $\proj_{V} q_i$ for
\emph{every} $i \in [N]$.  Recall that the spectral norm of a matrix $A$, 
$\|A\|_2$, is the largest singular value of $A$, 
we have $\|Q''\|_2 \leq \|Q\|_2 \leq 1$, 
as $Q''$ is a projection onto a subspace spanned by a subset of its columns.  
By our bound on the number of ``bad'' embedding vectors \eqref{ineq:bound-on-B},
at most $2\sqrt{\delta} m$ of the columns of $Q'$ are not the same projection as in $Q''$.  
Each column is replaced by a vector of the same length, which is at most $\sqrt{k/m}$. 
We use the Frobenius norm of $Q''-Q'$ to upper bound its spectral norm as follows.
For every ``bad'' vertex $i$,
\[
\|q_i-q'_i\|_2^{2}=\|q_i\|_2^{2}+\|q'_i\|_2^{2}-2\langle q_i, q'_i \rangle \leq 4\|q_i\|_2^{2} \leq 4k/m.
\]
Therefore,
\[
\|Q''-Q'\|_{2}^2 \leq \|Q''-Q'\|^2_F=\sum_{i=1}^{N}\|q_i-q'_i\|_2^{2} 
=\sum_{i\in B}\|q_i-q'_i\|_2^{2}
\leq 2\sqrt{\delta} m \cdot 4k/m 
=8k\sqrt{\delta}.
\]
Now, by the triangle inequality, we have
\[
\|Q'\|_2 \leq \|Q''\|_2 + \|Q''-Q'\|_2 \leq 1 + \sqrt{8k \sqrt{\delta}}. 
\]  
In order for $K' = \beta Q^{\top}Q$ to be the marginal kernel of a DPP, 
we need to ensure that $\|K'\|_2 \leq 1$.  As $\|K'\|_2 = |\beta| \|Q^{\top}Q\|_2 = |\beta| \|Q\|_2^2$, 
we can take $\beta = \frac{1}{(1+\sqrt{8k\sqrt{\delta}})^2} \geq 1-2\sqrt{8k\sqrt{\delta}}
=1-\sqrt{32k\sqrt{\delta}}\geq \exp\left(-\sqrt{72k}\delta^{1/4}\right)$.  

Finally, plugging the bound $\beta \geq \exp\left(-\sqrt{72k}\delta^{1/4}\right)$ 
into \eqref{ineq:likelihood-bound} to get that
\[
\quad \frac{L(K')}{L_{\textsc{yes}}(K)}\geq \beta^{3m} \exp\left(-C'_{k} \delta^{1/4} m\right)
\geq \exp\left(-\sqrt{648k}\delta^{1/4}m\right)\cdot \exp\left(-C'_{k} \delta^{1/4} m\right)
=\exp\left(-C_{k} \delta^{1/4} m\right),
\]
where $C_{k}=C'_{k}+\sqrt{648k}$.
Or equivalently, $\ell(K') \leq \ell_{\textsc{yes}}(G_{\phi})+C_{k}\delta^{1/4}$ for some constant $C_k$ which depends only on $k$.
This completes the proof of Theorem~\ref{thm:dim3}.

\section{Putting it all together -- Proof of Theorem~\ref{thm:soundness}}\label{sec:soundness}

Now it is time to assemble the parts we have built so far and prove the following soundness theorem,
which basically says that there must be a gap between the log likelihood functions of the training set 
derived from the YES instance BOT graph and that from the NO instance BOT graph. 
\begin{reptheorem}{thm:soundness}[Soundness theorem, restatement]
Let $\ell_{\textsc{yes}}(G_{\phi})$ be as defined in Definition~\ref{def:ell-yes}, namely, the optimal log-likelihood assuming that $\phi$ were a YES instance (cf.~Theorem~\ref{thm:completeness}). Let $k$ and $\eps'$ be as those defined in Lemma~\ref{lem:BOT-graph} and let $n=|V(G_{\phi})|$ be the number of vertices in the BOT graph. Then there exists a constant $C=C(k,\eps')$ such that the following holds: If there exists a DPP marginal kernel $K$ for $G_{\phi}$ of rank $3$ satisfying 
\[
\ell(K)\leq \ell_{\textsc{yes}}(G_{\phi})+\frac{C}{\log^2{n}},
\]
then $G_{\phi}$ is $\eps'$-close to $3$-colorable. That is, there exists a set $E' \subset E(G_{\phi})$ with $|E'| \leq \eps'|E(G_{\phi})|$ such that $G_{\phi}\setminus E'$ is $3$-colorable. 
\end{reptheorem}

In conventional graph coloring, a mapping $\chi: V \to \{c_1, \ldots, c_k\}$ \emph{satisfies} an edge $(u, v)$ if $\chi(u) \neq \chi(v)$. Vector coloring generalizes this constraint by requiring the vectors assigned to $u$ and $v$ to be (nearly) orthogonal. While Lemma~\ref{lem:BOT-graph} establishes that an ``almost perfect'' $3$-coloring of $G_{\phi}$ can decode a truth assignment satisfying most clauses in $\phi$, our reduction for maximum likelihood DPP learning relies on vertex embeddings. Consequently, we must determine if a truth assignment can be similarly recovered from an \emph{almost perfect $3$-vector-coloring}. We observe that the equality and clause gadgets from~\cite{BOT02} are ``robust'' in the following sense: a truth assignment satisfying most clauses can indeed be decoded if the vector coloring is sufficiently close to perfect. We formalize this using several simple facts of spherical geometry detailed in the following.

From now on, for any vector $v\in \R^3$, we use $(v_1, v_2, v_3)$ to denote its Cartesian coordinates.
\begin{claim} \label{claim:equality-gadget-dot-product}
Let $a,b,c,d \in S^2$ be unit vectors in $\R^3$ and let $0\leq t \leq 1/5$. If 
$|\ip{a}{b}|\leq t$, $|\ip{b}{c}|\leq t$, $|\ip{c}{a}|\leq t$, 
$|\ip{d}{b}|\leq t$, and $|\ip{d}{c}|\leq t$ then $|\ip{a}{d}|\geq 1-5t^2$.
\end{claim}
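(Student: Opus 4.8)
The natural approach is to expand $d$ in the basis $\{a,b,c\}$ and control each coordinate. First I would observe that since $|\ip ab|,|\ip bc|,|\ip ca|\le t\le 1/5$, the Gram matrix of $(a,b,c)$ is $I_3+E$ with off-diagonal entries of absolute value $\le t$ and $\|E\|_2=O(t)$, so $(a,b,c)$ is a basis of $\R^3$; writing $d=\alpha a+\beta b+\gamma c$, the relation $\|d\|_2=1$ together with $\|E\|_2=O(t)$ forces $|\alpha|,|\beta|,|\gamma|\le 1+O(t)$. Next I would pair this expansion with $b$: from $|\ip db|\le t$ and $|\ip ab|,|\ip cb|\le t$ one gets $|\beta|\le t+(|\alpha|+|\gamma|)t=O(t)$, so $\beta$ is pinned to be small. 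To finish one wants $|\ip ad|=|\alpha+\beta\ip ab+\gamma\ip ac|=|\alpha|+O(t^2)\ge 1-5t^2$, i.e.\ $|\alpha|$ close to $1$; given $\alpha^2+\gamma^2=1-\beta^2+O(t^2)=1+O(t^2)$, this is equivalent to $|\gamma|$ being small.

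\textbf{The gap, and how I would close it.} At this point the four stated hypotheses run out: they place no constraint on $\gamma$. Indeed, taking $a,b,c$ exactly orthonormal and $d=c$ satisfies $|\ip ab|=|\ip bc|=|\ip ca|=|\ip db|=0\le t$ yet gives $|\ip ad|=|\ip ac|=0<1-5t^2$ for $t\le 1/5$, so the statement as worded is not quite correct. What is missing is the companion inequality $|\ip dc|\le t$ — which is precisely what the equality gadget supplies: the vertex carrying the vector $d$ (a $\textsc{True}$-copy, playing the role of $\chi_y$) is adjacent to \emph{both} auxiliary vertices of the gadget, i.e.\ to both $b=\chi_u$ and $c=\chi_v$, so in the application one has $|\ip db|\le t$ \emph{and} $|\ip dc|\le t$. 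Granting this, pairing $d=\alpha a+\beta b+\gamma c$ with $c$ gives $|\gamma|=O(t)$ as well, hence $\alpha^2=1-\beta^2-\gamma^2+O(t^2)\ge 1-O(t^2)$, and therefore $|\ip ad|=|\alpha|+O(t^2)\ge\sqrt{1-O(t^2)}-O(t^2)\ge 1-5t^2$ on the range $t\le 1/5$, using $\sqrt{1-x}\ge 1-x$.

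\textbf{Main obstacle.} Everything qualitative is in the two lines above; the only real work is the perturbation bookkeeping needed to certify the specific constant $5$, since $(a,b,c)$ is merely near-orthonormal and so the coordinates $\alpha,\beta,\gamma$ and the inner products $\ip ab,\ip bc,\ip ca$ each carry $O(t)$ errors that must be tracked. A cleaner route that sidesteps most of this is to Gram--Schmidt $(a,b,c)$ into an exact orthonormal frame $(\hat a,\hat b,\hat c)$ with $\hat a=a$ and $\|\hat b-b\|_2,\|\hat c-c\|_2=O(t)$; then $|\ip d{\hat b}|,|\ip d{\hat c}|=O(t)$, so in the $(\hat a,\hat b,\hat c)$-coordinates $d=(d_1,d_2,d_3)$ with $d_2,d_3=O(t)$, whence $\ip ad=\ip{\hat a}d=d_1$ and $d_1^2=1-d_2^2-d_3^2=1-O(t^2)$; tracking the $O(t)$ constants (each of size at most about $2t$ on $t\le 1/5$) then lands $|\ip ad|\ge 1-5t^2$. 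In short: I would prove the statement with the hypothesis $|\ip dc|\le t$ made explicit (as it is in the equality gadget to which the claim is applied), reducing it to a short piece of linear-algebra/spherical-geometry bookkeeping; the delicate point, and the thing I would flag, is exactly that this fourth near-orthogonality is indispensable — it should appear in the hypotheses alongside $|\ip db|\le t$.
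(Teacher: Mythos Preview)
Your diagnosis is correct: the claim as printed omits the hypothesis $|\ip{d}{c}|\le t$, and your counterexample ($a,b,c$ orthonormal, $d=c$) shows it is false without it. The paper's own proof tacitly uses this fifth inequality --- after fixing $b=(0,0,1)$ and projecting to the $XY$-plane it asserts ``Similarly, $|\ip{d'}{c'}|\le t+t^2$'', which only follows from $|\ip{d}{c}|\le t$ --- and the downstream Corollary that invokes the claim does list all five near-orthogonality conditions $\theta_{(a,b)},\theta_{(a,c)},\theta_{(b,c)},\theta_{(b,d)},\theta_{(c,d)}$. So this is a typo in the statement, not a gap in the argument, and you spotted it.

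On the method: your Gram--Schmidt/basis-expansion plan and the paper's proof are the same idea in different clothing. The paper works directly in coordinates: set $b=(0,0,1)$, so $|a_3|,|c_3|,|d_3|\le t$; project $a,c,d$ to the $XY$-plane, rotate so $c'$ lies along the $Y$-axis, and then read off $|a_1|,|d_1|\ge\sqrt{1-t^2-(\frac{t+t^2}{\sqrt{1-t^2}})^2}$ and $|a_2|,|d_2|\le\frac{t+t^2}{\sqrt{1-t^2}}$, from which $|\ip{a}{d}|\ge |a_1||d_1|-|a_2||d_2|-|a_3||d_3|\ge 1-5t^2$ for $t\le 1/5$. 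This explicit coordinate choice makes the constant $5$ fall out in two lines, whereas your Gram--Schmidt route is cleaner conceptually but, as you note, requires a bit more bookkeeping to nail down the constant. Either works.
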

Intuitively, if vectors $a, b$ and $c$ in $S^2$ are close to forming an orthonormal basis,
and a fourth vector $d\in S^2$ is almost orthogonal to both $b$ and $c$, then $d$ is necessarily close to $a$.

\begin{proof}
Without loss of generality, assume that $b=(0,0,1)$. Then $0\leq |a_3|, |c_3|, |d_3| \leq t$.
Let $a', c'$ and $d'$ be the projection to the $X$-$Y$ plane of $a, c$ and $d$, respectively.
Note that $\lVert a' \rVert, \lVert c' \rVert, \lVert d' \rVert \geq \sqrt{1-t^2}$,
and since $|\ip{a}{c}|\geq \ip{a'}{c'}-|a_3||c_3|$, we have $|\ip{a'}{c'}|\leq t+t^2$.
Similarly, $|\ip{d'}{c'}|\leq t+t^2$.

Without loss of generality, assume that $c'=(0, \lVert c' \rVert)$.
Then, because $|\ip{a'}{c'}|\leq t+t^2$ and $\lVert c' \rVert \geq \sqrt{1-t^2}$, we have 
$|a_2|\leq \frac{t+t^2}{\sqrt{1-t^2}}$, and consequently 
$|a_1^2|\geq 1-t^2-\left(\frac{t+t^2}{\sqrt{1-t^2}}\right)^2$.
Similar bounds hold for $d$. Now
\begin{align*}
|\ip{a}{d}| &\geq |\ip{a}{d}|-|a_3||d_3| \\
            &\geq |a_1||d_1|-|a_2||d_2|-t^2 \\
			&\geq 1-t^2-\left(\frac{t+t^2}{\sqrt{1-t^2}}\right)^2 - \frac{t^2(1+t)^2}{1-t^2}-t^2\\
			&=1-2t^2-2t^2\frac{1+t}{1-t}\\
			&\geq1-5t^2,
\end{align*}
where the last step we use $\frac{1+t}{1-t}\leq 3/2$ for $0\leq t\leq 1/5$.
\end{proof}

\begin{corollary}\label{cor:equality-gadget-angles}
Let $a,b,c,d$ be unit vectors in $S^2$ satisfy that 
$\frac{\pi}{2} -\theta \leq  
\theta(a, b),\theta(a, c),\theta(b, c),\theta(b, d),\theta(c, d) \leq \pi/2$ 
with $0\leq \theta \leq \theta_0=\sin^{-1}(1/5)$. 
Then $\theta(a, d) \leq 3\theta$.
\end{corollary}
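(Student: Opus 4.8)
The plan is to deduce Corollary~\ref{cor:equality-gadget-angles} from Claim~\ref{claim:equality-gadget-dot-product} by translating the angle hypotheses into inner-product hypotheses, invoking the claim, and then translating the resulting inner-product bound back into an angle bound.

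First I would set $t := \sin\theta$. Since $\theta \mapsto \sin\theta$ is increasing on $[0,\pi/2]$ and $0 \le \theta \le \theta_0 = \sin^{-1}(1/5)$, this gives $0 \le t \le 1/5$, exactly the range required by Claim~\ref{claim:equality-gadget-dot-product}. Next, each of the five listed pairs $(x,y) \in \{(a,b),(a,c),(b,c),(b,d),(c,d)\}$ satisfies $\frac{\pi}{2} - \theta \le \theta_{(x,y)} \le \frac{\pi}{2}$, so $\ip{x}{y} = \cos\theta_{(x,y)} \in [0,\cos(\frac{\pi}{2} - \theta)] = [0,\sin\theta]$, and in particular $|\ip{x}{y}| \le t$. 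Thus $a,b,c,d$ meet the hypotheses of Claim~\ref{claim:equality-gadget-dot-product}, which yields $|\ip{a}{d}| \ge 1 - 5t^2 = 1 - 5\sin^2\theta$; in fact the proof of that claim produces the slightly sharper estimate $|\ip{a}{d}| \ge 1 - \frac{4t^2}{1-t}$, which I would retain since the constant is relevant below. Because the claim controls only $|\ip{a}{d}|$, I would take $\ip{a}{d} > 0$ here — legitimate in the intended equality-gadget application, where only $|\chi_u^{\top}\chi_v|^2 = \cos^2\theta_{(u,v)}$ enters $\err_\chi(G)$ and the sign of $d$ may therefore be flipped freely (or tracked consistently along a chain of gadgets).

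It then remains to convert $\cos\theta_{(a,d)} \ge 1 - \frac{4\sin^2\theta}{1-\sin\theta}$ into $\theta_{(a,d)} \le 3\theta$. The crude route, $\sin^2\theta_{(a,d)} = (1-\cos\theta_{(a,d)})(1+\cos\theta_{(a,d)}) \le 2(1-\cos\theta_{(a,d)})$, only gives $\sin\theta_{(a,d)} = O(\sin\theta)$ with a constant slightly above $3$, so the final step must be handled more carefully. I would instead compare directly with $\cos 3\theta$ using the triple-angle identity $\cos 3\theta = \cos\theta\,(1 - 4\sin^2\theta)$, reducing the target inequality $\cos\theta_{(a,d)} \ge \cos 3\theta$ to a low-degree estimate in $\sin^2\theta$ on the interval $[0,1/25]$ furnished by $\theta \le \theta_0$; once $\cos\theta_{(a,d)} \ge \cos 3\theta$ is in hand, the conclusion follows since $\theta_{(a,d)}$ and $3\theta$ both lie in $[0,\pi/2]$ (using $3\theta \le 3\theta_0 < \pi/2$), where $\cos$ is strictly decreasing. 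The main obstacle is precisely this last estimate: the constant $3$ is essentially tight for this line of argument, so one has to exploit both the sharper form of Claim~\ref{claim:equality-gadget-dot-product} and the bound $\theta \le \theta_0$ rather than settle for the easy $\sqrt{10}\,\theta$-type bound; everything preceding it is routine translation between spherical angles and inner products.
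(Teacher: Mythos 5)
Your outline takes the same route as the paper's proof: translate the angle hypotheses into inner-product bounds $|\ip{x}{y}| \le t := \sin\theta$, invoke Claim~\ref{claim:equality-gadget-dot-product}, and convert the resulting lower bound on $|\ip{a}{d}|$ back into an angle bound. The translation steps are correct, and you are right to single out the final conversion as the sticking point. But that worry is fatal rather than merely delicate: even the sharper estimate $|\ip{a}{d}| \ge 1 - 4t^2/(1-t)$ that you propose to retain does not yield the constant $3$. That estimate is in fact tight, as witnessed by $b=(0,0,1)$, $c=(0,\sqrt{1-t^2},-t)$, $a=(a_1,a_2,t)$, $d=(a_1,-a_2,-t)$ with $a_2=(t+t^2)/\sqrt{1-t^2}$ and $a_1=\sqrt{1-a_2^2-t^2}$: all five hypothesis inner products equal $\pm t$, and $\ip{a}{d}=1-4t^2/(1-t)$. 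At $t=1/5$ this equals $0.8$, so $\theta_{(a,d)}=\cos^{-1}(0.8)\approx 0.6435 > 3\theta_0 \approx 0.6041$, and the stated bound $\theta_{(a,d)}\le 3\theta$ fails.

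The paper's own one-line justification has the same gap: the cited inequality $\sin^2(3\alpha)\ge 5\sin^2\alpha$ only gives $\cos(3\theta)\le\sqrt{1-5\sin^2\theta}$, which points the wrong way for the needed $\cos(3\theta)\le 1-5\sin^2\theta$. So no route through Claim~\ref{claim:equality-gadget-dot-product} reaches the constant~$3$, and your plan to ``compare directly with $\cos 3\theta$'' cannot be completed as written. The corollary does hold in the form $\theta_{(a,d)}\le\cos^{-1}(1-5\sin^2\theta)$, which is at most about $3.2\,\theta$ on $[0,\theta_0]$ (more crudely, $\theta_{(a,d)}\le 4\theta$), and this weaker constant is all that the downstream uses in Lemmas~\ref{lem:equality_robust}, \ref{lem:clause_robust}, and~\ref{lem:decoding_color_nodes} actually need: the fixed numerical thresholds $\delta_p=\pi/300$ and $\pi/264$ there have ample slack to absorb the larger multiplier.
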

\begin{proof}
Let $(x,y) \in \{(a,b),(a,c),(b,c),(b,d),(c,d)\}$ and $t=\sin\theta$. 
Then $|\ip{x}{y}| = \cos \theta(x, y) \leq \cos(\pi/2-\theta) =\sin\theta=t$, 
so by Claim~\ref{claim:equality-gadget-dot-product}, 
$|\ip{a}{d}|\geq 1 - 5 \sin^2 \theta$.  
It follows that $\theta(a, d) \leq \cos^{-1} (1 - 5\sin^2 \theta) \leq 3\theta$,
because $\sin^2(3\alpha)\geq 5\sin^2\alpha$ for all $0\leq \alpha \leq \pi/7$
and $\theta_0 < \pi/7$.
\end{proof}

The following claim facilitates transforming between 
the angle between a pair of vectors in $S^2$ and their inner product,
especially for the cases when the two vectors are almost orthogonal or very close to each other.
\begin{claim}\label{claim:sin_to_angle}
We have the following inequalities between inner product of unit vectors and the angle between them.
\begin{enumerate}
\item Suppose $0\leq \theta \leq \pi/2$ and $\sin^{2}\theta = 1-\eps$ for some $0\leq \eps <1$.
Let $\alpha:=\pi/2-\theta$ be the angular distance between $\theta$ and $\pi/2$. 
Then $\sqrt{\eps} \leq \alpha \leq \frac{\pi}{2}\sqrt{\eps}$.
\item If $u,v\in S^2$ and $|\ip{u}{v}|=\eps$ (i.e. $u$ and $v$ are close to orthogonal), and 
let $\alpha:=\pi/2-\theta(u,v)$ be the angular distance between $\theta(u,v)$ and $\pi/2$, 
then $\eps \leq \alpha \leq \frac{\pi}{2}\eps$.
\item If $u,v\in S^2$ and $|\ip{u}{v}|=1-\eps$ (i.e. $u$ and $v$ are $\eps$-close), 
then $\sqrt{2\eps} \leq \theta(u,v) \leq \frac{\pi}{\sqrt{2}}\sqrt{\eps}$.
\end{enumerate}
\end{claim}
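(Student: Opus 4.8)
The plan is to derive all three inequalities from the single elementary estimate
\[
\tfrac{2}{\pi}x \;\le\; \sin x \;\le\; x \qquad \text{for all } x\in[0,\pi/2],
\]
whose left-hand bound is the familiar $\sin x\le x$ and whose right-hand bound is the chord inequality that follows from concavity of $\sin$ on $[0,\pi/2]$ (the chord joins $(0,0)$ to $(\pi/2,1)$). Each part then reduces to rewriting the hypothesis so that it has the form $\sin x = (\text{explicit quantity})$ for an angle $x$ that provably lies in $[0,\pi/2]$; the only point needing a little care is fixing, once and for all, which angle is meant whenever an absolute value appears on an inner product, so that $x\in[0,\pi/2]$ and the concavity bound legitimately applies.

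For part~(1) I would note that $\cos^2\theta = 1-\sin^2\theta = \eps$, and since $\theta\in[0,\pi/2]$ this gives $\cos\theta=\sqrt{\eps}$; as $\alpha=\pi/2-\theta\in[0,\pi/2]$ and $\sin\alpha=\cos\theta$, we get $\sin\alpha=\sqrt{\eps}$, and the displayed estimate applied with $x=\alpha$ yields $\sqrt{\eps}\le\alpha\le\frac{\pi}{2}\sqrt{\eps}$. Part~(2) is the same computation in disguise: writing $\alpha$ for the angular distance $|\pi/2-\theta_{(u,v)}|$, which lies in $[0,\pi/2]$ whenever $\theta_{(u,v)}\in[0,\pi]$, the identity $\cos\phi=\sin(\pi/2-\phi)$ together with $|\sin x|=\sin|x|$ on $[-\pi/2,\pi/2]$ gives $|\ip{u}{v}|=|\cos\theta_{(u,v)}|=\sin\alpha$, so $\sin\alpha=\eps$ and hence $\eps\le\alpha\le\frac{\pi}{2}\eps$. (Equivalently, part~(2) is exactly part~(1) with the quantity called $\eps$ there replaced by $\eps^2$.)

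For part~(3), I would interpret $\theta_{(u,v)}\in[0,\pi/2]$ as the angle determined by $\cos\theta_{(u,v)}=|\ip{u}{v}|=1-\eps\ge 0$, and invoke the half-angle identity $1-\cos\theta_{(u,v)}=2\sin^2(\theta_{(u,v)}/2)$ to get $\sin(\theta_{(u,v)}/2)=\sqrt{\eps/2}$, with $\theta_{(u,v)}/2\in[0,\pi/4]\subset[0,\pi/2]$; applying the displayed estimate with $x=\theta_{(u,v)}/2$ gives $\sqrt{\eps/2}\le\theta_{(u,v)}/2\le\frac{\pi}{2}\sqrt{\eps/2}$, which after multiplying through by $2$ is precisely $\sqrt{2\eps}\le\theta_{(u,v)}\le\frac{\pi}{\sqrt2}\sqrt{\eps}$. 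There is no substantive obstacle in any of this; the one thing genuinely worth double-checking is that in each part the relevant angle really does stay in $[0,\pi/2]$ (so the concavity/chord bound for $\sin$ is applicable) and that the absolute values on the inner products are resolved consistently with the ``angular distance'' conventions in the statement.
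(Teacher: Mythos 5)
Your proof is correct and follows essentially the same route as the paper: reduce each part to the two-sided bound $\frac{2}{\pi}x\le\sin x\le x$ on $[0,\pi/2]$, using $\sin\alpha=\cos\theta$ for parts (1)--(2) and the half-angle identity $1-\cos\theta=2\sin^2(\theta/2)$ for part (3). Worth noting: the paper's stated derivation of part (3) contains a typo (it writes ``$\sin\theta_{(u,v)}=1-\eps$'' and ``$\sin^2\frac{\theta_{(u,v)}}{2}=\eps^2/2$'' where the correct quantities are $\cos\theta_{(u,v)}=1-\eps$ and $\sin^2\frac{\theta_{(u,v)}}{2}=\eps/2$), and your computation is the corrected version of what the paper intends.
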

\begin{proof}
Item $1$ follows from the equality $\sin^2{\alpha}=\eps$ and the fact that, 
for $\alpha \in [0,\pi/2]$, $\frac{2}{\pi}\alpha \leq \sin{\alpha} \leq \alpha$.
Item $2$ follows directly from the above inequality. For Item $3$,
since $\sin{\theta(u,v)}=1-\eps$, $\sin^2\frac{\theta(u,v)}{2}=\eps^2/2$. 
Now apply Item $1$ to $\frac{\theta(u,v)}{2}$.
\end{proof}

\begin{remark}\label{rem:angles}
Note that our $3$-vector-coloring only specifies the inner product between any pair of vertex vectors
$\chi^{\top}(u)\chi(v)$, and for the purpose of proving hardness of DPP learning, 
all we need is that $\sin^{2}\theta(\chi(u), \chi(v))=1-|\chi^{\top}(u)\chi(v)|^2$.
This information does not uniquely determine all the vertex vectors on $S^2$ though, 
even if we fix certain vertex vector to be $(1,0,0)$, say. This is because, for a fixed vector $\chi(v)$, 
$\chi(u)$ and $-\chi(u)$ give rise to the same value of $\sin^{2}\theta(\chi(u), \chi(v))$. 
Therefore, from now on, we will not distinguish between $\chi(u)$ and $-\chi(u)$, 
and always assume $\theta(\chi(u), \chi(v))$ to be in $[0, \pi/2]$.
\end{remark}

Now we can show that the equality gadget in Fig.~\ref{fig:equality_gadget} is robust against small noise.

\begin{lemma}[The equality gadget is robust]\label{lem:equality_robust}
Let $\chi: V(G_{\phi}) \to S^2$ be a $3$-vector-coloring which satisfies that $\theta(\chi(u), \chi(v))\geq \pi/2-\delta$
for every $(u,v)\in E(G_{\phi})$. Then, for the the equality gadget in Fig.~\ref{fig:equality_gadget}, 
the angle between node $x$ and node $y$ is at most $3\delta$. 
\end{lemma}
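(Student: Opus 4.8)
The plan is to reduce the claim to a single application of Corollary~\ref{cor:equality-gadget-angles}, which is exactly the four-vector spherical-geometry statement tailored to this gadget, so essentially no new work is needed.

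First I would normalize the angles: by Remark~\ref{rem:angles} we may assume every $\theta_{(p,q)}$ lies in $[0,\pi/2]$, so the hypothesis $\theta_{(p,q)} \geq \pi/2 - \delta$ on each edge $(p,q) \in E(G_\phi)$ becomes $\pi/2 - \delta \leq \theta_{(p,q)} \leq \pi/2$. Implicitly I also take $\delta$ small enough that $\delta \leq \theta_0 = \sin^{-1}(1/5)$; since the entire reduction operates in the regime where the vector coloring is close to orthogonal, this smallness is automatic and should be recorded as a standing assumption on the noise level.

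Next I would match the gadget to the Corollary. The equality gadget of Fig.~\ref{fig:equality_gadget} on vertices $x,y,u,v$ has exactly the five edges $xu$, $xv$, $uv$, $uy$, $vy$. Setting $a = x$, $b = u$, $c = v$, $d = y$, these five edges are precisely the five pairs $(a,b),(a,c),(b,c),(b,d),(c,d)$ appearing in the hypothesis of Corollary~\ref{cor:equality-gadget-angles}, and by the previous step each of the corresponding angles lies in $[\pi/2 - \delta,\ \pi/2]$. Hence the hypothesis of the Corollary holds with $\theta = \delta$, and its conclusion yields $\theta_{(x,y)} = \theta_{(a,d)} \leq 3\delta$, which is exactly the assertion of the lemma.

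I do not expect any real obstacle: the whole substance is already encapsulated in Claim~\ref{claim:equality-gadget-dot-product} and Corollary~\ref{cor:equality-gadget-angles}, which packaged the estimate ``if $a,b,c$ are nearly an orthonormal triple and $d$ is nearly orthogonal to both $b$ and $c$, then $d$ is nearly aligned with $a$.'' The only points requiring care are the $\pm$ ambiguity of the vector coloring (absorbed via Remark~\ref{rem:angles}, so that ``the angle'' always denotes a value in $[0,\pi/2]$) and the bound $\delta \leq \theta_0$ needed to invoke the Corollary.
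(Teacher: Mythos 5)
Your proof is correct and takes the same route as the paper: a single application of Corollary~\ref{cor:equality-gadget-angles} after matching the five gadget edges to the five controlled pairs, together with the normalization from Remark~\ref{rem:angles} and the standing bound $\delta\leq\theta_0$. In fact your variable assignment $a=x$, $b=u$, $c=v$, $d=y$ is the one that actually works: the five controlled pairs $(a,b),(a,c),(b,c),(b,d),(c,d)$ become $(x,u),(x,v),(u,v),(u,y),(v,y)$, precisely the edges of the gadget, and the conclusion $\theta_{(a,d)}\leq 3\theta$ becomes the desired $\theta_{(x,y)}\leq 3\delta$. The paper's printed proof instead states the assignment $a=\chi(x)$, $b=\chi(y)$, $c=\chi(u)$, $d=\chi(v)$, which is a typo: under that assignment the pair $(a,b)=(x,y)$ is not a gadget edge (so the hypothesis of the Corollary is not met) and the conclusion would concern $\theta_{(x,v)}$, which is uninformative since $(x,v)$ is already an edge. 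So your write-up is not only correct but silently repairs this slip.
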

\begin{proof}
This follows directly by plugging $a=\chi(x)$, $b=\chi(y)$, $c=\chi(u)$ and $d=\chi(v)$ into 
Corollary~\ref{cor:equality-gadget-angles}.
\end{proof}

The case of clause gadget requires a bit more work.
Note that we make no attempt to optimize the robustness parameters in the following statements, but
focus on arguing qualitatively that the clause gadgets are robust against small noise.

By abuse of notation, in the following, we shall write the name of a node $u$ in the gadget to also represent the
coloring vector $\chi(u)$ of the node $u$, wherever there is no risk of confusion.

\begin{lemma}[The clause gadget is robust]\label{lem:clause_robust}
Let $\chi: V(G_{\phi}) \to S^2$ be a $3$-vector-coloring which satisfies that $\theta(u,v)\geq \pi/2-\delta_e$
for every $(u,v)\in E(G_{\phi})$, where $\delta_e=o(1)$ for our BOT hypergraph setting.
Further, assume that for any pair of two \textsc{True} nodes in the clause gadgets as shown in Fig.~\ref{fig:clause_gadget}
or for any pair of two \textsc{Dummy} nodes in the literal blocks as shown in Fig.~\ref{fig:literal_block},
the angle between the two color vectors $u$ and $v$ satisfy that $\theta(u,v)\leq \delta_p$,
where $\delta_p=\pi/300$ is some small absolute constant. 
Consider the clause gadget as shown in Fig.~\ref {fig:clause_gadget}. 
Then at least one of the three literals must be ``assigned'' value \textsc{True} in the sense 
that its color vector is close to that of its corresponding \textsc{True} node:
there exists at least one $i \in \{a, b, c\}$ such that $|\ip{\ell_i}{T_i}| > t_0$,
where $t_0=0.98$ (this corresponds to $\theta(\ell_i, T_i)<\pi/15$). 
\end{lemma}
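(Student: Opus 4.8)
The plan is to transplant, into the near-orthogonal regime, the classical reason why a proper $3$-coloring of this gadget (Figure~\ref{fig:clause_gadget}) must colour at least one literal with the \textsc{True} colour. Throughout I work in coordinates, treating every node name as its colour vector in $S^2$ (as in the lemma), and --- following Remark~\ref{rem:angles} --- I keep all angles in $[0,\pi/2]$ and identify antipodal vectors; abbreviate $\delta:=\delta_e+\delta_p$. By hypothesis the three \textsc{True} nodes $T_a,T_b,T_c$ of the gadget all lie within angle $\delta_p$ of a single unit vector $T$, and the three \textsc{Dummy} nodes $D_a,D_b,D_c$ that sit in the literal blocks of $\ell_a,\ell_b,\ell_c$ (attached to the literals by the edges $\ell_i - D_i$ of Figure~\ref{fig:literal_block}) all lie within angle $\delta_p$ of a single unit vector $D$; since each literal block contains the triangle $(T_i,F_i,D_i)$, the edge $T_i - D_i$ together with these two approximations gives $\theta_{(T,D)}\ge \pi/2-\delta$, i.e.\ $T\perp D$ up to error $\delta$.

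The key combinatorial step is a pigeonhole on the $v$-triangle. The triangle on $v_a,v_b,v_c$ forces these unit vectors to be pairwise within $\delta_e$ of orthogonal, so their Gram matrix is $I$ plus an operator-norm-$O(\delta_e)$ perturbation; hence they form a basis of $\R^3$ and, writing $M=[\,v_a\,|\,v_b\,|\,v_c\,]$, we get $\sum_{i\in\{a,b,c\}}\ip{v_i}{D}^2 = D^\top MM^\top D \ge 1-O(\delta_e)$. Consequently some index $i_0$ satisfies $|\ip{v_{i_0}}{D}|\ge \tfrac12$ (for $\delta_e$ small). Since $T\perp D$ up to $\delta$, this lower bound also keeps $v_{i_0}$ uniformly away from $\pm T$: one gets $|\ip{v_{i_0}}{T}|\le \sqrt{3}/2 + O(\delta)$, so $\theta_{(v_{i_0},T)}\ge \pi/6 - O(\delta)$. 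Fix this $i_0$; this plays the role of the $v$-vertex that, in the exact colouring, receives the \textsc{Dummy} colour.

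Now I would run a two-link ``pinning'' chain along the path $v_{i_0} - u_{i_0} - \ell_{i_0}$ of the gadget. The edges $v_{i_0}-u_{i_0}$ and $T_{i_0}-u_{i_0}$ make $u_{i_0}$ within $O(\delta)$ of orthogonal to both $v_{i_0}$ and $T$; because $v_{i_0}$ and $T$ span a plane non-degenerately (angle $\ge\pi/6-O(\delta)$), a unit vector almost orthogonal to both must be within $O(\delta)$ of $\pm n$, where $n:=\widehat{v_{i_0}\times T}$ is the unit normal to that plane. Next, the edges $\ell_{i_0}-u_{i_0}$ and $\ell_{i_0}-D_{i_0}$ make $\ell_{i_0}$ within $O(\delta)$ of orthogonal to $u_{i_0}$ --- hence to $n$ --- and to $D$; one checks $n$ and $D$ are themselves bounded away from parallel (both lie in the plane $T^{\perp}$, and $n$ is orthogonal there to $\proj_{T^{\perp}}v_{i_0}$, whose $D$-component is $\ge\tfrac12-O(\delta)$, so the angle between $n$ and $D$ is $\ge \pi/6-O(\delta)$), hence $\ell_{i_0}$ is within $O(\delta)$ of $\pm\,\widehat{n\times D}$. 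Finally the Lagrange identity $(v_{i_0}\times T)\times D = T\,\ip{v_{i_0}}{D} - v_{i_0}\,\ip{T}{D}$, combined with $|\ip{T}{D}|=O(\delta)$ and $|\ip{v_{i_0}}{D}|\ge\tfrac12$, shows $\widehat{n\times D}=\pm T$ up to error $O(\delta)$. Chaining the three $O(\delta)$ estimates gives $\theta_{(\ell_{i_0},T)}=O(\delta)$, hence $\theta_{(\ell_{i_0},T_{i_0})}=O(\delta_e+\delta_p)$; for $\delta_e$ sufficiently small and $\delta_p=\pi/300$ this is below $\pi/15$, i.e.\ $|\ip{\ell_{i_0}}{T_{i_0}}|>t_0$, which is the conclusion. (Alternatively, one can avoid cross products and argue purely with angles, invoking Claim~\ref{claim:equality-gadget-dot-product} and Corollary~\ref{cor:equality-gadget-angles} after projecting everything to $T^{\perp}$.)

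The main obstacle is not the geometry --- which, via the Lagrange identity above, is transparent --- but the error bookkeeping: each ``almost orthogonal to two unit vectors $\Rightarrow$ almost parallel to their normalized cross product'' step amplifies the angular error by $1/\sin(\text{angle between the two vectors})$, so the argument must certify that the two pairs it pins against, $(v_{i_0},T)$ and $(n,D)$, stay uniformly bounded away from parallel; this is exactly what the pigeonhole bound $|\ip{v_{i_0}}{D}|\ge\tfrac12$ and the approximate $T\perp D$ are for. One also has to be careful that $\delta_p$ is a genuine absolute constant rather than $o(1)$, so the implicit constant in $O(\delta_e+\delta_p)$ must be tracked well enough to see that $\delta_p=\pi/300$ still lands the total below $\pi/15$; this, together with checking that no degenerate placement of the unused node $F_{i_0}$ or of $u_{i_0}$ slips through (both are subsumed once $T$ and $D$ are pinned and $v_{i_0}$ is shown to be away from $T$), is the only real work.
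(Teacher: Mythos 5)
Your proposal is correct in outline but takes a genuinely different route from the paper. The paper's proof is by contradiction: assume $|\ip{T_i}{\ell_i}|\le t_0$ for all three $i\in\{a,b,c\}$, then apply Claim~\ref{claim:simple_clause} (the quadrilateral $D,T_i,u_i,\ell_i$) three times to conclude that all three $u_i$ lie within angle $32\delta_p$ of the fixed Dummy vector $D$; hence all three $v_i$ are within $33\delta_p$ of orthogonal to $D$, and since $\{v_a,v_b,v_c\}$ is an approximate orthonormal basis, Corollary~\ref{cor:equality-gadget-angles} forces $\theta_{(D,v_c)}\le 99\delta_p$ --- contradicting $\theta_{(D,v_c)}\ge\pi/2-33\delta_p$ once $\delta_p<\pi/264$. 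Your argument is instead direct and constructive: pigeonhole on $\sum_i\ip{v_i}{D}^2\approx 1$ to locate a single $v_{i_0}$ with $|\ip{v_{i_0}}{D}|\ge 1/\sqrt{3}$, then chain along $v_{i_0}\to u_{i_0}\to\ell_{i_0}$ by pinning each to a normalized cross product, and close the loop with the vector triple-product identity $(v_{i_0}\times T)\times D = T\ip{v_{i_0}}{D}-v_{i_0}\ip{T}{D}$. Both arguments hinge on the near-orthonormality of the $v$-triangle and on keeping the two reference pairs uniformly non-degenerate so the $1/\sin(\cdot)$ amplification stays bounded. The paper's route is shorter because Claim~\ref{claim:simple_clause} packages the quadrilateral estimate once and the final pigeonhole is a one-line contradiction; your route re-derives that geometry by hand with cross products and accumulates roughly $12\delta_e+13\delta_p$ of angular error, which does come in under $\pi/15$ for $\delta_p=\pi/300$ and $\delta_e$ small, but only just, and --- as you acknowledge --- the bookkeeping is not fully carried out in your sketch. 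One genuine advantage of your version is that it is constructive: it actually identifies which literal is forced to the \textsc{True} colour (the one whose $v_i$ is most aligned with $D$), rather than merely deducing that some literal must be.
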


We first need to argue about the coloring vectors of a simplified clause gadget as shown in Fig.~\ref{fig:proof_gadget}.
\begin{figure}
    \centering
        \begin{tikzpicture}[auto, thick]
		\node[g_node] (T)  at (0,0) {$T$} ;
		\node[g_node] (l)  at (2,2) {$\ell$} ;
		\node[w_node] (u)  at (2,0) {$u$} ;
		\node[w_node] (D)  at (0,2) {$D$} ;

		\foreach \source/\dest in {T/u, T/D, l/u, l/D}
			\path (\source) edge (\dest);
		\end{tikzpicture}
\caption{The simplified clause gadget.}
\label{fig:proof_gadget}
\end{figure}
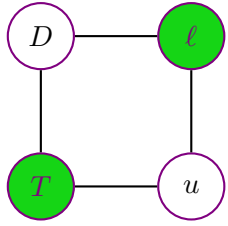

\begin{claim}\label{claim:simple_clause}
Let $D,T,u,\ell \in S^2$ be unit vectors in $\R^3$ and $t_0=0.98$ as in Lemma~\ref{lem:clause_robust}. 
Suppose that
$|\ip{D}{T}|\leq \eps$, $|\ip{D}{\ell}|\leq \eps$, $|\ip{u}{T}|\leq \eps$, $|\ip{u}{\ell}|\leq \eps$.
If $|\ip{T}{\ell}|\leq t_0$, then $|\ip{D}{u}|\geq 1-202\eps^2$.
\end{claim}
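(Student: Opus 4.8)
\emph{Proof proposal.} The plan is to use the elementary geometric fact that a unit vector in $\R^3$ which is nearly orthogonal to two fixed unit vectors $T,\ell$ that are themselves bounded away from being parallel must be nearly parallel to the unit normal of $\spn\{T,\ell\}$. Since both $D$ and $u$ are, by hypothesis, nearly orthogonal to $T$ and to $\ell$, both are forced to be nearly parallel to that single normal direction, hence nearly parallel to each other; this is exactly what $|\ip{D}{u}|\ge 1-O(\eps^2)$ asserts. Crucially, the hypothesis $|\ip{T}{\ell}|\le t_0<1$ is what makes $\spn\{T,\ell\}$ genuinely two-dimensional with a well-defined, stable normal direction.

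To make this quantitative, I would first pass to a convenient orthonormal frame. Applying an orthogonal change of coordinates of $\R^3$ (which preserves every pairwise inner product), we may assume $T=(1,0,0)$, and after a further rotation fixing the $x$-axis we may assume $\ell=(\ell_1,\ell_2,0)$. Since $\ell$ is a unit vector and $|\ip{T}{\ell}|=|\ell_1|\le t_0$, we get $\ell_2^2=1-\ell_1^2\ge 1-t_0^2$, hence $|\ell_2|\ge\sqrt{1-t_0^2}$.

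Next I would bound the coordinates of $D=(D_1,D_2,D_3)$. From $|\ip{D}{T}|\le\eps$ we get $|D_1|\le\eps$; from $|\ip{D}{\ell}|=|D_1\ell_1+D_2\ell_2|\le\eps$ together with $|D_1\ell_1|\le\eps t_0$ we get $|D_2\ell_2|\le\eps(1+t_0)$, so
\[
|D_2|\le \frac{\eps(1+t_0)}{\sqrt{1-t_0^2}}=\eps\sqrt{\tfrac{1+t_0}{1-t_0}}.
\]
With $t_0=0.98$ this reads $|D_2|\le\sqrt{99}\,\eps$, so $D_1^2+D_2^2\le 100\eps^2$ and, as $D$ is a unit vector (and in our gadget regime $\eps=o(1)$, in particular $\eps<1/10$), $D_3^2=1-D_1^2-D_2^2\ge 1-100\eps^2$. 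The same computation applied verbatim to $u=(u_1,u_2,u_3)$, using $|\ip{u}{T}|\le\eps$ and $|\ip{u}{\ell}|\le\eps$, gives $|u_1|\le\eps$, $u_2^2\le 99\eps^2$, and $u_3^2\ge 1-100\eps^2$.

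Finally I would combine the estimates. Using Remark~\ref{rem:angles} to take all inner products nonnegative,
\[
|\ip{D}{u}|\ \ge\ |D_3|\,|u_3|-|D_1|\,|u_1|-|D_2|\,|u_2|\ \ge\ (1-100\eps^2)-\eps^2-99\eps^2\ =\ 1-200\eps^2\ \ge\ 1-202\eps^2 .
\]
The argument is essentially bookkeeping, so there is no serious obstacle; the one point demanding care is that the non-degeneracy constant must be used correctly — the bound on $|D_2|$ (and $|u_2|$) blows up as $t_0\to 1$, and it is precisely the numerical value $t_0=0.98$, giving $\tfrac{1+t_0}{1-t_0}=99$, that keeps the accumulated error below the claimed $202\eps^2$. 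One should also record the (harmless) requirement that $\eps$ be small enough, e.g. $\eps<1/10$, for the bound $D_3^2\ge 1-100\eps^2$ to be non-vacuous, which holds since $\eps=o(1)$ in the BOT hypergraph setting.
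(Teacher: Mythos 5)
Your proof is correct and takes essentially the same approach as the paper's: fix $T=(1,0,0)$ and $\ell$ in the $X$-$Y$ plane, bound the first two coordinates of $D$ and $u$ by $O(\eps)$ using the non-degeneracy $|\ip{T}{\ell}|\le t_0$, conclude the third coordinates are both close to $\pm1$, and combine via the reverse triangle inequality. The paper obtains the same $1-202\eps^2$ with marginally looser intermediate constants ($|u_2|\le 10\eps$, $u_3^2\ge 1-101\eps^2$) and leaves the ``$\eps$ small enough'' requirement implicit; your invocation of Remark~\ref{rem:angles} in the last step is unnecessary (the reverse triangle inequality holds regardless of sign) but harmless.
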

Basically the Claim asserts that if the $3$-vector-coloring satisfies all four edges in the gadget with closeness parameter
$1-\eps$,
then if vectors $T$ and $\ell$ are not close to each other, this would 
force vectors $D$ and $u$ to be aligned along almost the same direction --- or equivalently, be assigned to
the same ``color'' (note that if $T$ and $\ell$ are very close, then $u$ would have the freedom to choose
between the color of $D$ or the color that is ``orthogonal'' to both $T$ and $D$; namely, the color of 
\textsc{False} nodes).

\begin{proof}
For convenience, we take the plane spanned by $T$ and $\ell$ to be the $X$-$Y$ plane and in particular 
set $T=(1,0,0)$ and $t:=|\ip{T}{\ell}|\leq t_0$. Then without loss of generality, let 
$\ell=(t,\sqrt{1-t^2},0)$.
Let $u=(u_1, u_2, u_3)$. From $|\ip{u}{T}|\leq \eps$ and $|\ip{u}{\ell}|\leq \eps$,
we deduce that $|u_1|\leq \eps$ and $|u_2|\leq \frac{1+t}{\sqrt{1-t^2}}\eps =\sqrt{\frac{1+t}{1-t}}\eps \leq 10\eps$.
It follows that $u_3^2\geq 1-101\eps^2$. 
By symmetry, the same bounds hold for the three coordinates of vector $D$. Therefore
\begin{align*}
|\ip{D}{u}|& =|D_1 u_1+D_2 u_2 +D_3 u_3|\geq |D_3| |u_3|-|D_1||u_1| -|D_2| |u_2| \\
           &\geq 1-101\eps^2- \eps^2-(10\eps)^2\\
		   &=1-202\eps^2. \qedhere
\end{align*}
\end{proof}

\begin{proof}[Proof of Lemma~\ref{lem:clause_robust}]
Pick any fixed pair of \textsc{True} node and \textsc{Dummy} node in a literal block (see Fig.~\ref{fig:literal_block})
and denote them by $T$ and $D$ respectively.

Suppose that none of the literal color vectors $\ell_a, \ell_b$ and $\ell_c$ in Fig.~\ref{fig:clause_gadget}
is close to their corresponding \textsc{True} node, i.e. $|\ip{T_i}{\ell_i}|\leq t_0$,
for every $i \in \{a,b,c\}$.

Let us consider nodes $\ell_a$ and $u_a$. Since $\ell_a$ is connected to some $D_a$ node in its literal block (so 
the angle between $\ell_a$ and $D_a$ is $\pi/2-o(1)$), and the angle between $D_a$ and the fixed 
\textsc{Dummy} node $D$ is at most $\delta_p$, it follows that $\theta(\ell_a, D)\geq \pi/2-\delta_p-o(1)$.
On the other hand, since $D$ and $T$ is connected by an edge, they are very close to orthogonal; 
moreover, the angle between $T_a$ and $T$ is at most $\delta_p$. Therefore, 
$\theta(T_a, D)\geq \pi/2-\delta_p-o(1)$.
Since $u_a$ is connected to both $T_a$ and $\ell_a$, we have 
$\theta(T_a, u_a)\geq \pi/2-o(1)$ and $\theta(\ell_a, u_a)\geq \pi/2-o(1)$.
Therefore, the conditions of Claim~\ref{claim:simple_clause} are met for vectors $D$, $T_a$, $u_a$ and $\ell_a$
for $\eps=\sin(\delta_p+o(1))\leq \delta_p+o(1)\leq 1.01\delta_p$ for all sufficiently large BOT graph size $n$.
Hence, by Claim~\ref{claim:simple_clause}, $|\ip{D}{u_a}|\geq 1-202\eps^2 \geq 1-207\delta_p^2$. 
By the same reasoning, we have $|\ip{D}{u_b}|\geq 1-207\delta_p^2$ and $|\ip{D}{u_c}|\geq 1-207\delta_p^2$.
By Claim~\ref{claim:sin_to_angle}, item 3, 
the angle between $D$ and each of the $u_a, u_b$ and $u_c$ is at most $32\delta_p$.

To finish the proof, note that, since each $u_i$ is connected with $v_i$ for every $i \in \{a,b,c\}$,
it follows that the angle between $D$ and each $v_i$ is at least $\pi/2-33\delta_p$ (whenever $n$ is large enough).
Applying Corollary~\ref{cor:equality-gadget-angles} and noting that $v_a$, $v_b$ and $v_c$
form an almost orthonormal basis, and $D$ is close to orthogonal to both $v_a$ and $v_b$ with closeness 
parameter $\theta=33\delta_p$,
we conclude that $\theta(D, v_c)\leq 99\delta_p$. But we also know that
the angle between $D$ and $v_c$ is at least $\pi/2-33\delta_p$, therefore we reach 
a contradiction as long as $\delta_p<\pi/264$.
\end{proof}

\begin{proof}[Proof of Theorem~\ref{thm:soundness}]
By the discussion in Section~\ref{sec:dim3-overview}, without loss of generality, $K$ has diagonal as specified in Theorem~\ref{thm:diagonal}; and if we decompose $K$ as $K=Q^\top Q$, then $q_{(u,v)}$ (which only appears in the examples $\{u,v,(u,v)\}$ by construction) is orthogonal to both $q_u$ and $q_v$. Therefore we may begin by writing the maximum log likelihood of a DPP given by kernel $K$ as
\[
\ell(K)=3\log{m}-\frac{1}{m}\sum_{(u,v)\in E(G_{\phi})}
\left(\log(\deg_{G_{\phi}}(u))+\log(\deg_{G_{\phi}}(v))+\log(\sin^2\theta(u,v))\right).
\]
Let $K$ be a DPP kernel of rank $3$ which satisfies that $\ell(K)\leq \ell_{\textsc{yes}}(G_{\phi})+\frac{C}{\log^2{n}}$, 
where $C$ is some very small absolute constant to be determined later.
Therefore, 
$\EX_{(u,v)\in E(G_{\phi})}[\log(\sin^2\theta(u,v))]\geq -\frac{C}{\log^2{n}}$. 
By Markov's inequality, except for an $\eps''$ fraction of the edges in the BOT graph $G_{\phi}$, 
all edges $(u,v)$ satisfy that 
\[
\log(\sin^2\theta(u,v))\geq -\frac{C/\eps''}{\log^2{n}},
\]
where $\eps''$ is an absolute constant to be specified later. Call an edge in the BOT graph that satisfies this
condition ``good'', and ``bad'' otherwise. 
Consequently, 
\[
\sin^2\theta(u,v) \geq e^{-\frac{C/\eps''}{\log^2{n}}}\geq 1-\frac{C}{\eps''\log^2{n}}. 
\]
By Claim~\ref{claim:sin_to_angle}, the angle between the two vertices $u$ and $v$ of a ``good'' edges $(u,v)$ satisfies that 
$\theta(u,v) \geq \frac{\pi}{2}-\sqrt{\frac{C\pi}{2\eps''}}\frac{1}{\log{n}}$. We remove all ``bad'' edges from $G_{\phi}$ and call the resulting graph $G'_{\phi}$. Note that we remove $\eps''|E(G_{\phi})|$ edges from $G_{\phi}$ and all edges in $G'_{\phi}$ are ``good''.

Next we record some facts about the ``robustness'' of a very strong expander against deletion of a small fraction of its edges. Following Alon and Capalbo, we adopt the following notion of expanders which is slightly weaker than \emph{very strong} expanders.
\begin{definition}[Strong expanders~\cite{AC07}]
A graph $G=(V,E)$ is called a \emph{$d$-strong expander} on $n$ vertices if its minimum vertex degree is at least $d$, 
the average degree in any subgraph of $G$ on at most $n/10$ vertices is at most $2d/9$, and 
the average degree in any subgraph of $G$ on at most $n/2$ vertices is at most $8d/9$.
\end{definition}

\begin{lemma}[\cite{AC07}] \label{lem:Dense}
Let $G=(V,E)$ be a very strong $d$-regular expander on $n$ vertices. Let $E' \subset E$ be an arbitrary subset of edges satisfying $|E'|\leq \frac{nd}{150}$. Consider the process that we first delete $E'$ from $G$, and then repeatedly delete from $G$ the set of vertices with degree smaller than $3d/4+2$ (and all the edges incident to such vertices) as long as such vertices exist. Let $\mathrm{Dense}(G, E')$ be the resulting subgraph of $G$ of this vertex-trimming process, and let $S$ denote the set of vertices deleted from $G$ during this process. Then $|S|\leq \frac{15|E'|}{d}$.
\end{lemma}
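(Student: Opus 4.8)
The plan is to follow the trimming process one deleted vertex at a time and compress everything into a single counting inequality. List the deleted vertices $v_1,v_2,\dots,v_s$ in the order in which the trimming removes them (breaking ties inside a round arbitrarily), and for $t\le s$ put $S_t=\{v_1,\dots,v_t\}$, so that $S=S_s$. When $v_i$ is removed, say in round $r$ of the trimming, its degree at the start of round $r$ is below $3d/4+2$, whereas its degree in $G$ is exactly $d$ since $G$ is $d$-regular; hence at least $d/4-2$ of the edges of $G$ incident to $v_i$ are already absent at the start of round $r$. Each such absent edge $\{v_i,x\}$ is absent for one of only two reasons: either it lies in $E'$, or $x$ was removed in an earlier round, in which case $x\in S_i\subseteq S_t$. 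Summing over $i=1,\dots,t$, and noting that every edge of $E'$ is counted at most twice (once per endpoint) and every edge of $G$ internal to $S_t$ at most once, I obtain
\[
t\left(\frac{d}{4}-2\right)\ \le\ 2|E'|+e_G(S_t)\qquad\text{for every }t\le s,
\]
where $e_G(S_t)$ denotes the number of edges of $G$ with both endpoints in $S_t$. Call this inequality $(\star)$.

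Next I would feed the expansion of $G$ into $(\star)$. Whenever $|S_t|\le n/10$, the defining property of a very strong expander says that the subgraph of $G$ induced on $S_t$ has average degree at most $d/6$, i.e. $e_G(S_t)\le td/12$; substituting into $(\star)$ and using $\frac{d}{4}-\frac{d}{12}=\frac{d}{6}$ gives $t\left(\frac{d}{6}-2\right)\le 2|E'|$. The delicate point --- and the main obstacle --- is that a priori I have no bound on $|S|$, so I may not apply this directly to $S$ itself: the expansion hypothesis is available only for vertex sets of size at most $n/10$. I resolve this with a stopping-time argument. Suppose for contradiction that $|S|>n/10$ and apply the last bound with $t_0=\lfloor n/10\rfloor\le|S|$: since $t_0\le n/10$ it yields $t_0\le \frac{2|E'|}{d/6-2}\le \frac{nd}{75(d/6-2)}$, and for $d$ a sufficiently large absolute constant the right-hand side is at most $\tfrac{9n}{100}$, hence strictly smaller than $\lfloor n/10\rfloor$ for all large $n$ (the very strong expanders are claimed only for large $n$) --- a contradiction. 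Therefore $|S|\le n/10$.

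Finally, with $|S|\le n/10$ in hand I apply $(\star)$ together with the expansion bound once more, now with $t=s=|S|$, to get $|S|\left(\frac{d}{6}-2\right)\le 2|E'|$, that is $|S|\le\frac{2|E'|}{d/6-2}$. Choosing $d$ to be a large enough absolute constant --- any $d\ge 60$ works, since then $\frac{d}{6}-2\ge\frac{2d}{15}$ --- this gives $|S|\le\frac{2|E'|}{2d/15}=\frac{15|E'|}{d}$, which is exactly the asserted bound; note this is consistent with the hypothesis $|E'|\le nd/150$, under which $15|E'|/d\le n/10$. The only genuinely subtle step is the bootstrapping in the second paragraph: one must first certify, via the prefix $S_{t_0}$, that the trimming cannot have removed more than $n/10$ vertices before one is entitled to invoke the expansion hypothesis on $S$; everything else is bookkeeping plus a suitable choice of the constant $d$.
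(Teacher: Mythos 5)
The paper cites this lemma from Alon--Capalbo~\cite{AC07} and does not reproduce a proof, so there is no in-text argument to compare yours against. On its own terms, your proof is sound and uses the standard devices for a peeling lemma: a double count $t(d/4-2) \le 2|E'| + e_G(S_t)$ for every prefix $S_t$ (each $E'$-edge charged at most twice, each non-$E'$ edge interior to $S_t$ charged once, to its later-removed endpoint), the $n/10$-expansion bound to control $e_G(S_t)$, and a stopping-time bootstrap on the prefix of size $\lfloor n/10\rfloor$ to justify applying the expansion hypothesis to $S$ itself.

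The one internal inconsistency is the parenthetical ``any $d\ge 60$ works.'' That threshold only handles the final step $d/6-2\ge 2d/15$. The bootstrap in your second paragraph needs more: with the $d/4-2$ accounting and $|E'|=nd/150$, the prefix bound at $d=60$ degenerates to exactly $n/10$, yielding no contradiction; getting it down to $9n/100$ as you assert requires roughly $d\ge 108$. This is harmless in context --- you already hedge with ``sufficiently large absolute constant,'' and $d$ is in any case fixed by the expander construction rather than something you get to choose --- and you could also lower the requirement by noting that integrality gives $d/4-1$ rather than $d/4-2$ missing edges whenever $3d/4$ is an integer. But as written the two quantitative claims do not agree with each other. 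The rest of the logic is correct.
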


Let $\mathrm{Dense}(G, E')$ be as defined in Lemma~\ref{lem:Dense}. It follows that $\mathrm{Dense}(G, E')$ is a $\frac{3d}{4}$-strong expander. The following lemma shows that the diameter of $\mathrm{Dense}(G, E')$ is $O(\log{n})$.
\begin{lemma}[\cite{AC07}] \label{lem:strong_expander_diameter}
For any $d$ and $n$, a $d$-strong expander on $n$ vertices has diameter at most $\frac{2}{3}\log{n}+14$.
\end{lemma}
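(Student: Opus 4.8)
The plan is a breadth‑first‑search argument. Fix a vertex $u$ and write $B_i=B(u,i)$ for the ball of radius $i$ around $u$; I will show that $|B_i|$ grows geometrically in $i$ until it exceeds $n/2$, and then exploit the fact that two balls of size more than $n/2$ must intersect.

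The first step is to extract vertex expansion from the strong‑expander hypotheses. Suppose $|B_i|\le n/2$ and apply the definition to the induced subgraph on $S=B_i$: its average degree is at most $8d/9$, so $S$ spans at most $\tfrac{4d}{9}|S|$ edges, while the sum of the degrees of the vertices of $S$ is $d|S|$ (using $d$‑regularity), so at least $d|S|-\tfrac{8d}{9}|S|=\tfrac{d}{9}|S|$ edges leave $S$; since each vertex outside $S$ absorbs at most $d$ of these edges, $|N(S)\setminus S|\ge\tfrac19|S|$ and hence $|B_{i+1}|\ge\tfrac{10}{9}|B_i|$. When $|B_i|\le n/10$ the sharper $2d/9$ bound upgrades this to $|B_{i+1}|\ge\tfrac{16}{9}|B_i|$, and for the very small balls (radius below half the girth of the underlying Ramanujan‑type expander, which is $\Omega(\log n)$) the ball induces a forest, so in fact $|B_i|\ge(d-1)^i$. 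Chaining these regimes, $|B_r|>n/2$ for some $r\le\tfrac13\log n+O(1)$ once $d$ is a large enough absolute constant. Symmetrically, by applying the $8d/9$‑bound, and then the $2d/9$‑bound, to the complement of a large ball, the complement of $B(u,\rho)$ shrinks by a factor at most $\tfrac89$ per step while it has size $\le n/2$ and by a factor at most $\tfrac29$ per step once it has size $\le n/10$; the first of these takes only a bounded number of steps (about $14$ — this is the source of the additive constant) to bring the complement from size $n/2$ down to size $n/10$. Finally, for arbitrary $u,v$, the balls $B(u,r)$ and $B(v,r)$ each have more than $n/2$ vertices and so share a vertex, giving $d_G(u,v)\le 2r=O(\log n)$; a slightly more careful bookkeeping — grow the ball around $u$ until its complement has size below $n/10$, which by the above costs $O(\log n)+14$ steps, and then let the radius‑$(\tfrac23\log n)$ ball around $v$ (whose size is more than $n-n/10$, since the complement collapses at rate $\tfrac29$) swallow what remains — yields the stated bound $\tfrac23\log n+14$.

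The main obstacle is the constant $\tfrac23$. The purely combinatorial strong‑expander definition only produces geometric ball growth with base $16/9$, hence a radius of $\approx 1.74\log n$ and a diameter bound that is off by a constant factor; squeezing it down to $\tfrac23\log n$ genuinely uses the extra input that these explicit expanders are near‑Ramanujan of large constant degree. One can feed this in either combinatorially — large girth $\Omega(\log n)$ makes small balls grow at the full tree rate $d-1$, so they reach a fixed power of $n$ within $\tfrac13\log n$ steps — or spectrally, by invoking the classical bound $\mathrm{diam}(G)\le\lceil\log(n-1)/\log(d/\lambda)\rceil$ with $\lambda$ the second eigenvalue and $d$ chosen large enough that $\log(d/\lambda)\ge\tfrac32$. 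With that ingredient fixed, the remaining estimates are elementary arithmetic with the growth/shrink rates $d-1$, $\tfrac{16}{9}$, $\tfrac{10}{9}$ and $\tfrac29$.
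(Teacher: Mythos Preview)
The paper does not prove this lemma; it simply quotes it from~\cite{AC07}. So there is no in-paper proof to compare against, and the question reduces to whether your argument stands on its own.

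The BFS/ball-growth framework is correct, but your extraction of the growth rate is where the argument goes wrong. You bound the number of edges \emph{leaving} $S=B_i$ and divide by $d$ to lower-bound $|N(S)\setminus S|$; this gives the factor $16/9$ in the small regime and, as you yourself observe, a leading constant that is too large. The missing idea is to apply the average-degree hypothesis to $B_{i+1}$ rather than to $B_i$: every edge incident to $B_i$ lies inside $B_{i+1}$ (all neighbours of $B_i$ are at distance $\le i+1$), and there are at least $d|B_i|/2$ such edges; but the strong-expander hypothesis says the induced subgraph on $B_{i+1}$ spans at most $\tfrac{d}{9}|B_{i+1}|$ edges whenever $|B_{i+1}|\le n/10$, forcing $|B_{i+1}|\ge\tfrac{9}{2}\,|B_i|$. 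The base $9/2$ is exactly what produces the constant $1/\log(9/2)\approx 2/3$. The dual argument on complements (all neighbours of $T_{i+1}=V\setminus B_{i+1}$ lie in $T_i$, so edges incident to $T_{i+1}$ are contained in the induced subgraph on $T_i$) gives the matching contraction rate in the final phase, and the passage through the intermediate regime $n/10<|S|\le n/2$ contributes only an additive constant. All of this is purely combinatorial.

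Consequently, your fallback to near-Ramanujan spectral gaps or large girth is both unnecessary and illegitimate: the lemma is stated for an \emph{arbitrary} $d$-regular strong expander, with no spectral or girth hypothesis, so you may not import those properties. Separately, the ``careful bookkeeping'' in your final paragraph does not add up: you grow one ball for $O(\log n)+14$ steps and the other for $\tfrac{2}{3}\log n$ steps, then assert the total is $\tfrac{2}{3}\log n+14$; one of the two $\Theta(\log n)$ contributions has silently disappeared.
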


Let $n$ be the number of variables in the $3$-CNF instance $\phi$. Let $G_{\mathrm{exp}}$ be the underlying very strong expander in our construction of the BOT graph (c.f. Section~\ref{sec:BOT-graph-construction}). Then $G_{\mathrm{exp}}$ has $|V(G_{\mathrm{exp}})|=2nk$ vertices and $|E(G_{\mathrm{exp}})|=nkd$ edges, where $d$ is the degree of the very strong expander. Also recall that the BOT graph has $|V(G_{\phi})|=\Theta(nk)\cdot \max(k,d)$ vertices, $|E(G_{\phi})|=\Theta(nk)\cdot \max(k,d)$ edges and maximum degree $d_{G_{\phi}}\leq 2\max(k,d)+3$.

Recall that we remove $\eps''|E(G_{\phi})|$ ``bad'' edges from $G_{\phi}$ to get $G'_{\phi}$. This process incurs removing edges of the underlying very strong expander $G_{\mathrm{exp}}$: namely, if any of the equality gadget edges between two literal blocks is deleted, we remove the edge between the corresponding two vertices in $G_{\mathrm{exp}}$. Let $E'$ denote the set of edges in $E(G_{\mathrm{exp}})$ that are removed due to the removal of ``bad'' edges from $G_{\phi}$. Then $|E'| \leq \eps''|E(G_{\phi})|$. If $\eps''$ is chosen to satisfy the inequality
\begin{align}\label{eqn:eps-double-prime}
\eps''|E(G_{\phi})| \leq \frac{|V(G_{\mathrm{exp}})|d}{150}=\frac{2knd}{150} \quad\text{or}\quad \eps'' \leq \frac{knd}{75|E(G_{\phi})|},
\end{align}
then the fraction of removed edges from $G_{\mathrm{exp}}$ is small enough so that we can apply the vertex trimming process described in Lemma~\ref{lem:Dense} to $G_{\mathrm{exp}}$, and obtain a $\frac{3d}{4}$-strong expander, which will henceforth denoted by $G'_{\mathrm{exp}}$. That is, $G'_{\mathrm{exp}}:=\mathrm{Dense}(G_{\mathrm{exp}}, E')$. The number of vertices deleted from $G_{\mathrm{exp}}$ is, according to Lemma~\ref{lem:Dense}, at most $\frac{15}{d}\eps''|E(G_{\phi})|$. This vertex trimming process incurs edge removal from $G_{\phi}$ as well: if a vertex, say the $x_{i}^{(j)}$-block, in $G_{\mathrm{exp}}$ is deleted, then we remove all edges incident to any of the three vertices in $G_{\phi}$ (i.e. $T_{i}^{(j)}$, $F_{i}^{(j)}$ and $D_{i}^{(j)}$) that are associated with this vertex in $G_{\mathrm{exp}}$. Let $G''_{\phi}$ be the resulting graph. As the maximum degree of $G_{\phi}$ is bounded by $2\max(k,d)+3$, we remove at most 
\[
\frac{15}{d}\eps''|E(G_{\phi})| \cdot 3 \cdot (2\max(k,d)+3) \leq C_{1} \eps''|E(G_{\phi})|,
\] 
additional edges from $G_{\phi}$, where $C_{1}$ is some constant depends only on $k$ and $d$.

To summarize, starting from the graph $G_{\phi}$, we remove in total at most $\eps''|E(G_{\phi})| + C_{1} \eps''|E(G_{\phi})| \leq (1+ C_{1})\eps''|E(G_{\phi})|$ edges to arrive at the graph $G''_{\phi}$. If we set 
\[
\eps''=\min\left\{\frac{\eps'}{1+ C_{1}}, \frac{knd}{75|E(G_{\phi})|}\right\} 
\]
where $\eps'$ is the constant in Lemma~\ref{lem:BOT-graph}, then $\eps''=\Theta(\eps')$ as both $d$ and $k$ are constants and $|E(G_{\phi})|=\Theta(nk)\cdot \max(k,d)$. 

By choosing $\eps''$ this way, on the one hand, $G''_{\phi}$ is obtained by removing at most $\eps'$-fraction of the edges in $G_{\phi}$. On the other hand, our choice of $\eps''$ ensures that the inequality in~\eqref{eqn:eps-double-prime} is satisfied, hence the vertex trimming process in Lemma~\ref{lem:Dense} can be applied to the very strong expander $G_{\mathrm{exp}}$. Consequently, $G'_{\mathrm{exp}}$ is a $\frac{3d}{4}$-strong expander, and by Lemma~\ref{lem:strong_expander_diameter}, the distance~\footnote{By constructing a path that first reaches the closest $\textsc{True}$, $\textsc{False}$ or $\textsc{Dummy}$ node in a variable block, then uses the shortest path in the strong expander $G'_{\mathrm{exp}}$ between the two corresponding variable blocks; c.f.\ the BOT graph construction in Section~\ref{sec:BOT-graph-construction}.} between any two vertices in $G''_{\phi}$ is at most $\frac{2}{3}\log(2nk)+14+6 \leq C_{2}\log{n}$, where $C_2$ is some absolute constant (e.g.\ we may take $C_{2}=1$ when $n$ is large enough). This latter property of the graph $G''_{\phi}$ makes it possible to consistently $3$-color all the $\textsc{True}$ nodes, $\textsc{False}$ nodes and $\textsc{Dummy}$ nodes in $G''_{\phi}$ as we demonstrate next.

\begin{lemma}\label{lem:decoding_color_nodes}
If the constant $C$ in Theorem~\ref{thm:soundness} is small enough, then $G''_{\phi}$ is $3$-colorable; that is, there is an (efficient) ``decoding'' algorithm which decodes the embedding vectors of all $\textsc{True}$ nodes (resp. $\textsc{False}$ nodes and $\textsc{Dummy}$ nodes) in $G''_{\phi}$ into the same color.
\end{lemma} 
\begin{proof}
The decoding algorithm works as follows. Pick any survived literal block in $G''_{\phi}$ (as $\eps'$ is small enough, such a survived literal block always exists). Set the vector of the $\textsc{True}$ node in the literal block to be $(0,0,1)$. Set the direction in $S^2$ that is closest to the $\textsc{False}$ vector and orthogonal to direction $(0,0,1)$ to be $(1,0,0)$. Since the $\textsc{True}$, $\textsc{False}$ and $\textsc{Dummy}$ nodes in the literal block are interconnected by good edges, the angle between $\textsc{False}$ vector and $(1,0,0)$ together with the angle between $\textsc{Dummy}$ vector and $(0,1,0)$ are both at most $\sqrt{\frac{C\pi}{2\eps''}}\cdot \frac{1}{\log{n}}$.

By the facts that the distance between any pair of $\textsc{True}$ nodes in $G''_{\phi}$ is $O(\log{n})$,
equality gadgets are robust (Lemma~\ref{lem:equality_robust}), and angles on the points in $S^2$ 
are a metric space (in particular, the angle metric satisfies the triangle inequality), 
we may choose a constant $C$ in Theorem~\ref{thm:soundness} small enough so that the angle between any
pair of $\textsc{True}$ vectors is at most $\delta_p:=\pi/300$. Specifically, the constant $C$ needs to satisfy that 
\[
\sqrt{\frac{C\pi}{2\eps''}}\cdot \frac{1}{\log{n}} \cdot C_{2}\log{n} \leq \frac{\pi}{300} \Longleftrightarrow C \leq \frac{2\pi \eps''}{90,000 C_{2}^2}.
\]
Denote\footnote{Recall that, by our convention,
$S_{\mathrm{True}}$ is actually the union of two disconnected regions: $\{v\in S^2: \theta(v, (0,0,1)) \leq \delta_p\} \cup \{v\in S^2: \theta(v, (0,0,-1)) \leq \delta_p\}$;
that is, $S_{\mathrm{True}}$ is a small cap centered around the ``north pole'' plus 
a small cap centered around the ``south pole''.
Similar conventions hold for $S_{\mathrm{False}}$ and $S_{\mathrm{Dummy}}$ as well.} 
\[
S_{\mathrm{True}}=\{v\in S^2: \theta(v, (0,0,1)) \leq \delta_p\}.
\]
Then all $\textsc{True}$ vectors in $G''_{\phi}$ fall into the region $S_{\mathrm{True}}$.
Similarly define $S_{\mathrm{False}}:=\{v\in S^2: \theta(v, (1,0,0)) \leq \delta_p\}$ and 
$S_{\mathrm{Dummy}}:= \{v\in S^2: \theta(v, (0,1,0)) \leq \delta_p\}$. 
Then a similar argument shows that
all $\textsc{False}$ (resp. $\textsc{Dummy}$) vectors in $G''_{\phi}$ 
fall into the region $S_{\mathrm{False}}$ (resp. $S_{\mathrm{Dummy}}$).
Finally, since $S_{\mathrm{True}}$, $S_{\mathrm{False}}$ and $S_{\mathrm{Dummy}}$ are clearly disjoint,
we therefore unambiguously decode all the $\textsc{True}$ nodes, $\textsc{False}$ nodes and $\textsc{Dummy}$ nodes
in $G''_{\phi}$.
\end{proof}
To finish the proof, note that Lemma~\ref{lem:decoding_color_nodes} demonstrates that we can decode a consistent $3$-coloring for all the $\textsc{True}$, $\textsc{False}$ and $\textsc{Dummy}$ nodes. These globally consistent assignments allow us to further ``locally'' color the remaining nodes in $G''_{\phi}$. 

First, for each $1\leq i\leq n$, we decode a consistent truth assignment --- which, for BOT graphs and even BOT subgraphs, implies a valid $3$-coloring --- for all $k$ copies of $x_{i}^{(j)}$ and $\bar{x}_{i}^{(j)}$ as follows. Since the angle between every pair of $\textsc{True}$ node vectors is at most $\delta_p:=\pi/300$, the conditions of Lemma~\ref{lem:clause_robust} are satisfied. Consequently, by Lemma~\ref{lem:clause_robust}, at least one of the three literals in each clause lies within an angular distance of $\pi/15+\pi/300=21\pi/300$ from the direction $(0,0,1)$, ensuring every clause is satisfied. 

Furthermore, because all surviving copies of the same literal are interconnected by an equality gadget, each of these (at most $k$) copies is within $21\pi/300+o(1)<\pi/12$ of $(0,0,1)$. Conversely, since each literal is connected to its negation by a ``good'' edge, its negation is at least $\pi/2-\pi/12-o(1)>\pi/3$ from $(0,0,1)$. We can therefore unambiguously decode the truth assignment for each variable:
\begin{itemize}
\item If all surviving copies of $x_i$ are within $\pi/12$ of $(0,0,1)$, assign $x_i$ to $\textsc{True}$;
\item If all surviving copies of $\bar{x}_i$ are within $\pi/12$ of $(1,0,0)$, assign $x_i$ to $\textsc{False}$;
\item Otherwise, $x_i$ is a free variable and may be assigned a truth value arbitrarily.
\end{itemize}

Next, since every clause is satisfied, it can be easily verified that all auxiliary nodes in the clause gadgets (as shown in Fig.~\ref{fig:clause_gadget}) can be $3$-colored. Finally, as no equality gadget in $G''_{\phi}$ is violated, the two auxiliary nodes in the equality gadgets (as shown in Fig.~\ref{fig:equality_gadget}) can always be properly colored as $\{\textsc{True}, \textsc{Dummy}\}$ or $\{\textsc{False}, \textsc{Dummy}\}$.

In summary, we can properly $3$-color all nodes in $G''_{\phi}$. Because $G''_{\phi}$ is both $\epsilon'$-close to $G_{\phi}$ and $3$-colorable, this completes the proof of Theorem~\ref{thm:soundness}.
\end{proof}

\section{The approximation algorithm}\label{sec:algorithm}
\newcommand{\algname}{\texttt{DiagonalKernel} }
\newcommand{\dom}{N}
Let \algname be the algorithm that, 
given a collection of subsets $X_1,X_2,\ldots,X_m \subseteq [N]$, 
returns the $N\times N$ diagonal matrix $K$ such that 
$K_{ii} = \frac{1}{m} \cdot |\{j : X_j \ni i\}|$ for all $i \in [N]$. 
Clearly, \algname runs in at most $O(mN)$ time. 

\begin{theorem}\label{thm:approxalg}
Let $\ell$ be the log likelihood of the DPP defined by the marginal kernel output by \algname 
on examples $X_1, X_2, \ldots, X_m$, and let $\ell^*$ be the optimal log likelihood achieved by  
a DPP kernel on the same training set.  
Then $\ell \leq \left(1+\frac{1}{\log(\frac{m}{a_{\max}})}\right) \ell^*$, 
where $a_{\max}:=\max_{i \in [N]}|\{j : X_j \ni i\}|$ is the maximum element frequency in examples. 
It follows immediately that $\ell \leq (1 + \frac{1+o(1)}{\log{N}}) \ell^*$, 
when $a_{\max}=O(m)/N$.
As an unconditional weaker bound, we always have $\ell \leq (1+(1+\frac{1}{m-1})\log m)\ell^*$.
\end{theorem}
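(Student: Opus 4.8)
The plan is to reduce the whole statement to the one-variable behaviour of the binary entropy. Set $a_i := |\{j : X_j \ni i\}|$ and $p_i := a_i/m$, so the kernel $K$ returned by \algname is $\mathrm{diag}(p_1,\dots,p_N)$, and since a diagonal DPP factorizes over coordinates, $\Pr_{\calP_K}[\bY = X_t] = \prod_{i\in X_t}p_i\,\prod_{i\notin X_t}(1-p_i)$. Substituting this into the definition of $\ell$ and regrouping the double sum over $(t,i)$ by the index $i$ gives the exact identity
\[
\ell \;=\; \sum_{i=1}^{N} H(p_i), \qquad H(p):= -p\log p-(1-p)\log(1-p),
\]
with the convention $H(0)=H(1)=0$. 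For the matching lower bound on $\ell^*$, I would use Lemma~\ref{lem:BMRU_diagonals} to pick an optimal kernel $K^*$ with $K^*_{ii}=p_i$, and then bound, for each sample, $\Pr_{\calP_{K^*}}[\bY=X_t]\le\Pr_{\calP_{K^*}}[X_t\subseteq\bY]=\det(K^*_{X_t})\le\prod_{i\in X_t}p_i$ by Hadamard's inequality (Lemma~\ref{lem:det-leq-prod-main-diagonal}). Taking $-\log$, averaging over $t$, and regrouping by $i$ yields $\ell^*\ge\sum_i(-p_i\log p_i)\ge 0$.

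Next, write $H(p)=(-p\log p)(1+f(p))$ with $f(p):=\dfrac{-(1-p)\log(1-p)}{-p\log p}\ge0$ on $(0,1)$, so the theorem follows termwise once $f$ is controlled on the range of the $p_i$. I would establish two elementary facts, each a one-line consequence of $\log t\le t-1$: (a) $f(p)\le\frac1{\log(1/p)}$ on $(0,1)$, which is exactly $-(1-p)\log(1-p)\le p$, i.e.\ $\log\frac1{1-p}\le\frac p{1-p}$; and (b) $f$ is non-decreasing on $(0,1)$ --- differentiating $\log f$, the condition $(\log f)'(p)>0$ rearranges to $\frac{1-p}{-\log p}+\frac p{-\log(1-p)}>1$, and each summand is dispatched by $-\log x\le\frac{1-x}{x}$ (so $\frac{1-p}{-\log p}\ge p$, and symmetrically $\frac p{-\log(1-p)}\ge 1-p$). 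We may also assume $a_{max}<m$ (discarding an element occurring in all $m$ samples multiplies every probability by $1$ and so changes neither $\ell$ nor $\ell^*$); then every $p_i\in(0,1)$ satisfies $p_i\le\frac{m-1}m$ and $a_{max}\le m-1$.

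Assembling termwise from $\ell=\sum_i(-p_i\log p_i)(1+f(p_i))$ and $\ell^*\ge\sum_i(-p_i\log p_i)$, and noting indices with $p_i\in\{0,1\}$ contribute $0$ to both sides: for the first bound, (a) and monotonicity of $p\mapsto1/\log(1/p)$ give $f(p_i)\le\frac1{\log(1/p_i)}=\frac1{\log(m/a_i)}\le\frac1{\log(m/a_{max})}$, hence $\ell\le\bigl(1+\frac1{\log(m/a_{max})}\bigr)\ell^*$; the second bound is immediate since $a_{max}=O(m/N)$ forces $\log(m/a_{max})\ge\log N-O(1)=(1-o(1))\log N$; and for the unconditional bound, (b) gives $f(p_i)\le f\!\bigl(\tfrac{m-1}m\bigr)=\dfrac{\log m}{(m-1)\log\frac m{m-1}}\le\dfrac{m\log m}{m-1}=\bigl(1+\tfrac1{m-1}\bigr)\log m$, using $\log\frac m{m-1}=-\log(1-\tfrac1m)\ge\tfrac1m$, so $\ell\le\bigl(1+(1+\tfrac1{m-1})\log m\bigr)\ell^*$.

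I do not expect a real obstacle: the analytic content is confined to facts (a) and (b), and the monotonicity (b) is needed only for the sharp constant in the unconditional bound. (A cheaper route there: the same Hadamard argument applied to the complement DPP gives $\Pr_{\calP_{K^*}}[\bY=X_t]\le\prod_{i\notin X_t}(1-p_i)$, hence also $\ell^*\ge\sum_i(-(1-p_i)\log(1-p_i))$; averaging the two lower bounds yields $\ell^*\ge\frac12\sum_iH(p_i)=\frac12\ell$, and $2\le 1+(1+\tfrac1{m-1})\log m$ for every $m\ge2$.) The only points to treat carefully are the indices with $p_i\in\{0,1\}$, which drop out of every comparison, and the degenerate instance where all samples coincide, where $\ell=\ell^*=0$ and the inequality holds trivially.
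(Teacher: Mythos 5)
Your proof is correct and follows essentially the same route as the paper's: compute $\ell$ exactly for the diagonal kernel, lower-bound $\ell^*$ via Lemma~\ref{lem:BMRU_diagonals} plus Hadamard, and reduce to the single-variable function $f(p)=\frac{(1-p)\log(1-p)}{p\log p}$, bounded via $f(p)\le 1/\log(1/p)$ and monotonicity of $f$. You also supply the derivative calculation for the monotonicity of $f$ that the paper merely asserts ``is easy to check,'' and the complement-DPP argument showing $\ell^*\ge\ell/2$ is a genuinely cleaner route to the unconditional bound that sidesteps the monotonicity of $f$ altogether.
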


\begin{proof}
Let $\ell$ and $\ell^{*}$ be the log likelihoods of \algname kernel and optimal kernel, respectively. 
In the following we upper bound the ratio $\frac{\ell}{\ell^*}$.

For each $i \in [\dom]$, define $a_i := |\{j : X_j \ni i\}|$ to be the frequency of element $i$ in the examples.  
Then $K$ is a diagonal matrix with $K_{ii} = \frac{a_i}{m}$ for each $i \in [\dom]$. 
If $\bX$ is a random variable distributed according to this diagonal DPP, 
then $\bX$ is distributed as a product distribution of $N$ independent random variables, 
with $\Pr[i \in \bX] = K_{ii}$ for every $i \in [\dom]$.  
It follows that, for every subset example $X_j \subseteq [N]$,

\begin{align*}
\Pr_{\bX \sim \calP_{K}}[\bX = X_j]& = \prod_{i \in X_j} K_{ii} \prod_{i \notin X_j} (1-K_{ii}) 
 = \prod_{i \in X_j} \frac{a_i}{m} \prod_{i \notin X_j} \left(1-\frac{a_i}{m}\right).
\end{align*}

Since there are exactly $a_i$ sets that contain element $i$ and $m-a_i$ that do not contain element $i$, 
the log likelihood of the marginal DPP kernel $K$ is
\begin{align*}
\ell = -\log\left[\prod_{j=1}^m 
 \left(\prod_{i \in X_j} \frac{a_i}{m} \prod_{i \notin X_j} (1-\frac{a_i}{m})\right) \right]
= -\log\left[\prod_{i=1}^\dom \left(\frac{a_i}{m}\right)^{a_i} \left(1-\frac{a_i}{m}\right)^{m-a_i}\right]. 
\end{align*}

By Lemma~\ref{lem:BMRU_diagonals}, we can assume that the diagonal entries of a kernel $K^*$ 
achieving log likelihood $\ell^*$ match those of the diagonal entries of $K$.  
Let $\bX^*$ be a random variable distributed according to such a marginal kernel, 
then by Lemma~\ref{lem:det-leq-prod-main-diagonal} (Hadamard's Inequality) we have

\begin{align*}
\Pr_{\bX^* \sim \calP_{K^*}}[\bX^* = X_j] & \leq \Pr_{\bX^* \sim \calP_{K^*}}[X_j \subseteq \bX^*] 
= \det(K^*_{X_j}) \\ 
& \leq \prod_{i \in X_j} K^*_{ii} 
= \prod_{i \in X_j} K_{ii} = \prod_{i \in X_j} \frac{a_i}{m}.
\end{align*}
It follows that
\[
\ell^* \geq -\log\left(\prod_{i=1}^\dom \left(\frac{a_i}{m}\right)^{a_i}\right).
\]
We now observe that
\allowdisplaybreaks
\begin{align*}
\frac{\ell}{\ell^*} 
& \leq \frac{\log\left(\prod_{i=1}^\dom \left(\frac{a_i}{m}\right)^{a_i} \left(1-\frac{a_i}{m}\right)^{m-a_i}\right)}
{\log\left(\prod_{i=1}^\dom \left(\frac{a_i}{m}\right)^{a_i}\right)}
= 1 + \dfrac{\log (\prod_{i=1}^\dom \left(1-\frac{a_i}{m})^{m-a_i}\right)}
{\log\left(\prod_{i=1}^N \left(\frac{a_i}{m}\right)^{a_i}\right)} \\
&= 1+\dfrac{\log \left(\prod_{i=1}^\dom (1-\frac{a_i}{m})^{1-\frac{a_i}{m}}\right)}
  {\log\left(\prod_{i=1}^N \left(\frac{a_i}{m}\right)^{\frac{a_i}{m}}\right)} \\
&= 1+\dfrac{\sum_{i=1}^\dom \log \left((1-\frac{a_i}{m})^{1-\frac{a_i}{m}}\right)}
    {\sum_{i=1}^\dom \log\left( \left(\frac{a_i}{m}\right)^{\frac{a_i}{m}}\right)}. 
\end{align*}
For $x\in (0,1)$, let $g(x):=-\log \left((1-x)^{1-x}\right)$ and $h(x):=-\log\left( x^x \right)$, and define
\[
f(x) := \frac{g(x)}{h(x)}=\dfrac{\log \left((1-x)^{1-x}\right)}{\log\left( x^x \right)} 
= \dfrac{(1-x) \log (1-x)}{x \log x}.
\]
Then it is easy to check that, for every $0<x<1$, both $g(x)$ and $h(x)$ and hence $f(x)$ are positive;
moreover, $f(x)$ is an increasing function in $x$ . Therefore,
\begin{align}\label{ineq:algo-ratio}
\frac{\ell}{\ell^*} 
&\leq 1+\dfrac{\sum_{i=1}^\dom g(\frac{a_i}{m})}{\sum_{i=1}^\dom h(\frac{a_i}{m})} 
   \leq 1+\max_{i \in [\dom]} \dfrac{g(\frac{a_i}{m})}{h(\frac{a_i}{m})} \nonumber\\
&= 1+\max_{i \in [\dom]} \dfrac{\log \left((1-\frac{a_i}{m})^{1-\frac{a_i}{m}}\right)}
  {\log\left( \left(\frac{a_i}{m}\right)^{\frac{a_i}{m}}\right)} \nonumber\\
&= 1+ f(\frac{a_{\max}}{m}).
\end{align} 
Using the inequality that $(1+x)\log{(1+x)}\geq x$ for all $x>-1$, we 
get that $f(x) \leq - \frac{1}{\log x}$ for all $0 < x < 1$.  
Thus, when $\frac{a_{\max}}{m} \leq \frac{C}{\dom}$, 
the kernel output by \texttt{DiagonalKernel} satisfies 
$\ell \leq (1 - \frac{1}{\log (C/\dom)}) \ell^* = (1 + \frac{1}{\log (\dom/C)}) \ell^*$. 

When the condition $\frac{a_{\max}}{m} \leq \frac{C}{\dom}$ is not satisfied, 
in order to obtain an unconditional upper bound on the log likelihood of \algname, 
observe that without loss of generality, we may consider only the cases when $a_{\max}\leq m-1$. 
This is because elements occur in all the training subsets will 
be assigned probability~$1$ in a maximum likelihood DPP, hence have $1$ at the diagonal entry 
and $0$'s at all other off-diagonal entries. This means that such elements can be ``factored out'' from the 
marginal kernel. Thus, plugging $\frac{a_{\max}}{m}=\frac{m-1}{m}$ into \eqref{ineq:algo-ratio}, we obtain
\[
\frac{\ell}{\ell^*}\leq 1+\frac{\frac{1}{m}\log\frac{1}{m}}{(1-\frac{1}{m})\log(1-\frac{1}{m})}
=1+\frac{\log{m}}{(m-1)\log(1+\frac{1}{m-1})} \leq 1+(1+\frac{1}{m-1})\log{m},
\]
where in the last step we use again the inequality
$(1+x)\log(1+x)\geq x$ for all $x>-1$ (and setting $x=\frac{1}{m-1}$).
\end{proof}
Thus this simple algorithm does surprisingly well, unless there are some high frequent elements that appear 
in a non-trivial fraction of the training data, which is often not the case in practice.
Further, it's worth noting that the training set constructed in our hardness of learning proof
also satisfies that $\frac{a_{\max}}{m} \leq \frac{C}{\dom}$ for some constant $C$.

\section{Discussion and open problems}\label{sec:discussion}

In this work, we establish that it is NP-hard to obtain a $1-O(\frac{1}{\log^9 N})$-approximation 
to the maximum log likelihood of DPPs. 
We also demonstrate a simple polynomial-time algorithm that achieves $\frac{1}{(1+o(1))\log{m}}$-approximation. 
One immediate open problem is to close this large gap. 
A natural and plausible approach is to prove the cardinality-rank conjecture or at least
to improve the bound in Theorem~\ref{thm:dim3}.
Note that our hardness result does not rule out efficient learning with some constant factor of approximation: 
it is still possible that there is a polynomial-time algorithm that obtains a DPP kernel with a 99\%-approximation 
to the maximum log likelihood. As observed earlier, we cannot preclude constant-factor approximations 
by a better analysis of the constructed hard instance: the approximation algorithm shows that
our hardness result is tight up to a polynomial factor for the type of subset collections
employed in the proof. 
Therefore any stronger hardness proof would require constructing a collection of subsets 
in which some element appears in a non-trivial fraction of the subsets.

Our investigation just takes a first stab at understanding the computational landscape of learning DPPs.
In particular, our knowledge for the complexity of learning DPPs when the data set is indeed generated
by an unknown DPP is still very limited:
Can one design efficient algorithms for such a task? 
Can the DPP kernel be learned with arbitrary accuracy? And if not, what is the best approximation factor
can such an algorithm achieve? Note that the data here is no longer a worst-case {\em data set}, 
but only sampled from a worst-case {\em DPP}. The underlying model is thus a semi-random one, 
and it seems challenging to extend NP-completeness hardness to such settings; 
some kind of {\em average-case hardness} is likely to be the best one can hope for. 
This is essentially the approach that \cite{BMRU17} had envisioned when examining 
the optimization landscape of the likelihood function for DPP kernels. 
The convergence of the empirical log-likelihood function to the true log-likelihood function 
only holds with high probability, and so in particular doesn't carry over to the kinds of worst-case data sets 
produced by our reduction. Thus, their conjectured property may still hold, 
and may be a route to an efficient algorithm in this setting. 
On the other hand, ``realizability'' is probably too strong of an assumption for practical purposes; 
DPPs are generally used to model processes featuring negative association, 
and it often seems implausible that the data actually follows a DPP distribution. 
Therefore, finding more appropriate assumptions is yet to be explored, and an algorithm for a realizable setting 
would be a natural first step along this direction.

On the other side of the coin, it is entirely conceivable that such efficient algorithms may not exist at all.
One may view our main result as proving the hardness of ``agnostic-learning'' DPPs, while here the
task would be proving hardness of ``PAC-learning'' DPPs. 
Presumably this is more difficult, as PAC-learning is in general easier than
agnostic learning, it is thus harder to obtain lower bounds in the former setting.
In particular, the usual approach of proving PAC-learning lower bounds involves 
uniform distributions over some prescribed collections of subsets.
Such distributions are within the scope of PAC-learning model as it allows arbitrary distributions. 
By contrast, DPPs are normally unable to generate the uniform distribution over an arbitrary collection of subsets. 
Indeed, we believe that the data set we construct would be atypical for all DPPs. 
This is why contrary to the usual representation-specific hardness theorems in PAC-learning, 
we believe that an average-case hardness assumption will be necessary here.

\section*{Acknowledgements}
We are indebted to the anonymous reviewers for carefully reading the manuscript. Their keen eyes caught numerous errors and omissions, and their insightful comments and suggestions greatly improved the clarity of presentation and overall quality of this work. B.J. was partially supported by NSF awards IIS-1908287, IIS-1939677, and CCF-1718380. E.G. was partially supported by NSF CCF-1910659 and NSF CCF-1910411.
N.X. was partially supported by U.S. Army Research Office (ARO) under award number W911NF1910362.

\bibliographystyle{plain}

\end{document}